\title{Quantum Probe Tomography}
\author{
Sitan Chen\thanks{Joint first author} \\
Harvard University \\
\href{mailto:sitan@seas.harvard.edu}{\texttt{sitan@seas.harvard.edu}}
\and
Jordan Cotler$^*$ \\
Harvard University \\
\href{mailto:jcotler@fas.harvard.edu}{\texttt{jcotler@fas.harvard.edu}}
\and
Hsin-Yuan Huang \\
Caltech \\
\href{mailto:hsinyuan@caltech.edu}{\texttt{hsinyuan@caltech.edu}}
}
\newcommand{\params}{{\boldsymbol{\lambda}}}
\newcommand{\vDelta}{\boldsymbol{\Delta}}
\newcommand{\vdelta}{\boldsymbol{\delta}}
\newcommand{\vphi}{{\boldsymbol{\phi}}}
\newcommand{\proj}{\mathrm{proj}}
\newcommand{\Hess}{\operatorname{\textsf{H}}}
\newcommand{\maxj}{{\overline{j}}}
\newcommand{\maxk}{{\overline{k}}}
\newcommand{\Od}{\mathcal{O}_{\mathfrak{d}}}
\newcommand{\mumax}{R}
\newcommand{\lmax}{\Lambda}
\newcommand{\bx}{\mathbf{x}}
\newcommand{\bc}{\mathbf{c}}
\newcommand{\by}{\mathbf{y}}
\newcommand{\hideg}[3]{\mathsf{D}^{#1:[#2,#3]}_{\mu,C}(\beta)}
\newcommand{\vmu}{\boldsymbol{\mu}}
\newcommand{\smoothness}{\upsilon}
\newcommand{\smoothing}{\text{\textvarsigma}}
\newcommand{\curvature}{\underline{\sigma}}
\renewcommand{\norm}[1]{\|#1\|}
\renewcommand{\top}{\intercal}
\newcommand{\obsvalue}[4]{A_{#3,#4}(#2;#1)}
\renewcommand{\i}{\mathrm{i}}
\newcommand{\newdelt}[4]{\Delta^{(#1)}_{#3,#4}(#2)}
\newcommand{\hatdelt}[4]{\widehat{\Delta}^{(#1)}_{#3,#4}(#2)}
\newcommand{\jac}{\mathrm{jac}}
\renewcommand{\Var}{\mathrm{Var}}
\newcommand{\zeroset}{\mathbb{V}}
\renewcommand{\C}{\mathbb{C}}
\newcommand{\probeindex}{\mathbf{0}}
\newcommand\tsup[2][2]{%
 \def\useanchorwidth{T}%
  \ifnum#1>1%
    \stackon[-.5pt]{\tsup[\numexpr#1-1\relax]{#2}}{\scriptscriptstyle\sim}%
  \else%
    \stackon[.5pt]{#2}{\scriptscriptstyle\sim}%
  \fi%
}
\begin{document}
\pagestyle{empty}
{
  \renewcommand{\thispagestyle}[1]{}
  \maketitle

\begin{abstract}
Characterizing quantum many-body systems is a fundamental problem across physics, chemistry, and materials science. While significant progress has been made, many existing Hamiltonian learning protocols demand digital quantum control over the entire system, creating a disconnect from many real-world settings that provide access only through small, local probes. Motivated by this, we introduce and formalize the problem of \emph{quantum probe tomography}, where one seeks to learn the parameters of a many-body Hamiltonian using a single local probe access to a small subsystem of a many-body thermal state undergoing time evolution. We address the identifiability problem of determining which Hamiltonians can be distinguished from probe data through a new combination of tools from algebraic geometry and smoothed analysis. Using this approach, we prove that generic Hamiltonians in various physically natural families are identifiable up to simple, unavoidable structural symmetries. Building on these insights, we design the first efficient end-to-end algorithm for probe tomography that learns Hamiltonian parameters to accuracy $\varepsilon$, with query complexity scaling polynomially in $1/\varepsilon$ and classical post-processing time scaling polylogarithmically in $1/\varepsilon$. In particular, we demonstrate that translation- and rotation-invariant nearest-neighbor Hamiltonians on square lattices in one, two, and three dimensions can be efficiently reconstructed from single-site probes of the Gibbs state, up to inversion symmetry about the probed site. Our results demonstrate that robust Hamiltonian learning remains achievable even under severely constrained experimental access.
\end{abstract}

}

\clearpage
\pagestyle{plain}
\pagenumbering{arabic}

\tableofcontents

\newpage

\section{Introduction}
\label{sec:intro}

Understanding the properties of complex quantum many-body systems remains one of the central challenges in condensed matter physics, quantum chemistry, materials science, and quantum information science. From quantum sensing and metrology \cite{de2005quantum,valencia2004distant,leibfried2004toward,bollinger1996optimal,lee2002quantum,mckenzie2002experimental,holland1993interferometric,wineland1992spin,caves1981quantum} to quantum device engineering \cite{boulant2003,innocenti2020,ben2020,shulman2014,sheldon2016,sundaresan2020} and quantum many-body physics \cite{wiebe2014a,wiebe2014b,verdon2019,burgarth2017, wang2017, kwon2020, wang2020, cotler2020quantum, huang2020predicting}, the ability to efficiently extract information about unknown quantum Hamiltonians is crucial for advancing our fundamental understanding and technological capabilities. This challenge has motivated extensive research in Hamiltonian learning \cite{li2020hamiltonian,che2021learning,yu2022,hangleiter2021,FrancaMarkovichEtAl2022efficient,ZubidaYitzhakiEtAl2021optimal,BaireyAradEtAl2019learning,GranadeFerrieWiebeCory2012robust,gu2022practical,wilde2022learnH,KrastanovZhouEtAl2019stochastic, huang2022learning, gu2024practical, bakshi2024structure, haah2024learning, bakshi2024learning}, where the goal is to reconstruct an unknown Hamiltonian $H$ from experimental observations of a physical system governed by $H$.

Recent theoretical advances in Hamiltonian learning have achieved remarkable progress, including algorithms that attain the Heisenberg-limited scaling of $\mathcal{O}(\epsilon^{-1})$ total evolution time for learning many-body Hamiltonians to precision $\epsilon$ \cite{huang2022learning, gu2024practical, bakshi2024structure} and polynomial-time algorithms for learning $H$ from its Gibbs states \cite{haah2024learning, bakshi2024learning, chen2025learning}. These algorithms provide fundamental insights into the theoretical limits of quantum system characterization. However, they rely on experimental capabilities that may not be readily available in practice: the ability to measure and perform quantum gate operations on all $n$ qubits, and the capability to seamlessly transition between digital quantum circuit operations and analog evolution under the unknown Hamiltonian. Implementing these digital quantum circuit operations requires having already characterized and calibrated the quantum device to high precision. This creates a logical circularity, as quantum device characterization is precisely what we seek to accomplish.

The gap between theoretical capabilities and experimental reality is particularly pronounced when studying naturally occurring quantum many-body systems. Consider characterizing a novel quantum material, elucidating molecular dynamics, or probing correlations in ultracold atoms. In these settings, experimentalists typically lack individual control over microscopic degrees of freedom and must work with local probes that access only small regions of the system. The assumption of flexible, system-wide control inherent in existing algorithms renders them inapplicable to many of the most scientifically important scenarios.

This disconnect motivates the following fundamental question towards bridging the gap between our theoretical models of learning and experimental realities:
\begin{center}
\emph{How can we learn about a large quantum many-body system using only a small quantum probe?}    
\end{center}
This is a problem of both experimental and theoretical importance, and has been studied in a number of works~\cite{burgarth2009coupling, burgarth2009indirect, di2009hamiltonian, zhang2014quantum, zhang2015identification, sone2017hamiltonian, sone2017exact, che2021learning, schuster2023learning}.  Indeed, consider the experimental scenario where an experimentalist has access to a small, highly controllable quantum probe that can be brought into interaction with only a tiny fraction of a much larger quantum system.
By controlling the quantum probe and the probe-system interactions, the experimentalist could effectively access a small region of the large system.
This setting captures the essence of many modern quantum experiments: a scanning probe microscope tip interacting with a material surface~\cite{gring2012relaxation, cheuk2016observation, gross2021quantum}, a single trapped ion brought near an optical lattice~\cite{kotler2011single, bakr2009quantum}, a nitrogen-vacancy center in diamond sensing its local magnetic environment~\cite{taylor2008high, maze2008nanoscale, balasubramanian2008nanoscale, maletinsky2012robust, boss2016one}, or a controllable ancilla qubit that can couple to atoms in an optical lattice~\cite{micheli2004single, weitenberg2011single, fukuhara2013quantum, elliott2016nondestructive}.

\begin{figure}[t]
    \centering
    \includegraphics[width=0.95\linewidth]{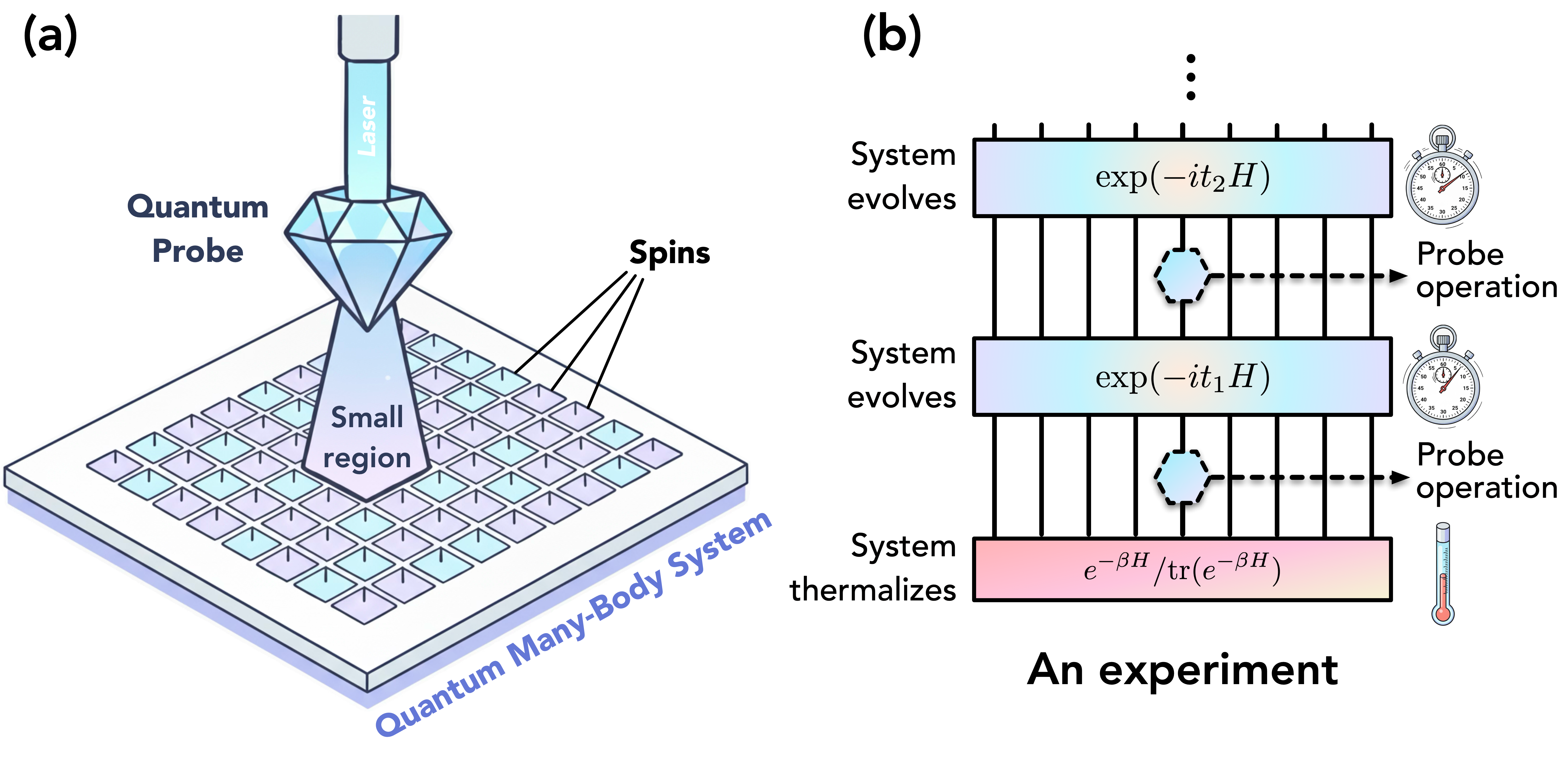}
    \caption{Cartoon depiction of \emph{quantum probe tomography}. The goal is to learn about a quantum many-body system using a small quantum probe. \textbf{(a)} An illustration of an experimental platform for learning a quantum many-body system (depicted as a two-dimensional array of spins) using a small quantum probe (depicted as a diamond). The probe can only observe and affect a small region of the much larger quantum many-body system. \textbf{(b)} In the model of computation, we consider an experiment to consist of three types of steps: 1. the many-body system thermalizes to an inverse temperature $\beta$ resulting in the thermal state $\frac{e^{-\beta H}}{\tr(e^{-\beta H})}$, 2. the quantum probe performs a quantum channel, a quantum measurement, or a combination of both (quantum instrument) to a small region of the spins, 3. the system evolves under the native Hamiltonian $H$. }
    \label{fig:QuantumProbe}
\end{figure}

In this work, we synthesize previous approaches and formalize this conceptual question into a learning-theoretic problem that we term \emph{quantum probe tomography}; see Figure~\ref{fig:QuantumProbe} for an illustration. Here, an experimentalist seeks to learn a many-body quantum Hamiltonian~$H$ using only severely constrained access to the many-body system via a small quantum probe (or a collection of probes). For theoretical tractability, we assume that high controllability of the probe and carefully engineered probe-system interactions enable precise control over a single site within the many-body system. This abstraction captures the essential limitation of constrained access while avoiding the complexities of modeling detailed probe-system interactions. As a simple example of the problem class, we consider the experimental platform to offer the following operations:
\begin{itemize}
    \item The system thermalizes to the Gibbs state $\rho_\beta \propto e^{-\beta H}$ for tunable inverse temperature $\beta$.
    \item The system evolves under the Hamiltonian dynamics $e^{-\i Ht}$ for controllable time intervals $t$.
    \item The experimentalist can apply only single-qubit operations to the central site of the system.
\end{itemize}
Of course there are many generalizations of the above that would still constitute quantum probe tomography, such as if the probe touches more than one degree of freedom, if there are multiple probes at different locations, if the state is pure instead of a Gibbs state, etc.  We will focus on the above problem setting in this paper for concreteness.

\emph{A priori}, it is not clear which aspects of the Hamiltonian $H$ can be extracted using the severely constrained single-qubit access to the many-body system described above. The fundamental challenge lies in the \emph{identifiability problem} (see e.g.~\cite{sone2017hamiltonian}), which asks whether the limited local information accessible through single-qubit operations can reveal the many-body interactions throughout the lattice. This challenge is particularly acute because local measurements can be insufficient to distinguish between many continuous or discrete sets of distinct Hamiltonians if they produce identical statistics at the probe site.

In this work, we resolve the identifiability problem in generic settings by combining tools from algebraic geometry with smoothed analysis techniques. Our approach yields the first efficient end-to-end learning algorithm capable of Hamiltonian reconstruction using only single-site probe access. For the concrete, physically-relevant case of translation- and rotation-invariant nearest-neighbor Hamiltonians in one-, two-, three-dimensional lattices, we establish the following result: given any Hamiltonian $H_{\mathrm{hard}}$ that may exhibit arbitrarily hard identifiability issues, when we consider a generic Hamiltonian $H$ drawn from any small neighborhood of $H_{\mathrm{hard}}$, we can efficiently learn $H$ to high precision, up to simple and unavoidable symmetries.

\section{Results}
\label{sec:results}

\subsection{Defining Quantum Probe Tomography}

We introduce a mathematical framework for modeling the real-world situation where an observer seeks to characterize an unknown quantum many-body system using only small local quantum probes and limited control capabilities.

\subsubsection{Computational Model}

Consider a quantum many-body system governed by an unknown Hamiltonian $H$ acting on a Hilbert space $\mathcal{H}$. The system may represent spin lattices, quantum gases, solid-state systems, or molecular complexes. Each experiment follows this structure:
\begin{enumerate}
    \item \emph{Initial State Preparation:} The state of the system thermalizes to 
    \begin{equation}
    \rho_\beta = \frac{e^{-\beta H}}{\tr[e^{-\beta H}]}
    \end{equation}
    for a tunable inverse temperature $\beta \geq 0$.
    
    \item \emph{Time-dependent Probing:} For $j$ from $1$ to $J$:
    \begin{itemize}
        \item \emph{Probe Operation:} The observer probes a small subsystem by performing a quantum instrument $\{\mathcal{E}^{(j)}_{x_j}\}_{x_j}$ on a small subset $S_j$ of the total system. This yields a classical outcome $x_j$ with probability $p(x_j) = \tr[\mathcal{E}^{(j)}_{x_j}(\rho_{j-1})]$ and evolves the state to
        \begin{equation}
        \rho'_j = \frac{\mathcal{E}^{(j)}_{x_j}(\rho_{j-1})}{\tr[\mathcal{E}^{(j)}_{x_j}(\rho_{j-1})]},
        \end{equation}
        where $\rho_0 = \rho_\beta$ is the initial state.
        \item \emph{Hamiltonian Evolution:} The observer waits for time $t_j$, causing the state to evolve under the unknown Hamiltonian $H$:
        \begin{equation}
        \rho_j = e^{-\i t_j H} \rho'_j e^{\i t_j H}.
        \end{equation}
        Then we move on to $j \leftarrow j + 1$.
    \end{itemize}
\end{enumerate}
The experimental data obtained from one experiment consists of the classical measurement outcome sequence $\mathbf{x} = (x_1, x_2, \ldots, x_J)$ from the probe operations.

We will primarily focus on the case where the observer has a single probe at the probe site indexed by $\probeindex$. This corresponds to the chosen subsystem $S_j = \{\probeindex\}$ for all $j$. Furthermore, instead of general quantum instruments, we will focus solely on performing single-qubit channels and ending with a projective measurement on the probe qubit at site $\probeindex$.

\subsubsection{Physical Rationale}

We explain the rationale behind various aspects of this mathematical model in the following.

\paragraph{Quantum Instruments for Probe Operations} The quantum instrument formalism $\{\mathcal{E}^{(j)}_{x_j}\}_{x_j}$ captures the dual nature of experimental probes: they extract classical information through measurement outcomes $x_j$ while simultaneously disturbing the quantum system, evolving it from $\rho_{j-1}$ to $\rho'_j$ in a way that depends on both the outcome and the probe's physical interaction. This mathematical description encompasses diverse scenarios including scanning probe microscopy (simultaneous measurement and sample perturbation), trapped ion probing (spin correlation detection with local rotations), and NV center magnetometry (field measurement with microwave-driven dynamics).

\paragraph{Constrained Access} The restriction to small subsystems $S_j$ reflects fundamental experimental limitations in probing quantum many-body systems, where researchers cannot individually control every microscopic degree of freedom but must work through local probes with limited spatial resolution. This constraint distinguishes quantum probe tomography from idealized quantum process tomography, which assumes arbitrary global control. Probe configurations range from single-site probing ($|S_j| = 1$) to multi-site probing ($|S_j| = k$ for small $k$), multiple simultaneous probes on disjoint subsystems ($S_j$ is not a single contiguous region), and mobile probes that can be repositioned between measurements ($S_j$ changes throughout the experiments). 

\paragraph{Interleaved Dynamics} The alternation between probe operations and free Hamiltonian evolution $e^{-it_j H}$ provides a computationally tractable approximation to realistic experimental protocols. In actual experiments, probe operations and system evolution occur simultaneously rather than sequentially as it is physically impossible to completely halt the system's intrinsic dynamics. The most physically realistic description would involve continuous evolution under a time-dependent Hamiltonian $H(t)$ that includes both the system's intrinsic dynamics $H$ and the probe's interaction, with continuously monitored weak measurements that yield infinitesimal information while causing infinitesimal state perturbations. The continuous process emerges by taking the following limit in the discrete time evolution: as the number of probe operations $J \to \infty$, the evolution times $t_j \to 0$, and each quantum instrument becomes a weak measurement with infinitesimal disturbance, the discrete alternation approaches the true continuous dynamics.

\vspace{1em}
The objectives of the learning problem could be full Hamiltonian reconstruction to precision $\epsilon$, property estimation (ground state energies, spectral gaps, correlation functions), quantum phase classification, and model selection between competing theoretical descriptions. Experimental constraints can be naturally embedded to this model through limited probe access ($|S_j|$ is small), restricted operation sets (such as Pauli-only measurements), finite temperature effects, decoherence, and bounded control over evolution times and temperatures. These constraints define the practical boundaries within which quantum probe tomography algorithms must operate.

\subsubsection{Probing Translation-Invariant Hamiltonians}

To illustrate the framework in a practical scenario, we focus on an important class of nearest-neighbor, translation-invariant, rotation-invariant Hamiltonians on $D$-dimensional square lattices. This class encompasses many physically relevant systems including quantum magnets, ultracold atomic gases in optical lattices, and fundamental condensed matter models. Such systems are described by the following Hamiltonian,
\begin{equation}
\label{E:NNH1}
H = \sum_{\langle v, v'\rangle} \sum_{\mu,\nu=1}^3 J_{\mu \nu} \,\sigma^\mu_v \sigma^\nu_{v'} + \sum_{v} \sum_{\mu=1}^3 h_\mu \sigma^\mu_v,
\end{equation}
where $v$ labels each site of the $D$-dimensional square lattice, $\langle v, v'\rangle$ denotes nearest-neighbor pairs, and $\sigma^\mu_v$ are Pauli operators ($\mu \in \{1,2,3\}$ corresponding to $X$, $Y$, $Z$) acting on site $v$. The coupling matrix $J = \{J_{\mu \nu}\}_{\mu, \nu=1,2,3}$ characterizes the nearest-neighbor interactions, while the local field vector $\mathbf{h} = (h_1, h_2, h_3)$ determines the single-site terms. Translation invariance and rotation invariance ensure that these parameters are identical across all lattice sites and bonds.

This parameterization captures essential many-body physics while remaining tractable for theoretical analysis. The coupling matrix $J$ encodes various interaction types: diagonal terms $J_{\mu \mu}$ represent Ising-type interactions along different axes, while off-diagonal terms $J_{\mu \nu}$ with $\mu \neq \nu$ generate XY-type exchanges and Dzyaloshinskii-Moriya interactions. The local fields $h_\mu$ break spin symmetries and can drive quantum phase transitions.  Despite having only $12$ uniform nearest-neighbor parameters, this translation- and lattice-rotation–invariant spin-$\frac{1}{2}$ model already realizes key magnetic phases, including ferromagnetic and N\'{e}el antiferromagnetic order~\cite{manousakis1991spin,auerbach2012interacting}, as well as canted and chiral states arising from Dzyaloshinskii–Moriya and related off-diagonal anisotropies~\cite{dzyaloshinsky1958thermodynamic,moriya1960anisotropic,shekhtman1992moriya}. Mild extensions such as bond-dependent anisotropy, geometric frustration, or further-neighbor interactions, can additionally stabilize quantum spin-liquid and topologically ordered phases~\cite{kitaev2006anyons,balents2010spin,savary2016quantum,broholm2020quantum,yan2011spin,wen2017colloquium,motrunich2005variational}. We further impose the constraint that the observer can only access a single site $S_j = \{ \probeindex \}$, which we consider to be at the center of the square lattice.

The learning objective becomes concrete: given experimental access through quantum probe tomography, estimate the coupling matrix $J$ and field vector $\mathbf{h}$ to precision $\epsilon$. This reduces the infinite-dimensional Hamiltonian learning problem to estimating a finite parameter vector in $\mathbb{R}^{12}$, making the analysis mathematically tractable while preserving the essential challenge of extracting global many-body information from local measurements.

\subsubsection{Identifiability Problem}

Given a quantum probe tomography protocol of the type specified above, a central question arises: which sets of Hamiltonians are indistinguishable under the accessible measurements? This is the \emph{identifiability problem}, which has been studied in various forms throughout the literature \cite{sone2017hamiltonian, nielsen2021gate, van2022hardware, huang2022foundations, chen2023learnability}. Understanding the structure of indistinguishable Hamiltonians is crucial for determining what information can realistically be extracted from constrained experimental access.

To illustrate the core challenge, consider the class of 1D, nearest-neighbor, translation-invariant Hamiltonians, where we only have access to probe the central site labeled by $\probeindex$. One can verify that protocols of the type outlined above cannot distinguish two Hamiltonians $H$ and $H'$ if they are related by an invertible linear transformation
\begin{align}
H' = (L^{-1} \otimes \mathds{1}_\probeindex) \,H\, (L \otimes \mathds{1}_\probeindex),
\end{align}
where $L$ acts on the non-probed sites and $\mathds{1}_\probeindex$ is the identity on the probed site. Other transformations, including nonlinear ones, may exist depending on the specific quantum probe protocol, but we focus on this linear setting for concreteness.

The natural question becomes: do there exist distinct Hamiltonians $H$ and $H'$ related as above, such that both belong to the promised class of 1D, nearest-neighbor, translation-invariant Hamiltonians? The answer is \emph{yes}. Consider the quantum Ising model at its critical point
\begin{align}
H = \sum_{i \in \mathbb{Z}} \left(Z_i Z_{i+1} + X_i\right).
\end{align}
Let $L \otimes \mathds{1}_{\probeindex} = \cdots \mathds{1} \otimes X \otimes \mathds{1} \otimes X \otimes \mathds{1} \otimes \cdots$. Then
\begin{align}
\label{E:IsingIdentify1}
H' = (L^{-1} \otimes \mathds{1}_{\probeindex}) \,H\, (L \otimes \mathds{1}_\probeindex) = \sum_{i \in \mathbb{Z}} \left(-Z_i Z_{i+1} + X_i\right).
\end{align}
Therefore, the ferromagnetic and antiferromagnetic critical points are fundamentally indistinguishable under any number of experiments conducted using the single-site probe.

More generally, given a promised class of Hamiltonians, we can ask: which sets of Hamiltonians are indistinguishable under the accessible measurements? In principle, for a given access model, one can attempt to classify the complete equivalence classes of indistinguishable Hamiltonians \cite{van2022hardware}. However, performing such a comprehensive classification, even for relatively simple access models, appears to be an extremely challenging problem.

A more tractable approach is to address the identifiability problem when we focus instead on \emph{generic} Hamiltonians within the promised class through the lens of \emph{smoothed analysis}. This perspective, originating in theoretical computer science \cite{carbery2001distributional, spielman2004smoothed, arthur2009k}, examines whether hard worst-case instances become tractable when the worst-case input is slightly perturbed. The key intuition is that pathological cases requiring intricate parameter fine-tuning often become resolvable when we introduce small random perturbations. This smoothed analysis perspective differs from \emph{average-case complexity} analysis. The latter studies the expected difficulty over randomly chosen instances from a specific distribution (e.g., standard Gaussian), whereas smoothed analysis quantifies over all distributions given by random perturbations starting from an arbitrary, possibly worst-case hard input.

Concretely, we consider the following question: suppose we start with any Hamiltonian $H$ from our promised class, and we add a small random perturbation $\delta H$ where $\|\delta H\| \leq \sigma$ for some noise scale $\sigma > 0$. Does the perturbed Hamiltonian $H + \delta H$ remain indistinguishable from some other Hamiltonian in the class, or does the perturbation generically circumvent the indistinguishability? This approach is especially relevant for quantum probe tomography because realistic physical systems always contain some degree of disorder, imperfections, or environmental coupling that effectively provide natural perturbations.

We find, somewhat remarkably, that solving the identifiability problem for slightly perturbed instances is tractable using a combination of tools from algebraic geometry, symmetry analysis, and anti-concentration. Moreover, our findings are encouraging: in practical cases of quantum probe tomography, we can completely learn generic Hamiltonians up to mild symmetries, meaning that generic Hamiltonians are essentially identifiable under realistic experimental conditions.

As a concrete example, consider the class of nearest-neighbor, translation-invariant, rotation-invariant Hamiltonians specified in Eq.~\eqref{E:NNH1}. If we perform quantum probe tomography on a thermal state at a single site, then we cannot distinguish two Hamiltonians if they are related by inversion $\textsf{P}$ about that site, namely
\begin{equation}
H' = \textsf{P} H \textsf{P} = \sum_{\langle v, v'\rangle} \sum_{\mu, \nu=1}^3 J_{\nu \mu}\, \sigma^\mu_v \sigma^\nu_{v'} + \sum_{v} \sum_{\mu=1}^3 h_\mu \sigma^\mu_v,
\end{equation}
which simply exchanges $J_{\mu \nu}$ with $J_{\nu \mu}$. We demonstrate that for any Hamiltonian in this family, we can fully learn its parameters up to this single unavoidable inversion symmetry $\textsf{P}$. Furthermore, if we extend the protocol to probe two sites instead of one, then even this inversion indistinguishability disappears for generic instances, enabling complete parameter recovery.

\subsection{End-to-End Learning Guarantee}

Our first main result provides a general template for algorithmically solving the identifiability problem, stipulating a series of (efficiently certifiable) non-degeneracy conditions under which one can recover the true parameters of a generic Hamiltonian up to simple and unavoidable symmetries given probe access. Before stating the result, we describe the algorithm at a high level, which operates in three stages:
\begin{enumerate}[leftmargin=*]
    \item \textit{Estimate observable derivatives}. We will work with simple experiments of the following form. Starting with $\rho_\beta$, perform a single-qubit channel $C$ at the probed site $\probeindex$, undergo Hamiltonian evolution via $H$ for time $t$, and then measure the qubit at $\probeindex$ in Pauli-$\mu$ basis for some $\mu\in\{X,Y,Z\}$. The resulting observable value $\obsvalue{\beta}{t}{\mu}{C}$ is a nonlinear function in the parameters of the unknown Hamiltonian. By performing finite differencing over many such experiments, we can approximately extract derivatives of this observable with respect to $\beta, t$, which are \emph{polynomial} functions in the parameters of $H$.
    \item \textit{Approximately solve associated polynomial system}. The experimental data from Stage 1 thus naturally gives rise to a system of polynomial constraints. Because we are working with translation-invariant Hamiltonians, the number of variables is $\mathcal{O}(1)$, so the key difficulty lies not in finding a solution (for which simple exhaustive enumeration suffices), but in proving that such a solution must be close up to symmetries to the true parameters. We discuss this in greater detail below.
    \item \textit{Refine iteratively via Newton method}: To achieve better dependence on the target accuracy $\epsilon$, once we have obtained an approximate solution to the polynomial system to within some modest accuracy, we perform a sequence of Newton steps; see Section~\ref{sec:local}. If the approximate solution was already somewhat close to the true parameters, the Newton steps will converge exponentially quickly to the true parameters.
\end{enumerate}

\begin{theorem}[Informal, see Theorem~\ref{thm:main_general}]\label{thm:main_informal}
    Let $P(\bx) \approx \tilde{\bc}$ denote the approximate polynomial system obtained from estimating certain derivatives of $\obsvalue{\beta}{t}{\mu}{C}$. If $H$ is a generic member (in the smoothed analysis sense) of a Hamiltonian family for which $P(\bx)$ satisfies certain efficiently certifiable non-degeneracy assumptions, then there is an algorithm {\sc ProbeLearn} which, given probe access to $H$ at a single qubit, finds Hamiltonian parameters which are $\epsilon$-close to the parameters of $H$, up to simple, unavoidable symmetries, in query complexity $\mathrm{poly}(1/\epsilon)$ and classical post-processing time $\mathcal{O}(\log^2(1/\epsilon))$ with high probability.
\end{theorem}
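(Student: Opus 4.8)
The plan is to instantiate the three-stage pipeline described above and track errors through each stage, reducing everything to (a) an anti-concentration/genericity statement that the non-degeneracy conditions hold for smoothed Hamiltonians, and (b) a robustness statement for solving the polynomial system. First I would make precise the statement of Stage 1: the observable $\obsvalue{\beta}{t}{\mu}{C}$ is an analytic function of $\beta,t$, and its Taylor coefficients around $\beta=t=0$ are polynomials in the Hamiltonian parameters $\bx$ (one sees this by expanding $e^{-\beta H}$ and $e^{-\i t H}$ and noting that each term is a trace of a product of $H$'s with the probe operators; translation invariance plus the local support of the probe operators kills all but finitely many terms at each fixed order, since only Pauli strings within a bounded light-cone around $\probeindex$ survive). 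So from $\mathrm{poly}(1/\epsilon)$ experiments, varying $\beta$ and $t$ on a fine grid and finite-differencing, I can recover a vector $\tilde\bc$ with $\|\tilde\bc - P(\bx^\star)\| \le \delta$, where $\bx^\star$ are the true parameters and $\delta$ is polynomially small in $1/(\text{query complexity})$; standard stability bounds for finite differencing of analytic functions (with the derivatives bounded in terms of $\|H\|_{\mathrm{op}}$ on the relevant light-cone, which is $\mathcal{O}(1)$) give the quantitative tradeoff.

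Next, for Stage 2, I would invoke the non-degeneracy assumptions: these should assert that the real solution set of the \emph{exact} system $P(\bx) = P(\bx^\star)$ equals the orbit of $\bx^\star$ under the unavoidable symmetry group (a finite set of points, e.g. the two points related by inversion $\textsf{P}$ in the rotation-invariant lattice case), and moreover that this is \emph{robustly} true — i.e. any $\bx$ with $\|P(\bx) - \tilde\bc\|$ small and $\|\bx\|$ bounded must lie within $O(\delta^{1/k})$ of a symmetry image of $\bx^\star$, for some constant $k$ (a Łojasiewicz-type inequality, which holds with an effective exponent on the compact parameter domain once the zero-dimensionality is known). This is where the algebraic geometry enters, and it is the part I expect to push into the referenced general theorem: the genericity (smoothed analysis) claim is that for any base Hamiltonian $H_{\mathrm{hard}}$ and a $\sigma$-perturbation, the resolved system is zero-dimensional with the right symmetry orbit, with probability $1 - \mathrm{negl}$ over the perturbation — this follows from an anti-concentration bound (the "bad" set where the Jacobian degenerates or spurious solutions appear is contained in a proper subvariety, hence measure zero, and one gets quantitative control via Carbery–Wright-type inequalities on polynomials). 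Since the number of variables is $\mathcal{O}(1)$, I can solve the perturbed polynomial system by brute-force enumeration over a fine grid followed by local refinement, in time $\mathrm{poly}(\log(1/\delta))$ per candidate; hence the coarse solution is found in time $\mathcal{O}(\log(1/\delta))$ up to constants, and is within a fixed constant $c_0$ of a symmetry image of $\bx^\star$.

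Finally, for Stage 3, I would show the coarse solution lies in the basin of quadratic convergence of Newton's method applied to the (now overdetermined but consistent up to noise) system $P(\bx) = \tilde\bc$. Because the Jacobian of $P$ at $\bx^\star$ has full column rank (another non-degeneracy condition, generic by the same anti-concentration argument), Newton's method with the Moore–Penrose pseudoinverse converges doubly-exponentially from any starting point within a constant radius, and each step is $\mathcal{O}(1)$ arithmetic on $\mathcal{O}(1)$-dimensional objects; after $\mathcal{O}(\log\log(1/\epsilon))$ iterations the iterate is within $O(\delta)$ of $\bx^\star$, and choosing the query complexity so that $\delta \le \epsilon$ closes the loop. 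Tracking the bit-complexity of representing intermediate iterates gives the $\mathcal{O}(\log^2(1/\epsilon))$ classical post-processing bound. The main obstacle, and the conceptual heart of the argument, is the robust zero-dimensionality in Stage 2: proving not just that the exact system pins down $\bx^\star$ up to symmetry, but that this survives $\delta$-perturbations with an effective Łojasiewicz exponent — this is precisely what lets the coarse enumeration followed by Newton succeed, and it is where the interplay of algebraic geometry (dimension of the solution variety), symmetry analysis (identifying the orbit), and smoothed analysis (genericity of the non-degeneracy locus) all come together.
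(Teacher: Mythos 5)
There is a genuine gap, concentrated in your Stage 2. You assert a \L{}ojasiewicz-type inequality with an \emph{effective} exponent and constants, justified by ``the bad set is a proper subvariety, hence measure zero, with quantitative control via Carbery--Wright.'' But Carbery--Wright only gives quantitative anti-concentration for the value of a \emph{specific, fixed} polynomial of the Gaussian input whose variance you can lower-bound; the quantities you actually need to control --- the separation between the spurious solutions and the true orbit, the conditioning of the Jacobian at \emph{every} point of the fiber $F^{-1}(F(\bx^*))$ (not just at $\bx^*$), and the distance of $F(\bx^*)$ from the branch locus --- are not obviously polynomial functions of $\bx^*$, since they refer to the \emph{other} points of the fiber. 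The paper's key device, which your proposal is missing, is an elimination-theoretic product formula (Lemma~\ref{lem:poisson}, a generalization of the Poisson product formula handling zeros at infinity via saturation): it shows that products such as $\prod_{\bx\in F^{-1}(\bc')}\det J_F(\bx)$ and the discriminant of $\prod_{\bx}(t - G(\bx))$ are honest polynomials in the observed data $\bc = P(\bx^*)$, and hence in $\bx^*$. Only then can Carbery--Wright (plus Hermite-coefficient lower bounds, Assumptions~\ref{assume:hermite_discriminant}--\ref{assume:rootsep}) be applied to get the quantitative separation and conditioning bounds (Lemmas~\ref{lem:avoidzeros},~\ref{lem:sigmin_lbd},~\ref{lem:rootsep}) that play the role of your effective \L{}ojasiewicz constants. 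Without this step the smoothed analysis yields only qualitative genericity, and the constants in your robustness claim could degenerate with the instance --- which is exactly the failure mode exhibited by the critical Ising example.

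A second, smaller but real omission is in Stage 3. You treat the system $P(\bx)=\tilde\bc$ as fixed and drive only the measurement error $\delta$ down to $\epsilon$. But $\tilde\bc$ estimates Taylor coefficients at finite inverse temperature $\beta$, and the estimate carries an irreducible truncation error of order $\beta$ (Lemma~\ref{lem:derivestimate}); reaching accuracy $\epsilon$ your way forces $\beta\lesssim\epsilon$, i.e., a temperature growing with $1/\epsilon$. The paper instead keeps $\beta$ at a constant and runs the final Newton iterations against a \emph{higher-degree} truncated system that retains $\mathcal{O}(\log(1/\epsilon))$ terms of the $\beta$-expansion (Lemma~\ref{lem:hideg}); this is also where the $\mathcal{O}(\log^2(1/\epsilon))$ post-processing bound comes from ($\mathcal{O}(\log(1/\epsilon))$ Newton steps, each evaluating polynomials with $\mathcal{O}(\log(1/\epsilon))$ terms), rather than from bit-complexity of $\mathcal{O}(\log\log(1/\epsilon))$ iterates as you suggest. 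Your overall three-stage architecture and the use of grid search plus Newton refinement do match the paper, but these two mechanisms are the substance of the proof and need to be supplied.
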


\noindent While the general template bears resemblance to existing methods for Hamiltonian learning, we stress that the key difficulty and novelty lies not in the algorithm design, but in the \emph{analysis}, in particular of Stage 2. As it turns out, for the polynomial systems we consider, even at high temperatures and short times, it is not at all clear that the experimental data obtained in Stage~1 is sufficient to recover $H$, or how to ascertain this. In Section~\ref{sec:examples}, we explicitly compute the relevant polynomial system in the translation-invariant setting and note that \emph{a priori} one needs a computer algebra system to verify for any specific problem instance whether the system uniquely identifies the true parameters up to simple and unavoidable symmetries. Worse yet, as the quantum Ising model example above demonstrates, the polynomial system arising for non-generic problem instances might have many more roots than just the ``correct'' ones. All of these issues lie in stark contrast to, e.g., Gibbs state learning where at sufficiently high temperature, one can read off the terms in $H$ simply from the first-order term in the expansion $\tr(\sigma_a \rho_\beta) = -\frac{\beta}{d}\,\tr(\sigma_a H) + \cdots$.

In this work, we develop a new toolkit marrying techniques from quantitative algebraic geometry and from smoothed analysis to sidestep these issues. Our starting point is the powerful fact that for polynomial systems with non-singular Jacobian at their roots (see Section~\ref{sec:verify-generic-fiber}), whether or not their roots uniquely identify the true Hamiltonian parameters can be verified by checking whether this is the case for a \emph{random} problem instance! Unfortunately, this is not quite enough for our purposes because we never get exact access to the polynomial system, due to perturbations arising from measurement error and finite time and temperature effects. Consequently, the bulk of the work in this paper centers around developing a quantitative analogue of the above theory. We develop a new toolkit for integrating classical ideas from algebraic geometry, specifically from elimination theory, with analytic techniques from smoothed analysis like polynomial anti-concentration. For instance, this toolkit allows us to make statements like: given ``nice'' polynomial maps $F: \C^N\to\C^N$ and $G: \C^N\to \C$, and given an input $\bx^* \in \C^N$ with a small amount of randomness, the values that $G$ takes on over the different solutions $\bx$ to $F(\bx) = F(\bx^*)$ are all quantifiably well-separated from each other. In Section~\ref{sec:quantitative}, we develop ideas like these into a general-purpose guarantee for robust polynomial system solving (Theorem~\ref{thm:solvesystem}) that may be of independent interest and which drives the proof of Theorem~\ref{thm:main_informal}.

Finally, we instantiate this general theory to give the first efficient algorithm for probe tomography for nearest-neighbor, translation-invariant, rotation-invariant Hamiltonians on square lattices:

\begin{theorem}\label{thm:application_intro}
    If $H$ is a generic (in a smoothed analysis sense) nearest-neighbor, translation-invariant, rotation-invariant Hamiltonian on a 1D, 2D, or 3D square lattice, then there is an algorithm with a single-site probe access that makes $\mathrm{poly}(1/\epsilon)$ queries for total evolution time $\mathrm{poly}(1/\epsilon)$ and, after $\mathcal{O}(\log^2 (1/\epsilon))$ classical post-processing time, outputs an $\epsilon$-accurate estimate for the parameters of the Hamiltonian, up to simple, unavoidable symmetries, with high probability.
\end{theorem}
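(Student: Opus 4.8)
The plan is to obtain Theorem~\ref{thm:application_intro} by instantiating the general template, Theorem~\ref{thm:main_general}, whose hypotheses are efficiently certifiable non-degeneracy conditions on the polynomial system extracted from probe data; the real work is to exhibit and certify that system for the family in Eq.~\eqref{E:NNH1}. First I would make the polynomial system explicit. Working at short times and high temperature, expand the probe observable $\obsvalue{\beta}{t}{\mu}{C}$ around $t=\beta=0$: the expansion of $\rho_\beta \propto e^{-\beta H}$ and of the Heisenberg-evolved observable $e^{\i t H}\sigma^\mu_\probeindex e^{-\i t H}$ each bring down nested commutators with $H$, so every mixed partial $\partial_t^k\partial_\beta^\ell \obsvalue{\beta}{t}{\mu}{C}\big|_{t=\beta=0}$ is a polynomial of degree $k+\ell$ in the twelve parameters $\bx=(J,\mathbf h)\in\mathbb{R}^{12}$. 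Because the iterated commutators $[H,[H,\dots,[H,\sigma^\mu_\probeindex]]]$ spread only a bounded distance from $\probeindex$, for each of the 1D, 2D, and 3D geometries only a bounded neighborhood of the probe enters, so these partials genuinely depend on just those twelve variables and can be written in closed form, as in Section~\ref{sec:examples}. Ranging over a finite list of orders $(k,\ell)$, bases $\mu\in\{X,Y,Z\}$, and a fixed finite repertoire of single-qubit channels $C$ assembles the map $P:\C^{12}\to\C^N$ whose value $\tilde{\bc}\approx P(\bx)$ is what Stage~1 of \textsc{ProbeLearn} estimates (up to finite-differencing and shot noise).

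Second --- and this is where I expect the main obstacle to lie --- I would identify the relevant symmetry and certify the non-degeneracy hypotheses. For a single-site probe, $P$ is invariant under site inversion $\textsf{P}$, which acts on parameters by $J\mapsto J^\top$ with $\mathbf h$ fixed; the content is that this is generically the only symmetry, i.e. the fiber $\{\bx:P(\bx)=P(\bx^*)\}$ of a generic $\bx^*$ equals its (at most two-element) orbit $\{\bx^*,\textsf{P}\bx^*\}$, and that $DP$ is non-singular there. By the ``test on a random instance'' principle of Section~\ref{sec:verify-generic-fiber}, it suffices to: (i) fix one random rational $\bx_0\in\mathbb{Q}^{12}$; (ii) use a computer algebra system to compute a Gr\"obner basis (or iterated resultants) of $\langle P(\bx)-P(\bx_0)\rangle$ and check that its complex zero set is exactly $\{\bx_0,\textsf{P}\bx_0\}$ and that the appropriate $12\times 12$ Jacobian minor is nonzero there; (iii) note that both properties are Zariski-open in $\bx_0$, hence hold on a dense open subset of $\C^{12}$; and (iv) feed this into polynomial anti-concentration (the smoothed-analysis ingredient behind Theorem~\ref{thm:solvesystem}) to conclude that, when $\bx^*$ is an arbitrary worst-case $H_{\mathrm{hard}}$ perturbed at scale $\sigma$, the defining non-vanishing polynomials stay bounded away from zero with high probability, giving quantitative control on the condition number of $DP$ and on the separation of the value of any coordinate polynomial across distinct fiber points. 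The one extra subtlety in 2D and 3D is that a larger neighborhood of $\probeindex$ contributes, so one must check that the chosen set of orders $(k,\ell)$ already makes $P$ finite-to-one modulo $\textsf{P}$; this is again a one-time computer-algebra check per dimension.

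Third, with these certified constants in hand, I would invoke Theorem~\ref{thm:main_general} directly. Stage~1 estimates $\tilde{\bc}\approx P(\bx^*)$ to accuracy $\mathrm{poly}(\epsilon)$ using $\mathrm{poly}(1/\epsilon)$ queries and total evolution time $\mathrm{poly}(1/\epsilon)$, since each $\mathcal{O}(1)$-order partial derivative is obtained by finite differencing at $\mathcal{O}(1)$ time and temperature scales. Stage~2 solves the $\mathcal{O}(1)$-variable approximate system by exhaustive enumeration to modest accuracy, and Theorem~\ref{thm:solvesystem}, fed by the fiber-separation and Jacobian bounds above, certifies that the recovered solution is close, up to the orbit of $\textsf{P}$, to $\bx^*$. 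Stage~3 runs $\mathcal{O}(\log\log(1/\epsilon))$ Newton iterations on the exact polynomial system to boost accuracy to $\epsilon$, contributing only $\mathcal{O}(\log^2(1/\epsilon))$ post-processing time (each step an $\mathcal{O}(1)$-size linear solve at $\mathcal{O}(\log(1/\epsilon))$-bit precision). Matching the (at most two-element) output ambiguity to the unavoidable inversion $\textsf{P}$ described after Eq.~\eqref{E:NNH1} finishes the proof. Beyond the computer-algebra verification, the remaining care is bookkeeping: tracking how the certified quantities --- the minimum of the relevant Jacobian minor, the fiber gap, and the Lipschitz constants of $P$ on the relevant ball --- propagate into the $\mathrm{poly}(1/\epsilon)$ query bound, and checking that the smoothed-analysis perturbation scale $\sigma$ is compatible with the neighborhood of $H_{\mathrm{hard}}$ in the statement.
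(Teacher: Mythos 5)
Your proposal follows essentially the same route as the paper: Section~\ref{sec:examples} derives exactly such an explicit system of thirteen low-order Taylor-coefficient polynomials in the twelve parameters, certifies generic fiber size two (the inversion orbit $J_{\mu\nu}\mapsto J_{\nu\mu}$) with full-rank Jacobians by exactly solving the system at a rational point as in Remark~\ref{remark:certify_generic}, and then invokes Theorem~\ref{thm:main_general}. The one inaccuracy is your accounting of the $\mathcal{O}(\log^2(1/\epsilon))$ post-processing: in the paper it comes from $\mathcal{O}(\log(1/\epsilon))$ Newton steps applied to a \emph{higher-degree truncated} system containing $\mathcal{O}(\log(1/\epsilon))$ Taylor terms per polynomial (needed precisely so that the truncation error at a fixed, $\epsilon$-independent inverse temperature does not bottleneck the final accuracy), not from $\mathcal{O}(\log\log(1/\epsilon))$ iterations on an ``exact'' system at $\mathcal{O}(\log(1/\epsilon))$-bit precision --- though this is absorbed by your black-box use of Theorem~\ref{thm:main_general}.
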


\section{Discussion}
\label{sec:discussion}

In this work, we have formulated a new class of problems in quantum learning theory, namely quantum probe tomography, and developed a suite of quantitative algebraic-geometric tools to address them. We anticipate that these tools will prove broadly useful for tackling other quantum learning problems under restricted control settings. We conclude by highlighting several open problems and promising future directions.

An important direction is to refine our results to achieve better dependence on the error parameter $\varepsilon$ and the smoothing parameter $\sigma$. For the former, one might aim for the optimal Heisenberg scaling $\sim 1/\varepsilon$, while for the latter, a linear dependence $\sim 1/\sigma$ would be preferable to the current polynomial scaling with the power being the number of variables. While we have focused on the physically salient setting of translation-invariant Hamiltonians, which possess only a constant number of parameters, more general settings would benefit from careful optimization of the scaling with respect to the number of parameters. Further improvements in these dependencies could enhance both the theoretical guarantees and practical performance of our algorithms.

Our framework naturally invites extensions to richer physical settings. One compelling direction is quantum probe tomography for systems with long-range interactions, where coupling strengths decay algebraically or exponentially with distance. Another natural generalization involves translation-invariant systems with independent identically distributed on-site disorder, where the goal is to learn both the fixed parameters governing the clean system and the probability distribution characterizing the disordered couplings. Such extensions would capture physically realistic scenarios encountered in many experimental platforms, including trapped ions, Rydberg atoms, and disordered quantum materials. Other natural generalizations include bosonic or fermionic systems.

We emphasize that quantum probe tomography encompasses a rich variety of learning tasks beyond Hamiltonian learning. These include quantum state learning, where the goal is to reconstruct an unknown quantum state through probe interactions, and quantum channel learning, where one seeks to characterize unknown quantum dynamics or noise processes. Each of these tasks presents distinct challenges, and may require tailored applications of our algebraic-geometric methodology. It is our hope that continued development of these tools will lead to experimentally realizable and pragmatic methods for efficient learning of realistic quantum systems.

\subsection*{Acknowledgments}

We would like to thank Anurag Anshu, Daniel Ranard, Josu\'{e} Tonelli-Cueto, and Aditi Venkatesh for valuable conversations. We also thank Weiyuan Gong and Muzhou Ma for comments on the manuscript. SC acknowledges support from NSF CCF-2430375. JC is supported by an Alfred P.~Sloan Foundation Fellowship.  HH~acknowledges support from the Broadcom Innovation Fund, the U.S. Department of Energy, Office of Science, National Quantum Information Science Research Centers, Quantum Systems Accelerator and from the Institute for Quantum Information and Matter, an NSF Physics Frontiers Center.

\appendix

\vspace{2.5em}
\noindent \textbf{\LARGE{}Appendices}
\vspace{0.25em}

\paragraph{Roadmap for Appendices.} In Section~\ref{sec:related} we briefly review the most relevant prior work. In Section~\ref{sec:prelims} we provide technical preliminaries in preparation for the rest of the paper. In Section~\ref{sec:local} we provide a standard analysis of Newton's method which we will use as part of the iterative refinement phase of our learning algorithm. In Section~\ref{sec:estimation}, we prove that derivatives of a certain observable that can be measured in our probe tomography setup can be efficiently estimated from experimental data. In Section~\ref{sec:verify-generic-fiber} we begin developing our algebraic geometric toolkit and give easily checkable conditions under which one can verify the size of generic fibers of a polynomial map. In Section~\ref{sec:quantitative} we make these ideas quantitative and give an algorithmic proof that even if one works with \emph{approximate} polynomial systems, their approximate solutions can be close to the ground truth under suitable non-degeneracy conditions. In Section~\ref{sec:general}, we state our main general result for probe tomography of Hamiltonians for which the associated polynomial system satisfies the desiderata of the previous section, and in Section~\ref{sec:examples}, we instantiate this theory for the specific case of translation-invariant, rotation-invariant, nearest-neighbor Hamiltonians on a $D$-dimensional square lattice for $D = 1,2,3$.

\section{Related Work}
\label{sec:related}

Several prior works have explored learning with quantum probes \cite{burgarth2009coupling, burgarth2009indirect, burgarth2012quantum, di2009hamiltonian, zhang2014quantum, zhang2015identification, sone2017hamiltonian, sone2017exact, che2021learning, schuster2023learning, davis2023probing}. We draw particular attention to \cite{sone2017hamiltonian, wang2020quantum}, which employ the Eigensystem Realization Algorithm (ERA), originally developed for structural system identification in civil engineering, to learn local Hamiltonians from time-trace measurements. These works consider a scenario where one qubit is initialized in a pure state while all other qubits remain in a maximally mixed state. The evolution of that single pure qubit is measured at multiple time points to extract information about the underlying Hamiltonian. The ERA methodology proceeds in two stages: first, raw measurement data is transformed into an approximate numerical transfer function; second, the coefficients of this transfer function are equated with their theoretical symbolic expressions to generate a polynomial system in the Hamiltonian's coupling parameters. These polynomial systems are then solved using Gr\"obner basis algorithms, with numerical validation provided for specific examples.

In \cite{sone2017hamiltonian}, the authors develop a systematic procedure to test Hamiltonian identifiability using Gr\"obner basis techniques. They establish sufficient conditions for identifiability and demonstrate their approach by verifying the identifiability properties for several spin chain models. Building on this foundation, \cite{wang2020quantum} extends the framework using similarity transformation techniques to characterize equivalence classes of indistinguishable Hamiltonians. Both works establish sufficient conditions for identifiability validated on particular instances, with their notion of identifiability generally limited to determining parameter magnitudes, leaving sign ambiguities unresolved.

The ERA-based methodology faces several important limitations that our work addresses. The two-stage pipeline from measurements to transfer functions to polynomial systems lacks guarantees of end-to-end robustness. These works do not establish quantitative bounds on how measurement noise and finite-sample errors propagate through the entire ERA procedure to affect final parameter estimates, nor do they provide explicit convergence rates as a function of target precision $\varepsilon$. Moreover, they do not systematically distinguish between measure-zero pathological cases and generic instances that arise naturally under small perturbations. This leaves open fundamental questions about the scope and robustness of probe tomography: \emph{Is non-identifiability a generic obstacle, or merely an artifact of special parameter configurations?}

Our work introduces a series of mathematical techniques to overcome these limitations. First, we consider a physically more natural setting where the initial state is a thermal state at tunable inverse temperature $\beta$ under the unknown Hamiltonian, rather than requiring the specific preparation where the probe qubit is in a pure state while all other qubits remain in the maximally mixed state. Second, and more critically, we establish a direct, end-to-end connection between experimental observables and Hamiltonian parameters, bypassing the fragile transfer function intermediate entirely. This enables our proposed learning algorithm to rigorously bound errors throughout the entire learning process, from raw measurements to final parameter estimates, with explicit polynomial dependence on the target precision $\varepsilon$. Crucially, we resolve the identifiability problem for generic Hamiltonians using smoothed analysis \cite{spielman2004smoothed}. For generic instances obtained by adding small random perturbations to any Hamiltonian, we establish identifiability up to physically unavoidable symmetries that cannot be resolved from any experiment (Theorem~\ref{thm:main_informal} and Theorem~\ref{thm:application_intro}). This transforms identifiability from an instance-specific property requiring case-by-case verification into a robust, generic feature of the problem family that holds for almost all Hamiltonians.

Methodologically, to our knowledge, \emph{quantitative algebraic-geometric} tools have not previously been employed to certify generic identifiability and fiber sizes in quantum learning. We combine these tools with anti-concentration bounds for polynomials and smoothed analysis \cite{carbery2001distributional,spielman2004smoothed}, yielding lower bounds that rule out near-collisions among solutions and certify well-conditioned Jacobians on generic instances; see also \cite{cox1998using} for background. A related but complementary approach, particularly prominent in the computer science literature, leverages \emph{sum-of-squares} (SoS) relaxations to solve polynomial systems arising in various classical statistical estimation tasks; we refer to the survey of \cite{raghavendra2018high} for an overview. Recently, this toolbox has been adapted to the setting of Gibbs state learning. In particular, learning from Gibbs states at any constant temperature can be formulated as a low-degree polynomial system for which low-level SoS provably recovers the parameters \cite{bakshi2024learning}. Our algebraic geometry approach is complementary to this SoS methodology. We further exploit explicit coefficient and degree bounds from high-temperature expansions \cite{haah2024learning} when converting probe measurements into low-degree polynomial constraints with controlled truncation error.

\section{Preliminaries}
\label{sec:prelims}

In this section we collect notation and various technical results that will be useful in subsequent sections. In Section~\ref{sec:smoothed}, we define the smoothed analysis model that we will work in. In Section~\ref{sec:series}, we introduce some tools for bounding terms in series expansions of time-evolved observables and of Gibbs states with respect to time and inverse temperature. In Section~\ref{sec:anticoncentration}, we record some standard results on anti-concentration of low-degree polynomials. In Section~\ref{sec:alggeo}, we present some tools from algebraic theory, more precisely from elimination theory, that will be crucial to Sections~\ref{sec:verify-generic-fiber} and~\ref{sec:quantitative}.

\paragraph{Basic notation.} Given $i\in[n]$ and $\mu\in\{0,1,2,3\}$, let $\sigma^\mu_i$ (where $\sigma_i^0 = I$, $\sigma_i^1 = X$, $\sigma_i^2 = Y$, and $\sigma_i^3 = Z$)  denote the $n$-qubit Pauli operator with index $\mu$ acting on the $i$-th qubit.  We will sometimes instead write $\mu \in \{I,X,Y,Z\}$. Given a Hamiltonian $H$, we use the following shorthand for time evolution under Hamiltonian $H$:
\begin{equation}
    \rho_H(t) \triangleq e^{-\i Ht} \rho \,e^{\i Ht}\,.
\end{equation}

\subsection{Smoothed Hamiltonians}
\label{sec:smoothed}

\begin{definition}\label{def:Ham_family}
    An \emph{$N$-parameter $\mathfrak{K}$-local Hamiltonian family} $(H_\params)$ associates to each $\params\in \R^N$ an $n$-qubit Hamiltonian $H_\theta = \sum_{P\in \{I,X,Y,Z\}^{\otimes n}} c_P(\params) P$, where $c_P: \R^N\to\R$ is a linear function\footnote{Linearity is not strictly speaking necessary, but it holds in all relevant examples in this work.} and $c_P \equiv 0$ unless $|P|\le \mathfrak{K}$. 
    
    Typically $\mathfrak{K}$ will be an absolute constant, in which case we sometimes simply say \emph{local} instead of $\mathfrak{K}$-local.
\end{definition}

\noindent For example if the $N$ indices of $\params$ correspond to all $\mathfrak{K}$-local Pauli operators, and $c_P(\params) = \lambda_P$ for all $|P| \le \mathfrak{K}$, this simply corresponds to the class of $\mathfrak{K}$-local Hamiltonians. In this work we will consider more structured Hamiltonian families that arise in physical contexts. Below we consider one such example:

\begin{example}[Translation-invariant nearest-neighbor Hamiltonians in 1D]
    Let $N = 12$, with $\params = (h_1, h_2, h_3, J_{11}, J_{12},\ldots,J_{33})$. The $12$-parameter, $2$-local Hamiltonian family of \emph{translation-invariant nearest-neighbor 1D Hamiltonians} associates to $\params$ the Hamiltonian
    \begin{equation}
        H_\params = \sum^n_{i=1} \sum^3_{\mu = 1}  h_i \sigma^\mu_i + \sum_{i = 1}^{n-1} \sum^3_{\mu,\nu = 1} J_{\mu\nu} \sigma^\mu_i\sigma^\nu_{i+1}\,.
    \end{equation}
\end{example}

\noindent We are now ready to define our probabilistic model for how the ground truth Hamiltonian is generated.

\begin{definition}[Smoothed analysis]\label{def:smoothed_analysis}
    Let $(H_\params)$ be some $N$-parameter local Hamiltonian family, and let $\mu\in\R^N$ be unknown. Given \emph{smoothing parameter} $\smoothing$, a \emph{$\smoothing$-smoothed Hamiltonian} is the random Hamiltonian $H_{\params^*}$ where $\params^* \sim \mathcal{N}(\mu,\smoothing^2)$.
\end{definition}

\noindent It will be convenient to define an ``effective radius'' for the space of possible parameters $\params^*$. By standard Gaussian concentration, $\norm{\params^*}_\infty \le \norm{\mu}_\infty + 2\sigma\sqrt{\log N/\delta}$ with probability at least $1 - \delta$. We will consider smoothed Hamiltonians for which $\norm{\mu}_\infty \le \mumax$, so henceforth define
\begin{equation}
    \lmax \triangleq \mumax + 2\sigma\sqrt{\log N/\delta}\,.
\end{equation}

\subsection{Series Expansions in Time and Inverse Temperature}
\label{sec:series}

As is standard in this literature, we use series expansions in time and inverse temperature to make sense of time-evolved observables and Gibbs states.

\begin{lemma}[Hadamard formula]\label{lem:hadamard}
    For any $H,X$,
    \begin{equation}
        \Bigl\|e^{\i Ht}Xe^{-\i Ht} - \sum^{s-1}_{j=0} \frac{1}{j!} (\i t)^j [H,X]_j\Bigr\|_F \le \frac{t^s}{s!}\norm{[H,X]_s}_F\,,
    \end{equation}
    where $[H,X]_j = [H,[H,X]_{j-1}]$ denotes the $j$-fold nested commutator.
\end{lemma}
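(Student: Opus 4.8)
The plan is to set $f(t) \triangleq e^{\i Ht} X e^{-\i Ht}$, regarded as a smooth matrix-valued function of $t \in \R$, and apply Taylor's theorem with integral remainder componentwise. First I would identify the derivatives of $f$. Differentiating the defining product gives $f'(t) = \i H f(t) - \i f(t) H = \i[H, f(t)] = \i\,\mathrm{ad}_H(f(t))$, where $\mathrm{ad}_H(Y) \triangleq [H,Y]$, and an immediate induction yields $f^{(j)}(t) = \i^j\,\mathrm{ad}_H^j(f(t))$ for all $j \ge 0$. Since $e^{\i H\tau}$ commutes with $H$, the adjoint action $\mathrm{ad}_H$ commutes with conjugation by $e^{\i H\tau}$, so $\mathrm{ad}_H^j(f(\tau)) = e^{\i H\tau}\,\mathrm{ad}_H^j(X)\,e^{-\i H\tau} = e^{\i H\tau}[H,X]_j\,e^{-\i H\tau}$. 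In particular $f^{(j)}(0) = \i^j[H,X]_j$, which matches the Taylor coefficients appearing in the claim because $\tfrac{t^j}{j!}\,\i^j = \tfrac{(\i t)^j}{j!}$.

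Next I would invoke Taylor's theorem with the integral form of the remainder, valid entrywise for the smooth function $f$:
\[
f(t) = \sum_{j=0}^{s-1} \frac{t^j}{j!}\, f^{(j)}(0) \;+\; \frac{1}{(s-1)!}\int_0^t (t-\tau)^{s-1}\, f^{(s)}(\tau)\,d\tau\,.
\]
Substituting the expression for $f^{(s)}(\tau)$ from the previous step, the remainder equals $\frac{\i^s}{(s-1)!}\int_0^t (t-\tau)^{s-1}\, e^{\i H\tau}[H,X]_s\, e^{-\i H\tau}\,d\tau$, so that $f(t) - \sum_{j=0}^{s-1}\frac{(\i t)^j}{j!}[H,X]_j$ is exactly this remainder integral.

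Finally I would bound the Frobenius norm. By the triangle inequality for the integral together with the fact that conjugation by a unitary preserves $\|\cdot\|_F$, we have $\|e^{\i H\tau}[H,X]_s\, e^{-\i H\tau}\|_F = \|[H,X]_s\|_F$ for every $\tau$, hence
\[
\Bigl\| f(t) - \sum_{j=0}^{s-1}\frac{(\i t)^j}{j!}[H,X]_j \Bigr\|_F \;\le\; \frac{\|[H,X]_s\|_F}{(s-1)!}\int_0^{t}(t-\tau)^{s-1}\,d\tau \;=\; \frac{t^s}{s!}\,\|[H,X]_s\|_F\,,
\]
which is the claim for $t \ge 0$; the case $t < 0$ is identical with $t^s$ replaced by $|t|^s$. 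I do not expect a genuine obstacle here, as this is a standard estimate, but the two points that warrant care are (i) correctly matching the $j$-th derivative of $f$ with the $j$-fold nested commutator via the adjoint action, and (ii) noting that the unitary conjugations inside the remainder drop out under $\|\cdot\|_F$, which is precisely what makes the final bound clean. An equivalent route that sidesteps Taylor's theorem is to iterate the Duhamel identity $f(t) = X + \i\int_0^t \mathrm{ad}_H(f(\tau))\,d\tau$ exactly $s$ times, producing the same $s$-fold iterated-integral remainder and the same bound.
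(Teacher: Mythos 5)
Your proof is correct. The paper states this lemma without proof, treating it as the standard Hadamard/Taylor-remainder estimate, and your argument --- computing $f^{(j)}(t) = \i^j\,\mathrm{ad}_H^j(f(t))$ for $f(t) = e^{\i Ht}Xe^{-\i Ht}$, applying Taylor's theorem with integral remainder, and using that conjugation by the unitary $e^{\i H\tau}$ preserves $\norm{\cdot}_F$ --- is exactly the canonical derivation the paper implicitly relies on. The only implicit hypotheses (that $H$ is Hermitian so $e^{\i H\tau}$ is unitary, and $t \ge 0$, else $t^s$ should read $|t|^s$) hold in every use of the lemma in the paper, and you flag the latter correctly.
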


\begin{theorem}[Theorem 3.1 from~\cite{haah2024learning}]\label{thm:hightemp}
    Given an $n$-qubit local Hamiltonian $H = \sum^M_{a=1} \lambda_a \sigma_a$ with Gibbs state $\rho_\beta\propto \exp(-\beta H)$ in the $d=2^n$-dimensional Hilbert space, we have the Taylor expansion
    \begin{equation}
        \tr(\sigma_a \rho_\beta) = \frac{1}{d}\,\tr(\sigma_a) + \sum^\infty_{k=1} \beta^k p_k(\lambda_1,\ldots,\lambda_M)\,,
    \end{equation}
    with equality whenever the series converges absolutely. Suppose the dual interaction graph for $H$ has degree $\mathfrak{d}$. For any $k\ge 1$, $p_k$ is a degree-$k$ homogeneous polynomial in $\lambda_1,\ldots,\lambda_M$ consisting of at most $e\mathfrak{d}(1 + e(\mathfrak{d} - 1))^k$ monomials, where the coefficient in front of any monomial is at most $(2e(\mathfrak{d} + 1))^{k+1}(k+1)$ in magnitude.
    Furthermore, the sum of the absolute values of the coefficients of $p_k$ is bounded by 
    \begin{equation}
        c_k \triangleq 2e^2\mathfrak{d}(\mathfrak{d}+1)\tau^k(k+1)
    \end{equation}
    for $\tau = (1 + e(\mathfrak{d} - 1))(2e(\mathfrak{d}+1)) \le 2e^2(\mathfrak{d}+1)^2\,,$
    and thus for any $-\lmax\le \lambda_1,\ldots,\lambda_M \le \lmax$ we have
    \begin{equation}
        |\partial_{\lambda_a} p_k| \le \exp(\Od(k))
    \end{equation}
    for all $a\in[M]$.
    Suppose further that $\sigma_a$ is an $L$-local Pauli operator. Then after $\mathcal{O}(LM\mathfrak{d}\log\mathfrak{d})$ pre-processing time, for all $k$ we have that the terms of the polynomial $p_k$ can be computed in time $\mathcal{O}((k\mathfrak{d} + (8^k + L)\poly(k))\cdot \mathfrak{d}(1+e(\mathfrak{d}-1)^k))$.
\end{theorem}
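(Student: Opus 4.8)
The plan is to work with the ratio $\tr(\sigma_a\rho_\beta)=\tr(\sigma_a e^{-\beta H})/\tr(e^{-\beta H})$ and extract a linked-cluster expansion. Taylor-expanding both exponentials, the order-$m$ term of the numerator is $\tfrac{(-1)^m}{m!}\tr(\sigma_a H^m)=\tfrac{(-1)^m}{m!}\sum_{a_1,\dots,a_m}\lambda_{a_1}\cdots\lambda_{a_m}\tr(\sigma_a\sigma_{a_1}\cdots\sigma_{a_m})$, and the trace is nonzero only when $\sigma_a\sigma_{a_1}\cdots\sigma_{a_m}$ is proportional to the identity; this forces the set of distinct terms among $\{\sigma_{a_1},\dots,\sigma_{a_m}\}$, enlarged by $\sigma_a$, to induce a subgraph of the dual interaction graph $G$ (vertices $=$ the $M$ terms, adjacent iff their supports overlap) in which the product of Paulis within each connected component is $\propto I$. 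The analogous statement holds for the denominator with $\sigma_a$ omitted, so the standard Mayer/linked-cluster cancellation between numerator and denominator leaves, at order $\beta^k$, only the contribution of multisets of exactly $k$ interaction terms whose union with $\sigma_a$'s support is \emph{connected} in $G$ (the $k=0$ term supplying the $\tfrac1d\tr(\sigma_a)$ offset). This simultaneously shows that $p_k$ is well-defined, that it is homogeneous of degree $k$, and that its monomials are supported on connected clusters; the coefficient bounds below give absolute convergence of $\sum_k\beta^kp_k$ for small $\beta$, which justifies the rearrangement and hence the claimed equality.

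The monomial count is then purely combinatorial: one bounds the number of multisets of $k$ terms whose union with $\sigma_a$'s support is connected in a graph of maximum degree $\mathfrak{d}$. I would derive the bound $e\mathfrak{d}(1+e(\mathfrak{d}-1))^k$ from the classical estimate on the number of connected subgraphs of a bounded-degree graph containing a fixed vertex — equivalently, counting rooted subtrees via Lagrange inversion, or via the tree-graph inequality — with the $e\mathfrak{d}$ prefactor absorbing the $\le\mathfrak{d}$ choices for a term meeting $\sigma_a$'s support and the $(1+e(\mathfrak{d}-1))^k$ factor tracking the branching as the cluster grows.

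The crux is the coefficient attached to each monomial. Fixing a connected cluster, i.e.\ a monomial $\lambda_{b_1}^{m_1}\cdots\lambda_{b_r}^{m_r}$ with $\sum_i m_i=k$, its coefficient in $p_k$ is an alternating sum over orderings of the $k$ chosen terms weighted by the $1/m!$ factors from the two expansions, and the linked-cluster theorem guarantees massive cancellation. To make this quantitative I would apply a Penrose-type tree-partition scheme (equivalently, Kotecký--Preiss / Ursell-function estimates) to bound the surviving coefficient by $(2e(\mathfrak{d}+1))^{k+1}(k+1)$ in magnitude. Multiplying by the monomial count gives the bound $c_k=2e^2\mathfrak{d}(\mathfrak{d}+1)\tau^k(k+1)$ on the $\ell_1$-norm of the coefficients of $p_k$, with $\tau=(1+e(\mathfrak{d}-1))(2e(\mathfrak{d}+1))$. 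The derivative bound is immediate: since $p_k$ is homogeneous of degree $k$ with coefficient $\ell_1$-norm at most $c_k$ and $|\lambda_a|\le\lmax$, we get $|\partial_{\lambda_a}p_k|\le k\,c_k\,\lmax^{k-1}=\exp(\Od(k))$.

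For the computational claim, after $\mathcal{O}(LM\mathfrak{d}\log\mathfrak{d})$ preprocessing to build, from the $L$-local supports of the $M$ terms, the adjacency lists of $G$ and a lookup structure, I would enumerate all connected clusters of size $k$ meeting $\sigma_a$'s support by a bounded-depth search on $G$, at cost $\mathcal{O}(k\mathfrak{d})$ per cluster over the $\mathcal{O}(\mathfrak{d}(1+e(\mathfrak{d}-1))^k)$ clusters, and for each cluster compute its coefficient by summing over the $\le 8^k$ relevant orderings/configurations, each a $\poly(k)$ operation tracking the running Pauli product and phase on the $\le Lk$ involved qubits together with the rational weight; this yields the stated runtime. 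The step I expect to be the main obstacle is the explicit per-monomial coefficient bound: the linked-cluster cancellation itself is classical, but extracting the precise constants $(2e(\mathfrak{d}+1))^{k+1}(k+1)$ and lining the combinatorics up against the monomial count so that $c_k$ comes out in exactly the claimed form requires careful, bookkeeping-intensive tree-partition estimates.
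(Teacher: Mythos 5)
This statement is not proved in the paper at all: it is imported verbatim as Theorem~3.1 of~\cite{haah2024learning}, so there is no in-paper argument to compare against. Judged against the proof in that reference, your outline follows the same route -- Taylor-expand numerator and denominator of $\tr(\sigma_a e^{-\beta H})/\tr(e^{-\beta H})$, invoke linked-cluster cancellation so that only multisets of terms forming a connected cluster with $\sigma_a$ survive at order $\beta^k$, count connected clusters in a degree-$\mathfrak{d}$ graph to get the $e\mathfrak{d}(1+e(\mathfrak{d}-1))^k$ monomial count, and control the per-cluster coefficient by a tree/Ursell-type estimate. So the strategy is the right one and matches the source.

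That said, as a proof your write-up has a genuine gap exactly where you flag it: the bound $(2e(\mathfrak{d}+1))^{k+1}(k+1)$ on an individual monomial's coefficient is the entire content of the theorem beyond standard cluster combinatorics, and you assert it by name-dropping Penrose partition schemes and Kotecký--Preiss rather than deriving it. Without that derivation, neither the specific constants, nor $c_k$, nor the radius of absolute convergence that justifies your rearrangement of the double series is established -- note that your convergence argument is circular as written, since you use the coefficient bounds to justify the expansion whose existence you are in the middle of proving (this is fixable by working at sufficiently small $\beta$ first, but it needs to be said). Similarly, the $8^k$ in the runtime is matched to the statement rather than explained; in the source it arises from enumerating the Pauli decomposition of products supported on a size-$k$ cluster, and your "orderings/configurations" gloss does not pin this down. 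In short: correct skeleton, same approach as the cited work, but the quantitative core is declared rather than proved.
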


\noindent When $\mathfrak{d} = \mathcal{O}(1)$ (e.g.~in the 1D case, $\mathfrak{d} = 4$), $c_k = \exp(Ck)$ for some absolute constant $C > 0$, and the runtime to compute all of the terms in the Taylor expansion up to degree $k$ is $\mathcal{O}(L\exp(C'k))$ for some absolute constant $C'>0$.

\subsection{Anti-Concentration Tools}
\label{sec:anticoncentration}

A fundamental tool in smoothed analysis is the fact that low-degree polynomials in Gaussian random variables anti-concentrate.

\begin{lemma}[Carbery-Wright]\label{lem:carbery}
    Let $p: \R^n\to\R$ be a degree-$d$ polynomial. Then for any $u > 0$,
    \begin{equation}
        \Pr[g\sim\calN(0,\Id)]{|p(g) - u| \le \epsilon\cdot \Var(p)^{1/2}} \le \mathcal{O}(\epsilon^{1/d})\,,
    \end{equation}
    where $\Var(p)$ denotes the variance of $p(g)$ for $g\sim\calN(0,\Id)$.
\end{lemma}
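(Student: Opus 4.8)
This statement is the Carbery--Wright anti-concentration inequality \cite{carbery2001distributional}; it is used as a black box in the paper, and in a write-up I would cite it directly. For the sake of the exercise, here is the shape of its proof. \emph{Reductions.} Replace $p$ by $q \triangleq p - u$, still a degree-$d$ polynomial. Writing $\gamma$ for the standard Gaussian measure $\calN(0,\Id)$ on $\R^n$, we have $\E_{g\sim\gamma}[q(g)^2] = \Var(p) + (\E_{g\sim\gamma}[p(g)] - u)^2 \ge \Var(p)$, hence $\{|p(g) - u| \le \epsilon\,\Var(p)^{1/2}\} \subseteq \{|q(g)| \le \epsilon\,\norm{q}_{L^2(\gamma)}\}$. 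Rescaling $q$ so that $\norm{q}_{L^2(\gamma)} = 1$, it suffices to show that for every degree-$d$ polynomial $q$ with $\norm{q}_{L^2(\gamma)} = 1$ and every $t \in (0,1)$,
\begin{equation}\label{eq:cw-goal}
    \Pr[g\sim\gamma]{|q(g)| \le t} \le \mathcal{O}(t^{1/d})\,.
\end{equation}
A convenient route to \eqref{eq:cw-goal} is via a negative moment: if $\E_{g\sim\gamma}[\,|q(g)|^{-\alpha}\,] \le C(d)$ for some $\alpha$ just below $1/d$, then Markov's inequality gives $\Pr[g\sim\gamma]{|q(g)| \le t} \le C(d)\,t^{\alpha}$, and the clean exponent $1/d$ is recovered by letting $\alpha \uparrow 1/d$ (absorbing the ensuing $(1-d\alpha)^{-1}$-type blow-up) or by a direct argument at the level of sublevel sets.

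\emph{The two pillars.} This negative-moment bound is a \emph{reverse} H\"older inequality, false for general functions and valid here only because $\gamma$ is log-concave; it rests on two facts. First, \emph{moment comparison}: the Ornstein--Uhlenbeck semigroup is hypercontractive (Nelson), giving $\norm{q}_{L^p(\gamma)} \le (p-1)^{d/2}\,\norm{q}_{L^2(\gamma)}$ for all $p \ge 2$ (and more generally $\norm{q}_{L^s(\mu)} \le (Cs)^d\,\norm{q}_{L^1(\mu)}$ for any log-concave $\mu$). This keeps all positive moments of $|q(g)|$ comparable to $\norm{q}_{L^2(\gamma)} = 1$, which rules out the degenerate scenario in which $|q|$ is tiny on most of the mass but enormous on the rest. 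Second, the \emph{one-dimensional estimate}: a univariate degree-$d$ polynomial $r$ with $\norm{r}_{L^2(\gamma_1)} = 1$ satisfies $\gamma_1(\{|r| \le t\}) = \mathcal{O}(t^{1/d})$; this follows from a Remez-type inequality for polynomials (a degree-$d$ polynomial cannot be uniformly small on a large set), together with Gaussian tail bounds and the hypercontractive estimate to confine attention to a bounded interval. Carbery--Wright's technical contribution is to combine these two pillars into the sharp, \emph{dimension-independent} bound \eqref{eq:cw-goal} --- roughly, by fibering $\gamma$ along lines, applying the one-dimensional estimate on each fiber, and reassembling using that the one-dimensional marginals and conditionals of a log-concave measure are again log-concave (Pr\'ekopa--Leindler / Borell), with the moment comparison controlling the fiberwise normalizations. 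This is the part I would not reprove and would simply cite.

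\emph{Main obstacle.} The crux --- and the only genuinely hard point --- is the dimension-\emph{independence} of the constant in \eqref{eq:cw-goal}. Slicing along a single coordinate, or along any one direction, loses a factor $\mathrm{poly}(n)$, because a polynomial's $L^2$-energy can be spread uniformly over all $n$ directions: the canonical obstruction is $q = (g_1^2 + \cdots + g_n^2 - n)/\sqrt{2n}$, which is essentially a standard Gaussian --- hence perfectly anti-concentrated, $\gamma(\{|q| \le t\}) = \Theta(t)$ --- yet has conditional standard deviation only $\Theta(n^{-1/2})$ in every coordinate direction, so no single slice detects it. Circumventing this is exactly why the proof must use log-concavity of the full Gaussian measure rather than its product structure. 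For the present paper the subtlety is moot --- every application of this lemma involves only $N = \mathcal{O}(1)$ variables --- but the cited theorem supplies the sharp dimension-free form stated above.
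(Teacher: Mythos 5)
The paper states this lemma without proof, as a standard external tool attributed to Carbery--Wright \cite{carbery2001distributional}, so your choice to cite it as a black box matches the paper's treatment exactly. Your accompanying sketch (reduction to a normalized sublevel-set bound, negative moments via hypercontractive/log-concave moment comparison plus a one-dimensional Remez-type estimate, with dimension-independence as the genuinely hard point) is an accurate account of the cited result and raises no issues.
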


\noindent We will use the following basic lower bound for the variance of a polynomial under a smoothed input, in terms of the leading terms in its Hermite expansion.

\begin{lemma}\label{lem:anticonc}
    Let $f: \R^p\to\R$ be a non-constant, degree-$t$ polynomial with normalized Hermite expansion $f = \sum_\alpha b_\alpha h_\alpha$, and let $\sum_{\alpha \ \mathrm{maximal}} b_\alpha^2$ denote the sum of squares of the coefficients for $\alpha$ which are maximal among terms appearing in the Hermite expansion. Then for $h\sim \mathcal{N}(\mu,\sigma)$, we have
    \begin{equation}
        \Var{f(h)} \ge \sigma^{2t} \sum_{\alpha \ \mathrm{maximal}} b_\alpha^2\,.
    \end{equation}
\end{lemma}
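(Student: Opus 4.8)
The plan is to express the variance of $f(h)$ for $h\sim\mathcal{N}(\mu,\sigma)$ in terms of the Hermite expansion of $f$ and then discard all but the maximal-degree contributions, which are exactly the ones whose variance scales like $\sigma^{2t}$. Concretely, I would first reduce to the standard Gaussian by writing $h = \mu + \sigma g$ with $g\sim\mathcal{N}(0,\Id_p)$, so that $f(h) = f(\mu + \sigma g) =: \tilde f_\sigma(g)$, a polynomial in $g$ whose coefficients now depend on $\mu$ and $\sigma$. The key observation is that the \emph{top-degree part} of $\tilde f_\sigma$ in $g$ is completely insensitive to the shift by $\mu$: if $f = \sum_\alpha b_\alpha h_\alpha$ with the maximal multi-indices $\alpha$ all having $|\alpha| = t$, then the degree-$t$ homogeneous part of $f$ in the monomial basis is $\sum_{|\alpha|=t} b_\alpha \, x^\alpha$ (since each normalized Hermite polynomial $h_\alpha$ has leading monomial $x^\alpha$ with coefficient $1$ after normalization, and lower-degree Hermite polynomials contribute nothing at degree $t$), and substituting $x = \mu + \sigma g$ produces $\sigma^t \sum_{|\alpha|=t} b_\alpha\, g^\alpha$ as the degree-$t$ part in $g$, plus strictly lower-degree terms in $g$.

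Next I would use the fact that for a polynomial in $g\sim\mathcal{N}(0,\Id)$, the variance decomposes orthogonally across Hermite levels: writing $\tilde f_\sigma(g) = \sum_\gamma \hat c_\gamma h_\gamma(g)$, we have $\Var(\tilde f_\sigma(g)) = \sum_{\gamma\neq 0} \hat c_\gamma^2$. Since the Hermite expansion of a degree-$t$ polynomial only has nonzero coefficients up to level $t$, and the level-$t$ Hermite coefficients are determined solely by the degree-$t$ homogeneous part of $\tilde f_\sigma$ in the monomial basis (again because $h_\gamma$ for $|\gamma|=t$ has leading monomial $g^\gamma$ with unit coefficient, and lower-level Hermite polynomials have no degree-$t$ monomials), the level-$t$ Hermite coefficients of $\tilde f_\sigma$ are precisely $\sigma^t b_\alpha$ for each maximal $\alpha$. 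Therefore
\begin{equation}
    \Var(f(h)) = \sum_{\gamma\neq 0}\hat c_\gamma^2 \ge \sum_{|\gamma| = t}\hat c_\gamma^2 = \sigma^{2t}\sum_{\alpha\ \mathrm{maximal}} b_\alpha^2\,,
\end{equation}
which is the claimed bound. (If the maximal $\alpha$ appearing in the expansion are not all of the same total degree, one takes $t$ to be that common maximal degree; the statement as phrased implicitly fixes $t$ as the degree of $f$, so the maximal $\alpha$ are the degree-$t$ ones, and the argument goes through verbatim.)

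The only mildly delicate point — and the step I would be most careful about — is the bookkeeping around normalization: one must confirm that with the \emph{normalized} Hermite polynomials $h_\alpha$ (i.e.\ $\E[h_\alpha(g)^2] = 1$), the leading monomial coefficient is indeed $1$, so that reading off the top-degree homogeneous part in the monomial basis directly recovers the $b_\alpha$ for maximal $\alpha$ without stray normalization factors. This is a standard property of the probabilists' Hermite polynomials (the monic normalization and the $L^2$ normalization agree at the leading coefficient up to $\sqrt{\alpha!}$ factors that cancel between the two appearances of $h_\alpha$), but it is worth stating explicitly. Everything else — the shift-invariance of the top-degree part, the Hermite-orthogonality decomposition of the variance, and dropping the non-maximal terms — is routine. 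No genuine obstacle arises; the lemma is essentially a clean consequence of the fact that the top-degree behavior of a polynomial is unaffected by translation and scales homogeneously under dilation by $\sigma$.
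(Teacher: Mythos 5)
Your proof is correct and is essentially the paper's argument: the paper's one-line proof invokes the identity $h_\alpha(\mu + \sigma g) = \sigma^{|\alpha|} h_\alpha(g) + \sum_{\beta \prec \alpha} e_\beta h_\beta(g)$ together with Hermite orthogonality, which is precisely your observation that the affine substitution leaves the leading Hermite coefficients intact up to a factor $\sigma^{|\alpha|}$, and your normalization bookkeeping (the $\sqrt{\alpha!}$ factors cancelling between the monomial and $L^2$ normalizations) checks out. The one caveat is the point you flag at the end: in the paper $\prec$ is the component-wise partial order on multi-indices, so a ``maximal'' $\alpha$ need not have total degree $t$; your monomial-basis route only captures the maximal terms of top total degree, whereas the paper's identity shows directly that every component-wise-maximal $\alpha$ retains coefficient exactly $\sigma^{|\alpha|} b_\alpha$ (contributing $\sigma^{2|\alpha|} b_\alpha^2 \ge \sigma^{2t} b_\alpha^2$ when $\sigma \le 1$, the relevant regime for the smoothing parameter). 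This does not affect correctness of what you prove, only which set of ``maximal'' indices your lower bound covers.
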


\begin{proof}
    This follows immediately from the fact that $h_\alpha(\mu + \sigma g) = \sigma^{|\alpha|} h_\alpha(g) + \sum_{\beta \prec \alpha} e_\beta h_\beta(g)$.
\end{proof}

\subsection{Algebraic Geometry Tools}
\label{sec:alggeo}

In this section, we collect a key tool from algebraic geometry, specifically elimination theory, that will be crucial for showing quantitative identifiability in the polynomial systems that we design.

Throughout this section, let $S = \C[t_1,\ldots,t_m]$ denote a polynomial ring generated by ``coefficient variables'' $t_1,\ldots,t_m$, and let $q_1,\ldots,q_N \in S[X_1,\ldots,X_N]$. In our application, we will consider $q_1,\ldots,q_N$ for which only the constant terms are non-scalar elements of $S$, but in this section we work in full generality.

We will often consider \emph{specializations} $\phi: S\to \C$ which associate to every coefficient variable $t_1,\ldots,t_m$, and thus every polynomial thereof, a scalar quantity. Under such a $\phi$, given polynomials $q_1,\ldots,q_N \in S[X_1,\ldots,X_N]$, we denote by $\phi(q_i)$ their specializations to polynomials in $\C[X_1,\ldots,X_N]$. We will be interested in the size of the common zero set $\{\phi(q_1) = \cdots = \phi(q_N) = 0\}$ for ``generic'' choices of $\phi$.

\newcommand{\fibersize}{D}
\begin{definition}[Generic finiteness]\label{def:finiteness}
    We say that the polynomial system $\{q_1 = \cdots = q_N = 0\}$ satisfies \emph{generic finiteness} if there is an integer $\fibersize$ and a dense Zariski-open $U\subseteq \C^m$ such that for every $u\in U$, if $\phi(t_i) = u_i$, then the \emph{affine fiber} $V_\phi$ - i.e. the common zero set $\{\phi(q_1) = \cdots = \phi(q_N) = 0\}$ - is of size $\fibersize$.
\end{definition}

\begin{lemma}[Product formula]\label{lem:poisson}
    Let $f, q_1,\ldots,q_N\in S[X_1,\ldots,X_N]$, and suppose the $q_i$'s satisfy generic finiteness (Definition~\ref{def:finiteness}). Then the function
    \begin{equation}
        \prod_{\bx\in V_\phi} (t - \phi(f)(\bx))^{m(\bx)}\,, \label{eq:function_of_coeffs}
    \end{equation}
    regarded as a function in $t$ and the coefficient variables $t_1,\ldots,t_m$, is an element of $S[w]$, that is, a polynomial in $t, t_1,\ldots,t_m$. Here $m(\bx)$ denotes the scheme-theoretic multiplicity of $\bx$ on $V_\phi$, which is $1$ if the Jacobian of the map $(\phi(q_1),\ldots,\phi(q_N))$ is invertible at $\bx$.
\end{lemma}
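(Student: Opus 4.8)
The plan is to reduce the claim to the classical theory of resultants and elimination, making essential use of the generic finiteness hypothesis. The key object is the polynomial $\prod_{\bx \in V_\phi}(t - \phi(f)(\bx))^{m(\bx)}$, which one should recognize as (up to a leading coefficient) a ``projected characteristic polynomial'': it is the polynomial in $t$ whose roots, counted with multiplicity, are the values of $f$ on the fiber. My first step would be to introduce an auxiliary variable and consider the polynomial system $\{q_1 = \cdots = q_N = 0,\ t - f = 0\}$ in the variables $X_1,\ldots,X_N$ over the enlarged coefficient ring $S[t]$. Under generic finiteness the ideal generated by $q_1,\ldots,q_N$ in $S[X_1,\ldots,X_N]$ defines, after base change to the generic point, a zero-dimensional scheme of length $\fibersize$; one then forms the elimination ideal obtained by eliminating $X_1,\ldots,X_N$ from this enlarged system. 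By elimination theory (e.g.\ via iterated resultants, or via Gröbner bases over the field of fractions $\mathrm{Frac}(S)$ followed by clearing denominators, cf.\ \cite{cox1998using}), this elimination ideal in $S[t]$ is principal when restricted to the generic locus, and its generator is precisely the desired polynomial up to normalization.

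More concretely, the cleanest route is through the \emph{multiplication-operator / trace form} description of the fiber. Over the field $K = \mathrm{Frac}(S)$, generic finiteness says the quotient algebra $A = K[X_1,\ldots,X_N]/(q_1,\ldots,q_N)$ is a finite-dimensional $K$-vector space of dimension $\fibersize$, whose points over $\overline{K}$ are exactly the generic fiber and whose scheme-theoretic multiplicities are the $m(\bx)$. The image $\bar f$ of $f$ in $A$ acts by multiplication as a $K$-linear endomorphism $M_f$ of $A$, and the characteristic polynomial $\det(t\cdot\mathrm{Id} - M_f)$ is exactly $\prod_{\bx \in V_\phi}(t - \phi(f)(\bx))^{m(\bx)}$ — this is the standard fact that the eigenvalues of a multiplication operator on a zero-dimensional algebra are the values of the function at the points, with algebraic multiplicities matching scheme multiplicities. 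So the task becomes: show that $\det(t\cdot\mathrm{Id} - M_f)$ has coefficients in $S[t]$, not merely in $K[t]$. For this I would choose a monomial basis of $A$ over $K$ coming from a Gröbner basis of $(q_1,\ldots,q_N)$ computed over $K$; the matrix of $M_f$ then has entries in $K$, a priori with denominators, so its characteristic polynomial lies in $K[t]$. The integrality upgrade — that these coefficients are honestly polynomial in $t_1,\ldots,t_m$ — follows because each coefficient is a regular function on the dense open set $U$ where the fiber has constant size $\fibersize$, hence extends to a polynomial on all of $\C^m$ by normality of affine space (a rational function on $\C^m$ regular on a dense open set is a polynomial); equivalently one can invoke the properness of the elimination map / the fact that the resultant construction produces genuine polynomials.

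The step I expect to be the main obstacle is the bookkeeping around \emph{multiplicities} and making the multiplication-operator argument work uniformly over $\phi$ rather than just at the generic point. One must argue that for $u$ in a (possibly smaller) dense Zariski-open set, specializing the characteristic polynomial of $M_f$ computed over $K$ to $t_i = u_i$ yields the characteristic polynomial of the corresponding multiplication operator on the \emph{specialized} algebra $\C[X]/(\phi(q_1),\ldots,\phi(q_N))$, which in turn equals $\prod_{\bx \in V_\phi}(t - \phi(f)(\bx))^{m(\bx)}$ — i.e.\ that Gröbner bases and the induced monomial basis behave well under generic specialization (no ``bad primes''), and that scheme multiplicities are preserved. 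This is standard but requires care: the safest framing is that generic finiteness already packages exactly the semicontinuity input one needs, since it asserts the fiber size is the constant $\fibersize$ on $U$, which forces the local multiplicity structure to be the generic one. Once this compatibility is in hand, the identity of the two sides as elements of $S[t]$ follows because they agree on the dense set $U \times \C_t$. I would present the multiplication-operator viewpoint as the main line of argument and relegate the Gröbner-basis specialization lemma to a cited standard fact about generic flatness.
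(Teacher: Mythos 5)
Your main line of attack --- realizing $\prod_{\bx\in V_\phi}(t-\phi(f)(\bx))^{m(\bx)}$ as the characteristic polynomial of the multiplication operator $M_f$ on the zero-dimensional quotient algebra, and handling specialization via generic flatness/freeness --- is exactly the paper's strategy (the paper uses Grothendieck's generic freeness lemma to get a free $S_g$-module and cites the standard eigenvalue fact for multiplication maps). However, your justification of the decisive step, upgrading the coefficients from rational functions to honest elements of $S$, contains a false claim: a rational function on $\C^m$ that is regular on a dense Zariski-open set need \emph{not} be a polynomial (e.g.\ $1/t_1$ is regular on the dense open set $\{t_1\neq 0\}$). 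Regularity on a dense open set only gives you membership in a localization $S_g$; to conclude membership in $S$ by a Hartogs-type/normality argument you would need the complement of $U$ to have codimension $\ge 2$, whereas here the bad locus $\{g=0\}$ is a hypersurface. Your suggested fallbacks do not rescue this: the resultant construction does produce genuine polynomials, but the Poisson product formula identifying the resultant with the product over the affine fiber requires that $q_1,\ldots,q_N$ share no projective zeros at infinity --- and the paper explicitly notes that in its applications they \emph{do}, which is the whole reason this bespoke lemma is needed (if there are common zeros at infinity the resultant vanishes identically). Likewise, "properness of the elimination map" is a projective statement and does not apply to affine elimination here.

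The paper closes this gap with the determinantal trick (Cayley--Hamilton over $S$): since $A$ is a finitely generated $S$-module, the endomorphism $m_f$ satisfies a monic polynomial $P\in S[t]$, so the eigenvalues of $m_f$ are integral over $S$; because $S=\C[t_1,\ldots,t_m]$ is integrally closed, these eigenvalues, and hence the coefficients of $\chi_g(t)$ (symmetric functions of them), lie in $S$. If you replace your normality argument with this integrality argument, your proof matches the paper's. One further point to be careful about: before forming the coordinate ring one should saturate the homogenized ideal at $X_0$ (as the paper does) so that the module you apply generic freeness to has no components supported only over special fibers; working naively over $\mathrm{Frac}(S)$ is fine for the generic-point identity but makes the specialization compatibility you flag as "the main obstacle" harder to pin down.
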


\noindent We defer a proof of this to Appendix~\ref{app:productproof}. We remark that this is a generalization of the \emph{Poisson product formula}~\cite[Chapter 3.3, Theorem 3.4]{cox1998using}, which concerns the special case where $q_1,\ldots,q_N$ share no projective zeroes at infinity, in which case the function in Eq.~\eqref{eq:function_of_coeffs} is, up to scaling, the \emph{resultant} $\mathrm{Res}(f^{\sf hom},p^{\sf hom}_1\ldots,p^{\sf hom}_N)$ of the homogenizations of the polynomials. In our applications, $q_1,\ldots,q_N$ do actually share projective zeroes at infinity however, necessitating the use of our bespoke Lemma~\ref{lem:poisson}.

Finally and separately, we will need the following quantitative version of the classic inverse function theorem. This is an analytic statement rather than an algebraic one, but we will apply it exclusively to polynomial maps.

\begin{lemma}[Quantitative inverse function theorem]\label{lem:inverse}
    Let $F: \C^N\to\C^N$ be a continuous map, and for $\bx\in \C^N$, define $\by = F(\bx)$. Suppose $J_F(\bx)$ is nonsingular.
    
    Let $\eta,\smoothness > 0$ be parameters, with $\eta < \sigma_{\min}(J_F(\bx))/\smoothness$ and define 
    \begin{equation}
        \xi \triangleq \sigma_{\min}(J_F(\bx)) \eta - \frac{1}{2}\eta^2\smoothness > \frac{1}{2}\sigma_{\min}(J_F(\bx))\eta\,. \label{eq:xieta}
    \end{equation} 
    Suppose $\norm{J_F(\bx') - J_F(\bx'')}_{\sf op} \le \smoothness \norm{\bx' - \bx''}_2$ for all $\bx',\bx''\in B_\eta(\bx)$
    Then for any $\tilde{\by} \in B_\xi(\by)$, there is a unique point $\tilde{\bx} \in B_\eta(\bx)$ for which $F(\tilde{\bx}) = \tilde{\by}$.
\end{lemma}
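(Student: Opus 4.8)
The plan is to establish the quantitative inverse function theorem via a contraction-mapping (Newton-Kantorovich-style) argument on the ball $B_\eta(\bx)$. I would fix $\tilde{\by} \in B_\xi(\by)$ and define the map $\Phi: B_\eta(\bx) \to \C^N$ by $\Phi(\bz) = \bz - J_F(\bx)^{-1}(F(\bz) - \tilde{\by})$, whose fixed points are exactly the solutions of $F(\bz) = \tilde{\by}$. The strategy has two parts: (i) show $\Phi$ maps $B_\eta(\bx)$ into itself, and (ii) show $\Phi$ is a contraction on $B_\eta(\bx)$; then Banach's fixed point theorem gives a unique fixed point $\tilde{\bx}$. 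Since $B_\eta(\bx)$ is a closed ball in $\C^N \cong \R^{2N}$ (I would take it closed, or first work on a slightly smaller closed ball and note the fixed point lies in the open one), it is a complete metric space, so this is legitimate.

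For the contraction estimate, I would write, for $\bz', \bz'' \in B_\eta(\bx)$,
\[
\Phi(\bz') - \Phi(\bz'') = J_F(\bx)^{-1}\bigl( (F(\bz') - F(\bz'')) - J_F(\bx)(\bz' - \bz'')\bigr),
\]
and bound the bracketed term using the fundamental theorem of calculus along the segment from $\bz''$ to $\bz'$ together with the hypothesis $\norm{J_F(\bx') - J_F(\bx'')}_{\sf op} \le \smoothness\norm{\bx'-\bx''}_2$, which gives $\norm{F(\bz')-F(\bz'') - J_F(\bx)(\bz'-\bz'')}_2 \le \smoothness \eta \norm{\bz'-\bz''}_2$ since both points lie within distance $\eta$ of $\bx$. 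Multiplying by $\norm{J_F(\bx)^{-1}}_{\sf op} = 1/\sigma_{\min}(J_F(\bx))$ yields a Lipschitz constant $\eta\smoothness/\sigma_{\min}(J_F(\bx)) < 1$ by the assumption $\eta < \sigma_{\min}(J_F(\bx))/\smoothness$, so $\Phi$ is a strict contraction.

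For the self-mapping part, I would bound $\norm{\Phi(\bz) - \bx}_2$ for $\bz \in B_\eta(\bx)$ by splitting $\Phi(\bz) - \bx = \bigl(\Phi(\bz) - \Phi(\bx)\bigr) + \bigl(\Phi(\bx) - \bx\bigr)$. The first piece is at most $\frac{\eta\smoothness}{\sigma_{\min}(J_F(\bx))}\norm{\bz - \bx}_2 \le \frac{\eta^2\smoothness}{\sigma_{\min}(J_F(\bx))}$ by the contraction bound; the second piece is $\norm{J_F(\bx)^{-1}(F(\bx) - \tilde{\by})}_2 = \norm{J_F(\bx)^{-1}(\by - \tilde{\by})}_2 \le \xi/\sigma_{\min}(J_F(\bx))$ since $\tilde{\by} \in B_\xi(\by)$. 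Adding these, I want
\[
\frac{\eta^2\smoothness}{\sigma_{\min}(J_F(\bx))} + \frac{\xi}{\sigma_{\min}(J_F(\bx))} \le \eta,
\]
i.e. $\xi \le \sigma_{\min}(J_F(\bx))\eta - \eta^2\smoothness$. The definition $\xi = \sigma_{\min}(J_F(\bx))\eta - \tfrac12\eta^2\smoothness$ in Eq.~\eqref{eq:xieta} is actually slightly larger than this, so I would need to be a little more careful: the cleanest fix is to use the sharper self-mapping bound in which the first piece is controlled by the contraction estimate applied on the segment, giving total displacement at most $\frac{1}{1-L}\cdot\frac{\xi}{\sigma_{\min}}$ where $L = \eta\smoothness/\sigma_{\min} < 1$; alternatively, one checks that the worst case for the first piece is not $\eta^2\smoothness/\sigma_{\min}$ but is improved because $\Phi(\bx)$ is already within $\xi/\sigma_{\min}$ of $\bx$. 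I expect this bookkeeping — reconciling the constant $\tfrac12$ in $\xi$ with the contraction factor — to be the only genuinely delicate point; everything else is a routine application of Banach fixed point. Once the fixed point $\tilde{\bx}\in B_\eta(\bx)$ is produced, uniqueness within $B_\eta(\bx)$ is immediate from the contraction property, completing the proof.
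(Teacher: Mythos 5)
Your contraction map is the paper's map in shifted coordinates (the paper works with $\vdelta = \mathbf{z} - \bx$ and the map $\vdelta \mapsto J_F(\bx)^{-1}(\tilde{\by} - \by - \mathrm{rem}(\vdelta))$, which is exactly your $\Phi$), and your contraction estimate is correct and identical to the paper's, giving Lipschitz constant $L = \eta\smoothness/\sigma_{\min}(J_F(\bx)) < 1$. The one genuine gap is exactly the point you flagged: the self-mapping step. Unfortunately, neither of your two proposed fixes closes it. The $\frac{1}{1-L}$ route requires $\xi/\sigma_{\min}(J_F(\bx)) \le (1-L)\eta$, i.e.\ $\xi \le \sigma_{\min}(J_F(\bx))\eta - \smoothness\eta^2$, which is strictly weaker than the lemma's $\xi = \sigma_{\min}(J_F(\bx))\eta - \tfrac{1}{2}\smoothness\eta^2$; so that bound still misses by the same factor of $2$. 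Your second suggestion misattributes the source of the improvement: it has nothing to do with $\Phi(\bx)$ being close to $\bx$.

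The correct resolution, which is what the paper does, is to bound $\Phi(\mathbf{z}) - \bx$ in one shot rather than splitting off $\Phi(\bx)$. Writing $\vdelta = \mathbf{z} - \bx$, one has $\Phi(\mathbf{z}) - \bx = J_F(\bx)^{-1}\bigl(\tilde{\by} - \by - \mathrm{rem}(\vdelta)\bigr)$ with $\mathrm{rem}(\vdelta) = F(\bx+\vdelta) - F(\bx) - J_F(\bx)\vdelta = \int_0^1 (J_F(\bx + t\vdelta) - J_F(\bx))\vdelta\,\mathrm{d}t$. The Lipschitz hypothesis on $J_F$ gives $\norm{\mathrm{rem}(\vdelta)}_2 \le \int_0^1 \smoothness t \norm{\vdelta}_2^2\,\mathrm{d}t = \tfrac{1}{2}\smoothness\norm{\vdelta}_2^2 \le \tfrac{1}{2}\smoothness\eta^2$; the factor $\tfrac{1}{2}$ comes from $\int_0^1 t\,\mathrm{d}t$, because the Jacobian at $\bx + t\vdelta$ deviates from $J_F(\bx)$ by only $\smoothness t\norm{\vdelta}_2$, not by the worst-case $\smoothness\eta$ that your contraction-based bound uses. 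Hence $\norm{\Phi(\mathbf{z}) - \bx}_2 \le \sigma_{\min}(J_F(\bx))^{-1}\bigl(\xi + \tfrac{1}{2}\smoothness\eta^2\bigr) = \eta$ exactly, by the definition of $\xi$ in Eq.~\eqref{eq:xieta}. With that substitution your argument is complete and coincides with the paper's.
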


\begin{proof}
    Consider the map $H$ that sends $\vdelta$ to $J_F(\bx)^{-1}(\tilde{\by} - \by - \mathrm{rem}(\vdelta))$, where $\mathrm{rem}$ is the Taylor remainder
    \begin{equation}
        \mathrm{rem}(\vdelta) \triangleq F(\bx + \vdelta) - F(\bx) - J_F(\bx)\delta = \int^1_0 (J_F(\bx + t\vdelta) - J_F(\bx)) \vdelta\, \mathrm{d}t\,.
    \end{equation}
    If $\norm{\vdelta}_2 \le \eta$, then $\norm{\mathrm{rem}(\vdelta)}_2 \le \int^1_0 \eta^2\smoothness t\,\mathrm{d}t = \frac{1}{2}\eta^2\smoothness$. By assumption $\norm{\tilde{\by} - \by}_2 \le \xi$, so 
    \begin{equation}
        \norm{H(\vdelta)}_2 \le \sigma_{\min}(J_F(\bx))^{-1} (\xi + \frac{1}{2}\eta^2\smoothness) = \eta
    \end{equation}
    by Eq.~\eqref{eq:xieta}. In other words, $H(B_\eta(0)) \subseteq B_\eta(0)$. Furthermore, one can show that it is a contraction. Indeed, for any $\vdelta,\vdelta' \in B_\eta(0)$,
    \begin{align}
        \MoveEqLeft \norm{H(\vdelta) - H(\vdelta')}_2 \\
        &= \norm{J_F(\bx)^{-1}(\mathrm{rem}(\vdelta') - \mathrm{rem}(\vdelta))}_2 \\ 
        &\le \sigma_{\min}(J_F(\bx))^{-1} \norm{\mathrm{rem}(\vdelta') - \mathrm{rem}(\vdelta)}_2 \\
        &\le \sigma_{\min}(J_F(\bx))^{-1} \int^1_0 \Bigl(\norm{(J_F(\bx + t\vdelta') - J_F(\bx))(\vdelta'-\vdelta)}_2 + \norm{(J_F(\bx + t\vdelta') - J_F(\bx + t\vdelta))\vdelta}_2\Bigr)\,\mathrm{d}t\\
        &= \sigma_{\min}(J_F(\bx))^{-1} \smoothness\eta\norm{\vdelta' - \vdelta}_2\,.
    \end{align}
    By our assumption that $\eta < \sigma_{\min}(J_F(\bx))/\smoothness$, we conclude that $H$ is a contraction on $B_\eta(0)$. By Banach's fixed-point theorem, it has a unique fixed point $\vdelta^\circ \in B_\eta(0)$. Multiplying both sides of $H(\vdelta^\circ) = \vdelta^\circ$ by $J_F(\bx)^{-1}$ and rearranging, we find that
    \begin{equation}
        \tilde{\by} - \by = F(\bx + \vdelta^\circ) - F(\bx)\,.
    \end{equation} As $\by = F(\bx)$, if we define $\tilde{\bx} = \bx + \vdelta^\circ$ then the claim follows.
\end{proof}

\subsection{Proof of Lemma~\ref{lem:poisson}}
\label{app:productproof}

Let $S = \C[t_1,\ldots,t_m]$. and let $q_1\ldots,q_N \in S[X_1,\ldots,X_N]$ as in Section~\ref{sec:alggeo}. Introducing a homogenizing variable $X_0$, denote their homogenizations by $q^{\sf hom}_1,\ldots,q^{\sf hom}_N \in S[X_0,\ldots,X_N]$.

Let $I\subset S[X_1,\ldots,X_N]$ denote the ideal generated by $q_1,\ldots,q_N$, and let $I^h\subseteq S[X_1,\ldots,X_N]$ denote the ideal generated by $q^{\sf hom}_1,\ldots,q^{\sf hom}_N$.

\begin{definition}[Saturated ideals]
    Given ideals $I, J$ in $S[X_1,\ldots,X_N]$, the \emph{ideal quotient} $(I:J)$ is the ideal consisting of $f\in S[X_1,\ldots,X_N]$ for which  $fJ \subseteq I$. The \emph{saturation} of $I$ with respect to $J$ is $(I: J^\infty) = \cup_{k\ge 1} (I: J^k)$. As $S$ is \emph{Noetherian}, there is some finite $\overline{k}$ such that $(I: J^\infty) = (I: J^{\overline{k}})$.
\end{definition}

\noindent The polynomials $q_1,\ldots,q_N$ may have solutions ``at infinity,'' which we will remove using saturation. First, compute $I' \triangleq (I : X_0^\infty) = (I : X_0^{\overline{k}})$, and then de-homogenize by specializing the $X_0$ variable to $1$, thereby obtaining the new ideal $I^* \subseteq S[X_1,\ldots,X_N]$. Define the coordinate ring
\begin{equation}
    A \triangleq S[X_1,\ldots,X_N] / I^*\,.
\end{equation}
In this notation, generic finiteness is equivalent to the condition that under any specialization as in Definition~\ref{def:finiteness}, we have that $\dim(A\otimes_S \C) = \fibersize$.

\begin{definition}[Localization]
    Given a ring $A$ and an element $g\in A$, the \emph{localization} of $A$ by $g$, denoted $A_g$, is the ring consisting of elements of the form $a/g^i$, where $a\in A$ and $i\in\mathbb{Z}_{\ge 0}$, with addition and multiplication defined in the natural way, and with $a/g^i$ and $a'/g^{i'}$ equivalent if there exists $j$ such that $g^j(a g^{i'} - a' g^i) = 0$.
\end{definition}

\begin{lemma}[Generic freeness]
    If $q_1,\ldots,q_N$ satisfy generic finiteness (Definition~\ref{def:finiteness}), there exists nonzero $g\in S$ for which the localization $A_g$ is a free $S_g$-module of rank $\fibersize$.
\end{lemma}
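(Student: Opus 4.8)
The statement to prove is the \textbf{Generic freeness} lemma: if $q_1,\ldots,q_N$ satisfy generic finiteness, then there is a nonzero $g \in S$ such that $A_g$ is a free $S_g$-module of rank $\fibersize$.

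The plan is to invoke the \emph{generic freeness} theorem of Grothendieck (EGA IV, 6.9.1) — hence the name of the lemma — which states that for a Noetherian integral domain $S$ and a finitely generated $S$-algebra $B$ together with a finitely generated $B$-module $M$, there exists nonzero $g \in S$ such that $M_g$ is a free $S_g$-module. First I would apply this with $B = A$ and $M = A$ (which is finitely generated as an $A$-module, trivially, and $A$ is a finitely generated $S$-algebra since it is a quotient of $S[X_1,\ldots,X_N]$, while $S = \C[t_1,\ldots,t_m]$ is a Noetherian domain). This produces nonzero $g \in S$ with $A_g$ free over $S_g$ of some rank $r$. It then remains to identify $r$ with $\fibersize$.

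To pin down the rank, I would use the fact that for a free module, the rank is preserved under base change: for any specialization $\phi: S \to \C$ that does not kill $g$ (equivalently, $\phi$ extends to $S_g \to \C$), we have $A_g \otimes_{S_g} \C \cong \C^r$, i.e. $\dim_\C(A_g \otimes_{S_g} \C) = r$. On the other hand, since localizing at $g$ and then tensoring down agrees with $A \otimes_S \C$ whenever $\phi(g) \neq 0$, and generic finiteness (via the reformulation in the paragraph before the lemma) says $\dim_\C(A \otimes_S \C) = \fibersize$ for $\phi$ ranging over a dense Zariski-open $U \subseteq \C^m$, I can choose a single $\phi$ lying in $U \cap \{g \neq 0\}$ — this intersection is nonempty because both sets are dense Zariski-open in the irreducible variety $\C^m$ — and conclude $r = \dim_\C(A \otimes_S \C) = \fibersize$.

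The one point that needs a little care — and is the only real obstacle — is reconciling the \emph{scheme-theoretic} fiber dimension $\dim_\C(A \otimes_S \C)$ appearing in the reformulation of generic finiteness with the \emph{set-theoretic} count of points in the affine fiber $V_\phi$ from Definition~\ref{def:finiteness}. Over the dense open $U$, genericity should be taken to include reducedness of the fiber, so that $\dim_\C(A\otimes_S \C)$ equals $\#V_\phi = \fibersize$; this is exactly the content of the sentence ``generic finiteness is equivalent to the condition that $\dim(A \otimes_S \C) = \fibersize$'' stated just before the lemma, which I would simply cite. Modulo that identification, the proof is a two-line invocation of Grothendieck generic freeness plus the flat base change of rank, and I would write it as such.

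\begin{proof}
    Since $S = \C[t_1,\ldots,t_m]$ is a Noetherian integral domain and $A = S[X_1,\ldots,X_N]/I^*$ is a finitely generated $S$-algebra, Grothendieck's generic freeness theorem (see \cite[EGA IV, 6.9.1]{cox1998using}, applied with the finitely generated $A$-module $M = A$) yields a nonzero $g\in S$ such that $A_g$ is a free $S_g$-module of some finite rank $r$.

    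It remains to show $r = \fibersize$. The set $\{u\in \C^m : g(u)\ne 0\}$ is a nonempty Zariski-open subset of $\C^m$, and since $\C^m$ is irreducible it meets the dense Zariski-open $U$ from Definition~\ref{def:finiteness}. Pick $u$ in this intersection and let $\phi: S\to\C$ be the specialization $\phi(t_i) = u_i$; since $\phi(g)\ne 0$, $\phi$ factors through $S_g$. Base-changing the free module $A_g$ along $S_g\to\C$ gives
    \begin{equation}
        \C^r \cong A_g\otimes_{S_g}\C \cong A\otimes_S\C\,,
    \end{equation}
    where the second isomorphism holds because $\phi(g)$ is invertible in $\C$. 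Hence $r = \dim_\C(A\otimes_S\C)$, which by the reformulation of generic finiteness noted above equals the size $\fibersize$ of the fiber $V_\phi$ for $u\in U$. Therefore $A_g$ is free over $S_g$ of rank $\fibersize$.
\end{proof}
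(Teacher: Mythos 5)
Your proof is correct and follows essentially the same route as the paper: apply Grothendieck's generic freeness lemma to the finitely generated $S$-algebra $A$ to obtain a free $A_g$ over $S_g$, then identify the rank with $\fibersize$ by base-changing along a generic specialization lying in $U\cap\{g\neq 0\}$ and using $A_g\otimes_{S_g}\C\cong A\otimes_S\C$. Your version is slightly more explicit about why that intersection is nonempty and about the scheme-theoretic versus set-theoretic fiber count, but the argument is the same.
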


\begin{proof}
    Note that $A$ is a finitely generated $S$-algebra as it is a quotient of $S[X_1,\ldots,X_N]$, so by Grothendieck's generic freeness lemma~\cite[Lemma 6.9.2]{grothendieck1965elements}, there exists nonzero $g\in S$ for which $A_g$ is a free $S_g$-module. It remains to show that the rank is $\fibersize$. Take a generic specialization $S\to \C$ and note that by generic finiteness, $\dim(A\otimes_S \C) = \fibersize$. Additionally, $A\otimes_S \C\cong A_g \otimes_{S_g} \C$, so we conclude that $A_g \cong S_g^{\otimes \fibersize}$ as claimed.
\end{proof}

\noindent Given $f\in S[X_1,\ldots,X_N]$, denote by $m_f: A_g\to A_g$ the \emph{multiplication map} which sends $h\in A$ to $fh$. 

\begin{lemma}\label{lem:charpoly}
    The \emph{characteristic polynomial} \begin{equation}
    \chi_g(t) \triangleq \det(t\Id - m_f)
    \end{equation} satisfies the following properties:
    \begin{itemize}
        \item $\chi_g\in S_g[t]$
        \item For any $\phi: S_g\to \C$ for which the composition $S\to S_g\to \C$ yields a finite affine fiber $V_\phi$ of size $\fibersize$, if $\phi(\chi_g) \in \C[t]$ is the result of specializing the coefficients of $\chi_g$ via $\phi$, then $\phi(\chi_g)(t) = \prod_{\bx\in V_\phi} (t - f(\bx))^{m_{\bx}}$, where $m_{\bx}$ is the scheme-theoretic multiplicity of $\bx$ in $V_\phi$.
    \end{itemize}
\end{lemma}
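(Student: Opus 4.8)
The plan is to read off both claims from the structure of $A_g$ as a free $S_g$-module of rank $\fibersize$, established by generic freeness. For the first bullet, note that once $A_g \cong S_g^{\oplus\fibersize}$, the multiplication map $m_f\colon A_g\to A_g$ is an $S_g$-linear endomorphism of a free module, so after fixing a basis it is represented by a $\fibersize\times\fibersize$ matrix with entries in $S_g$. Its characteristic polynomial $\chi_g(t) = \det(t\Id - m_f)$ is therefore a monic degree-$\fibersize$ polynomial in $t$ whose coefficients are polynomial expressions in the matrix entries, hence elements of $S_g$; that is, $\chi_g \in S_g[t]$. (The characteristic polynomial is independent of the choice of basis, so this is well-defined.)

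For the second bullet, fix a specialization $\phi\colon S_g\to\C$ whose composite $S\to S_g\to\C$ lies in the dense Zariski-open locus where the affine fiber $V_\phi$ has size $\fibersize$. The key compatibility is that base change commutes with forming the coordinate ring and with the multiplication map: $A_g \otimes_{S_g} \C \cong \C[X_1,\ldots,X_N]/\phi(I^*)$, and under this isomorphism $m_f \otimes \mathrm{id}$ becomes multiplication by $\phi(f)$ on the fiber algebra. Since $\chi_g$ is computed from a matrix over $S_g$, specializing its coefficients via $\phi$ gives exactly the characteristic polynomial of the specialized matrix, i.e.\ $\phi(\chi_g)(t) = \det(t\Id - m_{\phi(f)})$ where now $m_{\phi(f)}$ acts on $B \triangleq \C[X_1,\ldots,X_N]/\phi(I^*)$, a $\C$-algebra of dimension $\fibersize$. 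It remains to identify this characteristic polynomial with $\prod_{\bx\in V_\phi}(t - \phi(f)(\bx))^{m_{\bx}}$. For this, decompose $B$ as a product of local Artinian rings $B \cong \prod_{\bx\in V_\phi} B_{\bx}$ by the structure theorem for Artinian rings, where $B_{\bx}$ is local with residue field $\C$ and $\dim_\C B_{\bx} = m_{\bx}$ by definition of scheme-theoretic multiplicity. On each factor $B_{\bx}$, the endomorphism $m_{\phi(f)}$ has the single eigenvalue $\phi(f)(\bx)$ (since $\phi(f) - \phi(f)(\bx)$ lies in the maximal ideal of $B_{\bx}$, hence is nilpotent there), so its characteristic polynomial on that factor is $(t - \phi(f)(\bx))^{m_{\bx}}$; multiplying over factors gives the claim.

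The main obstacle I anticipate is making the base-change compatibility fully rigorous, namely that $A_g \otimes_{S_g}\C \cong \C[X_1,\ldots,X_N]/\phi(I^*)$ as $\C$-algebras and that under this identification the specialized multiplication matrix really is $m_{\phi(f)}$ on the fiber. This requires checking that the saturation and de-homogenization operations used to define $I^*$ behave well under the generic specialization — i.e.\ that $\phi(I^*) = I^*\otimes_S\C$ on the open locus, which is exactly where the Zariski-openness of the good locus (and possibly a further shrinking of $U$, absorbing it into $g$) does the work. The eigenvalue bookkeeping on the local factors, by contrast, is standard linear algebra over Artinian rings and should be routine. One subtlety worth flagging: the multiplicities $m_{\bx}$ as defined here must be shown to coincide with the local dimensions $\dim_\C B_{\bx}$, which is essentially the definition of scheme-theoretic multiplicity but should be stated cleanly so that the product formula consumed by Lemma~\ref{lem:poisson} matches the definition given in its statement.
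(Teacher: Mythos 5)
Your proposal is correct and follows essentially the same route as the paper: the first bullet is immediate from $A_g$ being a free $S_g$-module of rank $\fibersize$, and for the second bullet the paper simply cites the standard fact (\cite[Chapter 2.4, Theorem 4.5]{cox1998using}) that the eigenvalues of the multiplication map $m_{\phi(f)}$ on the fiber algebra are the values $\phi(f)(\bx)$ with the scheme-theoretic multiplicities, which is exactly what your Artinian decomposition argument proves from scratch. The base-change compatibility you flag is the right point to be careful about, and your treatment of it is consistent with what the paper leaves implicit.
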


\begin{proof}
    $A_g$ is a free $S_g$-module of rank $\fibersize$, so $\chi_g$ is well-defined and the first step is immediate. The second step follows by the standard fact \cite[Chapter 2.4, Theorem 4.5]{cox1998using} that the eigenvalues of the multiplication map correspond to $f(\bx)$ for the different $\bx\in V_\phi$, with multiplicity.
\end{proof}

\noindent Because $\phi(\chi_g)(t) = \prod_{\bx\in V_\phi} (t - f(\bx))^{m_{\bx}}$ for generic specializations $S_g \to \C$ by Lemma~\ref{lem:charpoly}, we conclude that $\prod_{\bx\in V_\phi} (t - f(\bx))^{m_{\bx}}$ is equal to $\chi_g(t)$ as an element of $S_g[t]$. It remains to conclude that $\chi_g(t)$ is an element of $S[t]$, not just $S_g[t]$. For this, we appeal to the determinantal trick.

\begin{lemma}[Determinantal trick]\label{lem:det}
    For any finitely generated $S$-module $A$ and ideal $I\subseteq S$, if $\phi: A\to A$ is an $S$-module endomorphism satisfying $\phi(A) \subseteq I\cdot A$, then there is a monic polynomial $P\in S[t]$ for which $P(\phi) = 0$.
\end{lemma}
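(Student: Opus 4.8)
The plan is to prove Lemma~\ref{lem:det} by the classical ``determinant trick'' underlying the Cayley--Hamilton theorem over a commutative ring (the same device used to prove Nakayama's lemma). Since $A$ is only assumed finitely generated --- not free or even finitely presented --- one cannot speak of a characteristic polynomial of $\phi$ directly; instead I would fix a finite generating set $x_1,\ldots,x_n$ of $A$ as an $S$-module and produce a monic polynomial from a matrix representing $\phi$ relative to it. For each $i$, the hypothesis $\phi(x_i)\in I\cdot A$ lets me write $\phi(x_i)=\sum_{j=1}^n a_{ij}x_j$ with every $a_{ij}\in I$ (expand any element of $IA$ as a finite sum $\sum a\,y$ with $a\in I$, $y\in A$, rewrite each $y$ in the generators, and collect coefficients).

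The one point that needs a little care is to run the determinant computation inside a commutative ring. I would therefore pass to $R\triangleq S[\phi]\subseteq\mathrm{End}_S(A)$, the subring of $S$-linear endomorphisms generated by $\phi$ and the homotheties $a\mapsto s\cdot a$ for $s\in S$; this is commutative precisely because $\phi$ is $S$-linear, so $\phi$ commutes with every scalar homothety. Now $A$ is an $R$-module, and the relations above say exactly that the matrix $C\triangleq(\delta_{ij}\phi-a_{ij}\Id)_{i,j}\in M_n(R)$ annihilates the column vector $(x_1,\ldots,x_n)^\top$. Multiplying on the left by the adjugate $\mathrm{adj}(C)\in M_n(R)$ and using the identity $\mathrm{adj}(C)\,C=\det(C)\,\Id_n$ --- valid over the commutative ring $R$ --- gives $\det(C)\,x_j=0$ for all $j$, and since the $x_j$ generate $A$, the element $\det(C)\in R$ is the zero endomorphism of $A$.

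Finally I would identify $\det(C)=\det(\phi\,\Id_n-M)=P(\phi)$, where $M=(a_{ij})\in M_n(S)$ has all entries in $I$ and $P(t)\triangleq\det(t\,\Id_n-M)\in S[t]$ is the characteristic polynomial of $M$: monic of degree $n$, with the coefficient of $t^{n-k}$ equal to $(-1)^k$ times the sum of the $k\times k$ principal minors of $M$, hence lying in $I^k$. Thus $P$ is monic in $S[t]$ and $P(\phi)=0$, which proves the lemma (and, for free, the sharper statement that the subleading coefficients lie in successive powers of $I$; taking $I=S$ this is just the integrality of $\phi$ over $S$). I do not expect any genuine obstacle: the entire content is the observation that the argument must be carried out in the commutative ring $R=S[\phi]$, since naively ``taking a determinant'' in the noncommutative ring $\mathrm{End}_S(A)$ is not meaningful.
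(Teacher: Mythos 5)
Your proof is correct and is precisely the standard argument (the determinant/adjugate trick carried out in the commutative subring $S[\phi]\subseteq\mathrm{End}_S(A)$) that the paper invokes by citing Proposition 2.4 of Atiyah--Macdonald rather than writing out. No discrepancies.
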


\begin{proof}
    This is standard, see e.g., \cite[Proposition 2.4]{atiyah2018introduction}.
\end{proof}

\noindent From this we can deduce the following, which concludes the proof of Lemma~\ref{lem:poisson}:

\begin{corollary}
    $\chi_g(t)\in S[t]$.
\end{corollary}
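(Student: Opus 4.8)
By Lemma~\ref{lem:charpoly} we already know $\chi_g(t)\in S_g[t]$, so the task is to show that each coefficient of $\chi_g$, a priori an element of the localization $S_g = S[1/g]$, actually lies in $S$. The key structural fact is that $S=\C[t_1,\ldots,t_m]$ is a UFD and hence integrally closed in its fraction field $K=\mathrm{Frac}(S)$. Thus it suffices to prove that the coefficients of $\chi_g$ — which certainly lie in $K$ — are \emph{integral over $S$}. Passing to $\overline{K}$, the coefficients of $\chi_g(t)=\det(t\,\mathrm{Id}-m_f)$ are, up to sign, the elementary symmetric polynomials in the eigenvalues $\lambda_1,\ldots,\lambda_{\fibersize}$ of the $K$-linear operator $m_f$ on $A_g\otimes_{S_g}\overline{K}$. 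Since the elements of $\overline{K}$ that are integral over $S$ form a ring, it is enough to exhibit a \emph{single} monic polynomial $P\in S[t]$ with $P(m_f)=0$ as an operator on $A_g$: then each $\lambda_i$, being an eigenvalue, is a root of $P$ and hence integral over $S$; therefore every $e_j(\lambda_1,\ldots,\lambda_{\fibersize})$ is integral over $S$, and being an element of $K$ it lies in $S$, giving $\chi_g(t)\in S[t]$.

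\textbf{Producing the monic relation via the determinantal trick.} To build such a $P$, the plan is to descend $m_f$ to a finitely generated $S$-module and apply Lemma~\ref{lem:det}. Concretely, one wants an $m_f$-stable, finitely generated $S$-submodule $M\subseteq A$ whose image in $A_g$ generates $A_g$ as an $S_g$-module — for instance an $M$ containing lifts $\hat e_i\in A$ of an $S_g$-basis $e_i=\hat e_i/g^r$ of $A_g$. Given such an $M$, Lemma~\ref{lem:det} with the ideal $I=S$ produces a monic $P\in S[t]$ with $P(m_f|_M)=0$; since $P(m_f)$ is $S_g$-linear on $A_g$ and kills a generating set, it kills all of $A_g$, and the previous paragraph finishes the job. (Equivalently, it suffices to know that the image of $f$ in $A$ is integral over $S$, in which case its monic equation can be taken for $P$ outright; the determinantal trick is the general-purpose way to arrive at this kind of integrality from a finiteness hypothesis.)

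\textbf{Main obstacle.} The real content — and the step I expect to require the most care — is the finiteness input: guaranteeing that $f$, or more robustly a large enough piece of $A$, is integral over $S$, equivalently that some $m_f$-stable finitely generated $S$-submodule of $A$ generates $A_g$ over $S_g$. This is \emph{not} automatic from generic finiteness alone: a positive-dimensional fiber of $\operatorname{Spec}(A)\to\operatorname{Spec}(S)$ over a high-codimension locus can perfectly well coexist with $A_g$ being free of rank $\fibersize$. This is precisely why the construction of $I^*$ passed through the $X_0$-saturation: discarding the components ``at infinity'' is what keeps the projective fibers bounded and forces the coordinate functions (hence $f$) to be integral over $S$ in the regime relevant to the paper's applications. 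I would establish this either by exploiting that $\operatorname{Proj}(S[X_0,\ldots,X_N]/I')$ is proper over $\operatorname{Spec}(S)$ together with the fact that $X_0$ is a nonzerodivisor modulo $I'$, or, in the concrete settings where only the constant term of each $q_i$ lies outside $\C$ (so the behavior at infinity is uniform in the coefficient variables), by a direct check. Once this finiteness is secured, the determinantal trick plus the integral-closedness of $S$ complete the proof that $\chi_g(t)\in S[t]$, and hence — combined with Lemma~\ref{lem:charpoly} — the proof of Lemma~\ref{lem:poisson}.
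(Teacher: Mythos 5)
Your argument is, at its core, the same as the paper's: produce a monic $P\in S[t]$ with $P(m_f)=0$ via the determinantal trick, deduce that the eigenvalues of $m_f$ are integral over $S$, and use the fact that the polynomial ring $S$ is integrally closed to conclude that the symmetric functions of those eigenvalues --- the coefficients of $\chi_g$ --- lie in $S$. Where you differ is that you correctly refuse to apply Lemma~\ref{lem:det} to $A$ outright: that lemma requires a \emph{finitely generated $S$-module}, whereas $A=S[X_1,\ldots,X_N]/I^*$ is a priori only a finitely generated $S$-\emph{algebra} (and $A_g$ is finite only over $S_g$, which would yield integrality merely over $S_g$). The paper's one-line invocation of the determinantal trick silently assumes this module-finiteness, so the obstacle you isolate is a genuine gap in the paper's own proof, not excess caution on your part. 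Indeed, generic finiteness does not supply the missing input, and the corollary can fail without it: take $m=N=1$ and $q_1=t_1X_1-1$. Here generic finiteness holds with $\fibersize=1$, the $X_0$-saturation changes nothing ($t_1X_1-X_0$ is irreducible and coprime to $X_0$), yet for $f=X_1$ one has $\prod_{\bx\in V_\phi}\bigl(t-f(\bx)\bigr)=t-1/t_1\notin S[t]$. Your proposed repair --- find an $m_f$-stable finitely generated $S$-submodule $M\subseteq A$ whose image generates $A_g$ over $S_g$, apply Lemma~\ref{lem:det} to $M$, and propagate $P(m_f)=0$ to all of $A_g$ by $S_g$-linearity --- is the right shape of argument.

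One caveat on how you propose to secure that finiteness: the first of your two routes (properness of $\operatorname{Proj}(S[X_0,\ldots,X_N]/I')$ over $\operatorname{Spec}(S)$ plus $X_0$ being a nonzerodivisor modulo $I'$) is not sufficient; the example above satisfies both conditions, since the Proj is automatically projective hence proper and $I'$ is prime not containing $X_0$. The failure mode is not a component contained in $\{X_0=0\}$ (which saturation removes) but the closure of the affine solution set \emph{meeting} $\{X_0=0\}$ over a proper closed subset of $\operatorname{Spec}(S)$, which saturation does not remove. Even restricting to systems where only the constant terms of the $q_i$ involve the coefficient variables does not by itself rule this out (e.g.\ $q_1=X_1X_2-t_1$, $q_2=X_1-t_2$ gives $X_2=t_1/t_2$). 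What one actually needs is that the saturated projective scheme is disjoint from the hyperplane at infinity, equivalently that $A$ is module-finite over $S$; this is an extra hypothesis that must be added to Lemma~\ref{lem:poisson} and verified for the concrete systems in Section~\ref{sec:examples}. With that hypothesis in hand, your proof (and the paper's) goes through.
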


\begin{proof}
    Note that $m_f$ satisfies the properties needed in the endomorphism $\phi$ in Lemma~\ref{lem:det} with $I = S$, so there exists monic polynomial $P\in S[t]$ for which $P(m_f) = 0$. But this implies that the eigenvalues of $m_f$ are roots of $P$ and thus integral. As $S$ is integrally closed, all integral elements are elements of $S$. The coefficients of $\chi_g(t)$ are symmetric polynomials in these eigenvalues and thus also elements of $S$, so $\chi_g\in S[t]$ as desired.
\end{proof}

\section{Local Convergence Guarantee}
\label{sec:local}

Let $f: \R^N\to \R^m$ denote a differentiable function. In our applications, on input $\bx\in\R^N$ which parametrizes some Hamiltonian $H$, each $f_i(\bx)$ computes the difference between an estimate of some observable value obtained from measurements, and the true value of a certain ``truncation'' of the observable. Because of measurement error and truncation error, for $\bx^*$ the true parameters of the Hamiltonian generating the data, $f(\bx^*)$ is small but nonzero. Generally the goal is to (1) find a choice of parameters $\bx$ for which $f(\bx)$ is sufficiently small, and (2) show that under some additional conditions (e.g., a loose upper bound on $\norm{\bx - \bx^*}_\infty$), this implies $\bx$ is close to $\bx^*$.

We will denote the output coordinates of $f$ by $f_1,\ldots,f_m$, the Jacobian of $f$ at $\bx$ by $J(\bx)\in \R^{m\times N}$, and the Hessian of $f_i$ at $\bx$ by $\Hess_i(\bx) \in \R^{N\times N}$. We will also use the shorthand $(J^+f)(\bx) = J(\bx)^+\cdot f(\bx)\in \R^{N}$, where $A^+ \triangleq (A^\top A)^{-1} A^\top$ denotes the Moore-Penrose pseudo-inverse of $A$.

The (projected) Newton's method starts at a point $\bx^{(0)} \in\mathbb{R}^N$ and applies the following iterative update
\begin{equation}
    \bx^{(t+1)} = \proj_{[-\lmax,\lmax]^N} \Bigl(\bx^{(t)} - (J^+f)(\bx^{(t)})\Bigr)\,.
\end{equation}

In this section we give a standard analysis of Newton's method. This was already used in the work of~\cite{haah2024learning} to obtain an optimal algorithm for learning high-temperature Gibbs states. We include a version of this analysis below for completeness.

\begin{theorem}\label{thm:newton}
    Suppose $f: \R^N\to\R^m$ satisfies the following conditions:
    \begin{enumerate}
        \item \underline{Near vanishing at ground truth:} $\norm{f(\bx^*)}_\infty \le \epsilon$
        \item \underline{Smoothness everywhere:} $\max_{j\in [m], \vmu\in[-\lmax,\lmax]^N} |\vDelta^\top \Hess_j(\vmu) \vDelta| \le \smoothness \norm{\vDelta}^2_\infty$ for all $\vDelta\in\R^N$
        \item \underline{Curvature near ground truth:} $\sigma_{\min}(J(\bx^*)) \ge \curvature$
    \end{enumerate}
    If
    \begin{equation}
        \epsilon \le \curvature^2 / 8m \label{eq:eps}
    \end{equation}
    and
    \begin{equation}
        \norm{\bx^{(0)} - \bx^*}_\infty \le \frac{\curvature}{2\smoothness\sqrt{m N}} \label{eq:init}\,,
    \end{equation}
    then for any $t \ge \log_2(\curvature^2/m\epsilon) - 1$, we have 
    \begin{equation}
        \norm{\bx^{(t)} - \bx^*}_\infty \le 6\epsilon\curvature^{-1}\sqrt{m}\,.
    \end{equation}
\end{theorem}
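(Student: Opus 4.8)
The plan is to prove the standard quadratic-convergence statement for projected Newton by tracking the quantity $e_t \triangleq \norm{\bx^{(t)} - \bx^*}_\infty$ and showing it contracts until it hits the noise floor $\sim \epsilon \curvature^{-1}\sqrt{m}$. The three hypotheses are exactly what is needed: near-vanishing at ground truth controls the irreducible error, curvature near ground truth gives invertibility of $J(\bx)$ (hence of $J(\bx)^\top J(\bx)$, so $J^+$ is well-defined) in a neighborhood of $\bx^*$, and the global smoothness bound controls the Taylor remainder of $f$ around any point we visit.

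First I would record two consequences of the hypotheses. By the smoothness bound (item 2), $J$ is Lipschitz in operator norm on the cube with constant $O(\smoothness\sqrt{N})$, so if $e_t$ is small enough — certainly if $e_t \le \curvature/(2\smoothness\sqrt{mN})$ as in the initialization bound \eqref{eq:init} — then $\sigma_{\min}(J(\bx^{(t)})) \ge \curvature/2$ by Weyl. Also by Taylor's theorem with the Hessian bound, $\norm{f(\bx^{(t)}) - f(\bx^*) - J(\bx^{(t)})(\bx^{(t)} - \bx^*)}_\infty \le \smoothness N e_t^2$ (bounding $\|\vDelta\|_2 \le \sqrt N \|\vDelta\|_\infty$ where needed and summing the $\sqrt m$ coordinates appropriately). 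Now I would analyze one Newton step: writing $\bx^{(t+1)} = \proj(\bx^{(t)} - J(\bx^{(t)})^+ f(\bx^{(t)}))$ and using that $\bx^*$ lies in the cube so projection only helps, I would bound $\norm{\bx^{(t+1)} - \bx^*}_\infty \le \norm{\bx^{(t)} - \bx^* - J(\bx^{(t)})^+ f(\bx^{(t)})}_\infty$. Substituting $f(\bx^{(t)}) = f(\bx^*) + J(\bx^{(t)})(\bx^{(t)} - \bx^*) + (\text{remainder})$ and using $J^+ J = \Id$ on the relevant subspace, the linear part cancels and one is left with $\norm{J(\bx^{(t)})^+(\text{remainder} - f(\bx^*))}_\infty \le \sigma_{\min}(J(\bx^{(t)}))^{-1}\sqrt m \,(\smoothness N e_t^2 + \epsilon) \le (2/\curvature)\sqrt m(\smoothness N e_t^2 + \epsilon)$.

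This gives the recursion $e_{t+1} \le (2\sqrt m \smoothness N/\curvature)\, e_t^2 + (2\sqrt m/\curvature)\,\epsilon$. Setting $r \triangleq 2\sqrt m\smoothness N/\curvature$, as long as $e_t \le 1/(2r) = \curvature/(4\smoothness N\sqrt m)$ (which holds initially by \eqref{eq:init}, up to adjusting constants) the quadratic term satisfies $r e_t^2 \le e_t/2$, so $e_{t+1} \le e_t/2 + (2\sqrt m/\curvature)\epsilon$; one checks this keeps $e_{t+1}$ within the good region (using \eqref{eq:eps} to ensure the additive term is dominated). Iterating, $e_t \le 2^{-t} e_0 + (4\sqrt m/\curvature)\epsilon$, and once $2^{-t} e_0 \le (4\sqrt m/\curvature)\epsilon$ — which happens for $t \ge \log_2(e_0\curvature/(4\sqrt m\epsilon))$, and since $e_0 \le \curvature/(2\smoothness\sqrt{mN}) \le \curvature$ this is implied by $t \ge \log_2(\curvature^2/(m\epsilon)) - 1$ after absorbing constants — we get $e_t \le (8\sqrt m/\curvature)\epsilon$, which I would tidy to the claimed $6\epsilon\curvature^{-1}\sqrt m$ by being slightly more careful with the constant in the quadratic-vs-linear split (e.g. using $r e_t^2 \le e_t/3$).

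The main obstacle, such as it is, is bookkeeping: one must verify that the iterate never escapes the region where $\sigma_{\min}(J) \ge \curvature/2$ and the contraction factor is valid — an induction showing $e_t$ stays below the threshold of \eqref{eq:init} for all $t$ — and one must juggle the $\ell_\infty$–$\ell_2$ conversions (the factors of $\sqrt N$ from the Hessian bound and $\sqrt m$ from the pseudo-inverse acting on $\R^m$) so that the final constant comes out at $6$ rather than something larger. The projection step is handled by the elementary fact that projecting onto a convex set containing $\bx^*$ is $1$-Lipschitz and does not increase distance to $\bx^*$; no additional idea is needed there. Since \cite{haah2024learning} already carried out essentially this argument, I would follow their structure and simply make the constants explicit enough to match the theorem statement.
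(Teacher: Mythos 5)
Your proposal is correct and follows essentially the same route as the paper: a one-step Taylor-remainder analysis with the linear term cancelled by $J^+J$, an $\norm{J^+}_{\infty\to\infty}\le\sqrt{m}\,\sigma_{\min}(J)^{-1}$ bound, Weyl's inequality plus Lipschitzness of $J$ to keep $\sigma_{\min}(J(\bx^{(t)}))\ge\curvature/2$ along the trajectory, and the resulting recursion $e_{t+1}\le\alpha+\beta e_t^2$ (the paper resolves this via its Fact~\ref{fact:recursion}, yielding $3\alpha=6\epsilon\curvature^{-1}\sqrt m$ exactly, whereas your geometric-decay argument needs the minor constant-tightening you already flag). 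Note also that hypothesis 2 bounds the remainder directly by $\tfrac12\smoothness\norm{\vDelta}_\infty^2$ per coordinate, so the extra factor of $N$ you insert is unnecessary.
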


\begin{proof}
    We need to track how the error contracts in one iteration. Define $\vDelta^{(t)} \triangleq \bx^{(t)} - \bx^*$. For most of the proof, we will abbreviate $\bx = \bx^{(t)}$, $\bx' = \bx^{(t+1)}$, $\vDelta = \bx - \bx^*$, and $\vDelta' = \bx' - \bx^*$. Then for any $i\in[N]$,
    \begin{align*}
        |\Delta'_i| &= |\proj_{[-\lmax,\lmax]}(x_i - (J^+f)(\bx)_i) - x^*_i| \\
        &\le |x_i - (J^+f)(\bx)_i - x^*_i|\,.
    \intertext{
        By Taylor remainder theorem, for every $j\in[m]$ there is $\vmu^{[j]}$ on the line segment between $\bx$ and $\bx^*$ such that 
        \begin{equation}
            f_j(\bx^*) = f_j(\bx) - \sum^N_{i=1} \Delta_i J(\bx)_{ji} + \frac{1}{2} \vDelta^\top \Hess_j(\vmu^{[j]}) \vDelta\,.
        \end{equation}
        Substituting this above yields
    }
        &= \Bigl|\Delta_i - \sum^m_{j=1} J^+(\bx)_{ij} \Bigl(f_j(\bx^*) + \sum^N_{i'=1}\Delta_{i'} J(\bx)_{ji'} - \frac{1}{2}\vDelta^\top \Hess_j(\vmu^{[j]})\vDelta\Bigr)\Bigr| \\
        &= \Bigl|\Bigl(\vDelta - (J^+(\bx) f(\bx^*)) + (J^+ J)(\bx) \vDelta\Bigr)_i + \frac{1}{2} \sum^m_{j=1} J^+(\bx)_{ij} \vDelta^\top \Hess_j(\vmu^{[j]}) \vDelta\Bigr| \\
        &\le \norm{J^+(\bx)}_{\infty\to\infty} \norm{f(\bx^*)}_\infty + \frac{1}{2}\norm{J^+(\bx)}_{\infty\to\infty}\cdot \max_{j\in[m], \vmu\in[-\lmax,\lmax]^N} |\vDelta^\top \Hess_j(\vmu) \vDelta| \\
        &\le \sqrt{m}\sigma_{\min}(J(\bx))^{-1}\cdot (\epsilon + \frac{1}{2}\smoothness\norm{\vDelta}^2_\infty)\,.
    \end{align*}
    By Weyl's inequality, $\sigma_{\min}(J(\bx)) \ge \curvature - \norm{J(\bx) - J(\bx^*)}_{\sf op}$. It is straightforward to obtain a bound on the Lipschitz constant for $J$: for each $j\in[m]$, we have that 
    \begin{align}
        \norm{J(\bx) - J(\bx^*)}_{\sf op} &\le \sqrt{m}\max_{j\in[m]} \norm{\nabla f_j(\bx) - \nabla f_j(\bx^*)}_2 \\
        &\le \sqrt{m}\max_{j\in[m], \vmu\in[-\lmax,\lmax]^N}\norm{\Hess_j(\vmu)}_{\sf op} \cdot \norm{\vDelta}_2 \\
        &\le \smoothness\sqrt{mN} \norm{\vDelta}_\infty\,.
    \end{align}

    As long as we can ensure that $\norm{\vDelta}_\infty \le \frac{\curvature}{2\upsilon\sqrt{mN}}$ across all iterations, then we have the recursion
    \begin{equation}
        \norm{\vDelta'} \le \alpha + \beta\norm{\vDelta}^2_\infty\, \qquad \text{for} \qquad \alpha = 2\epsilon\curvature^{-1}\sqrt{m}\ \ \text{and} \ \ \beta = \upsilon\curvature^{-1}\sqrt{m}\,.
    \end{equation}
    for each iteration. 

    Applying Fact~\ref{fact:recursion} below with this choice of $\alpha,\beta$ (and verifying that $\alpha\beta \le 1/4$ by Eq.~\eqref{eq:eps}) and $\overline{z} \triangleq \frac{\curvature}{2\upsilon\sqrt{mN}}$, 
    we conclude that by our assumed bound on $\norm{\vDelta^{(0)}}_\infty$ in Eq.~\eqref{eq:init}, we indeed have $\norm{\vDelta^{(t)}}_\infty \le \overline{z}$ for all $t$ and furthermore $\norm{\vDelta^{(t)}}_\infty \le 3\alpha = 6\epsilon\curvature^{-1}\sqrt{m}$ provided $t \ge \log_2 \frac{1}{\alpha\beta} - 1 = \log_2(\curvature^2/m\epsilon) - 2$.
\end{proof}

\noindent The above proof makes use of the following elementary calculation:

\begin{fact}\label{fact:recursion}
    Let $\alpha,\beta \in \R_{> 0}$ be such that $\alpha\beta \le 1/4$. Let $2\alpha \le \overline{z} \le 1/2\beta$. Given a sequence of positive reals $z_0,z_1,z_2,\ldots$ satisfying $z_0 \le \overline{z}$ and $z_{t+1} \le \alpha + \beta z^2_t$ for all $t\ge 0$, then 
    \begin{equation}
        z_t \le \min(\overline{z}, 2\alpha + \frac{1}{\beta 2^{t + 1}})\,.
    \end{equation}
    In particular, for $t \ge \log_2 \frac{1}{\alpha\beta} - 1$, we have that $z_t \le 3\alpha$.
\end{fact}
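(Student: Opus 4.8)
\textbf{Proof proposal for Fact~\ref{fact:recursion}.} The plan is to run two short inductions: first an \emph{a priori} boundedness induction showing $z_t \le \overline{z}$ for all $t$, and then, using that bound to linearize the recursion, a second induction giving the geometric decay $z_t \le 2\alpha + 2^{-t}\overline{z}$.

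First I would show by induction on $t$ that $z_t \le \overline{z}$. The base case is the hypothesis $z_0 \le \overline{z}$. For the inductive step, if $z_t \le \overline{z}$ then $z_{t+1} \le \alpha + \beta z_t^2 \le \alpha + \beta\overline{z}^2$. Now I use the two constraints on $\overline{z}$: from $\overline{z} \le 1/(2\beta)$ we get $\beta\overline{z}^2 \le \overline{z}/2$, and from $2\alpha \le \overline{z}$ we get $\alpha \le \overline{z}/2$, so $z_{t+1} \le \overline{z}/2 + \overline{z}/2 = \overline{z}$. This establishes the first term of the claimed minimum and, more importantly, sets up the linearization.

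The key observation is that once $z_t \le \overline{z}$, the quadratic term is dominated by a linear one: $\beta z_t^2 \le (\beta\overline{z}) z_t \le \tfrac12 z_t$, again using $\overline{z} \le 1/(2\beta)$. Hence $z_{t+1} \le \alpha + \tfrac12 z_t$ for all $t$. I then prove $z_t \le 2\alpha + 2^{-t}\overline{z}$ by induction: at $t=0$ this reads $z_0 \le 2\alpha + \overline{z}$, which holds since $z_0\le\overline{z}$; and if it holds at $t$, then $z_{t+1} \le \alpha + \tfrac12\bigl(2\alpha + 2^{-t}\overline{z}\bigr) = 2\alpha + 2^{-(t+1)}\overline{z}$. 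Finally, $2^{-t}\overline{z} \le 2^{-t}\cdot\tfrac{1}{2\beta} = \tfrac{1}{\beta 2^{t+1}}$, which combined with $z_t\le\overline{z}$ yields $z_t \le \min\bigl(\overline{z},\, 2\alpha + \tfrac{1}{\beta 2^{t+1}}\bigr)$.

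For the ``in particular'' clause, if $t \ge \log_2\tfrac{1}{\alpha\beta} - 1$, i.e.\ $2^{t+1} \ge \tfrac{1}{\alpha\beta}$, then $\tfrac{1}{\beta 2^{t+1}} \le \tfrac{\alpha\beta}{\beta} = \alpha$, so $z_t \le 2\alpha + \alpha = 3\alpha$. There is no genuine obstacle in this argument; the only nonobvious move is recognizing that the boundedness invariant $z_t\le\overline{z}$ converts the quadratic recursion into a contractive affine one with contraction factor $\tfrac12$, after which everything is a routine geometric-series estimate.
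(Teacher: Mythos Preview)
Your proof is correct and matches the paper's approach for the boundedness invariant $z_t \le \overline{z}$. For the decay bound $z_t \le 2\alpha + \tfrac{1}{\beta 2^{t+1}}$, the paper simply cites an external lemma, whereas your linearization-then-geometric-series argument supplies a clean self-contained proof of that step.
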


\begin{proof}
Note that $\sqrt{\overline{z}/2\beta} \le \overline{z}$ as $\overline{z} \le 1/2\beta$. We thus have that 
\begin{equation}
z_1 \le \alpha + \beta z^2_0 \le \overline{z}\,,
\end{equation}
as $2\alpha \le \overline{z}$ and $z_0 \le \overline{z} \le \sqrt{\overline{z}/2\beta}$. Iterating this, we conclude that $z_t \le \overline{z}$ for all $t$. The bound of $z_t \le 2\alpha + \frac{1}{\beta 2^{t+1}}$ is shown in~\cite[Lemma 4.7]{haah2024learning}.
\end{proof}

\section{Estimating Taylor Series Coefficients}
\label{sec:estimation}

Having established necessary analytic and algebraic preliminaries, we now turn
to concrete quantities that arise in our quantum probe tomography setting. In particular, our analysis will require a careful understanding of how expectation values of local observables expand in both the evolution time $t$ and the inverse temperature $\beta$. To this end, we introduce a systematic way to define and estimate the Taylor series coefficients of these observables, and we analyze both their algebraic structure and estimability from experiments.

In this work we will use observables of the form
\begin{equation}
\label{E:observablesA0}
    \obsvalue{\beta}{t}{\mu}{C} \triangleq \tr(\sigma^\mu_{\probeindex} (C[\rho_\beta])_H(t))\,,
\end{equation}
where $\sigma^\mu_{\probeindex}$ is a single-qubit Pauli operator acting on qubit $\probeindex$ labeled by $\mu \in \{X, Y, Z\}$, $\rho_{\beta} \propto e^{-\beta H}$ is the Gibbs state of the underlying Hamiltonian $H$, and $C$ is a single-qubit control channel applied to $\rho_\beta$.  Given $j,k\in\mathbb{Z}_{\ge 0}$, we let $\obsvalue{\beta^{(k)}}{t^{(j)}}{\mu}{C}$ denote the coefficient of $t^j\beta^k$ in the Taylor series expansion of $\obsvalue{\beta}{t}{\mu}{C}$.

The following shows that when one regards these Taylor coefficients formally as polynomials in the parameters $\params$ of the Hamiltonian $H$, these polynomials have bounded entries and can be written down in time scaling exponentially in the degree $j+k$.

\begin{lemma}\label{lem:fewpolys}
     For every $j,k\in\mathbb{Z}_{\ge 0}$, the Taylor coefficient $\obsvalue{\beta^{(k)}}{t^{(j)}}{\mu}{C}$ is a polynomial in the parameters $\params = (\lambda_1,\ldots,\lambda_N)$ of the underlying local Hamiltonian, and the coefficients of that polynomial have magnitudes summing to at most $\mathcal{O}(\exp(\Od(j+k)))$ and can all be computed in time $\mathcal{O}(\exp(\Od(j+k)))$, where $\mathfrak{d}$ is the degree of the dual interaction graph of $H$.\footnote{The \emph{dual interaction graph} of a local Hamiltonian $H$ is a graph whose vertices correspond to terms in the Hamiltonian, and edges connect terms whose supports overlap on at least one qubit. We say that $H$ is \emph{low-intersection}if the degree $\mathfrak{d}$ of this graph is $\mathcal{O}(1)$.}
\end{lemma}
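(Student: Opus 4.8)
The plan is to expand $\obsvalue{\beta}{t}{\mu}{C}$ in a double Taylor series, first in $t$ using the Hadamard formula and then in $\beta$ using the high-temperature expansion of the Gibbs state, and to track at each stage both the number of monomials in $\params$ and the magnitude of their coefficients. Concretely, I would write
\begin{equation}
    \obsvalue{\beta}{t}{\mu}{C} = \tr\bigl(\sigma^\mu_{\probeindex}\, e^{-\i H t}\, C[\rho_\beta]\, e^{\i H t}\bigr) = \tr\bigl((e^{\i H t}\sigma^\mu_{\probeindex} e^{-\i H t})\, C[\rho_\beta]\bigr)\,,
\end{equation}
and then use Lemma~\ref{lem:hadamard} to expand $e^{\i H t}\sigma^\mu_{\probeindex} e^{-\i H t} = \sum_{j\ge 0} \frac{(\i t)^j}{j!}[H,\sigma^\mu_{\probeindex}]_j$. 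Each $j$-fold nested commutator $[H,\sigma^\mu_{\probeindex}]_j$ is a sum of Pauli operators: since $H = \sum_a \lambda_a \sigma_a$ is $\mathfrak{K}$-local with dual-interaction-graph degree $\mathfrak{d}$, a standard Lieb-Robinson-type counting argument shows that $[H,\sigma^\mu_{\probeindex}]_j$ is a degree-$j$ homogeneous polynomial in $\params$ whose Pauli-operator support has size $\exp(\Od(j))$ and whose coefficients are integers (from the $\pm 2$ factors in commutators of Paulis) of magnitude $\exp(\Od(j))$. Applying the control channel $C$ is a fixed linear map on the single probed qubit, so $C[\rho_\beta]$ and hence the pairing with each Pauli only changes things by $\mathcal{O}(1)$ factors and does not increase locality beyond a constant.

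Next I would substitute the high-temperature expansion. By Theorem~\ref{thm:hightemp}, for each Pauli $P$ appearing above, $\tr(P\rho_\beta) = \frac{1}{d}\tr(P) + \sum_{k\ge 1}\beta^k p_{k,P}(\params)$, where $p_{k,P}$ is degree-$k$ homogeneous with at most $\exp(\Od(k))$ monomials and coefficient-sum bounded by $c_k = \exp(\Od(k))$. (A subtlety: Theorem~\ref{thm:hightemp} is stated for $\tr(\sigma_a\rho_\beta)$ with $\sigma_a$ a term of $H$, but its proof goes through for an arbitrary $L$-local Pauli $P$ with the same bounds up to the $L$-dependence, which here is $L = \exp(\Od(j))$ — I would note that this only affects the final bound by a multiplicative $\mathrm{poly}$ factor, still $\exp(\Od(j+k))$.) Collecting the coefficient of $t^j\beta^k$ then gives a sum over the $\exp(\Od(j))$ Paulis $P$ in $[H,\sigma^\mu_{\probeindex}]_j$ of $\bigl(\text{degree-}j\text{ poly from the commutator}\bigr)\times p_{k,P}(\params)$, each summand being a polynomial of degree $j+k$ with coefficient-sum $\exp(\Od(j))\cdot\exp(\Od(k)) = \exp(\Od(j+k))$; summing over the $\exp(\Od(j))$ Paulis preserves the $\exp(\Od(j+k))$ bound on the total coefficient-sum. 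The runtime bound follows by bookkeeping: enumerating the terms of $[H,\sigma^\mu_{\probeindex}]_j$ takes $\exp(\Od(j))$ time (iterated commutator expansion), and for each of the resulting Paulis computing $p_{k,P}$ takes $\exp(\Od(k))$ time by the last sentence of Theorem~\ref{thm:hightemp}, so the total is $\exp(\Od(j+k))$.

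The main obstacle I anticipate is not any single estimate but the careful combinatorial bookkeeping of the nested commutator: one must verify that after $j$ commutators with the $\le \mathfrak{d}$ "overlapping" terms at each stage, the number of surviving Pauli strings is $\exp(\Od(j))$ rather than, say, $\exp(\Od(j\log j))$ or worse, and that their integer coefficients do not blow up faster than $\exp(\Od(j))$. This is where the low-intersection structure (bounded $\mathfrak{d}$) is essential — each commutator step multiplies the term count by at most $\mathcal{O}(\mathfrak{d})$ and the coefficient magnitudes by $\mathcal{O}(1)$ — and I would state this as a short standalone counting claim (a variant of the combinatorial lemmas underlying Theorem~\ref{thm:hightemp} and the Hadamard-based Hamiltonian-learning analyses of \cite{haah2024learning, huang2022learning}) before assembling the final bound. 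A secondary technical point is ensuring the double series genuinely converges absolutely in a neighborhood of $(t,\beta) = (0,0)$ so that "coefficient of $t^j\beta^k$" is well-defined and the two expansions can be interchanged; this follows from the $\exp(\Od(j+k))$ coefficient bounds together with the explicit radius-of-convergence estimates already available from Lemma~\ref{lem:hadamard} and Theorem~\ref{thm:hightemp}, applied with $\|\params\|_\infty \le \lmax$.
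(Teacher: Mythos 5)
Your proposal is correct and follows essentially the same route as the paper's proof: expand in $t$ via the Hadamard formula (Lemma~\ref{lem:hadamard}), bound the Pauli expansion of the nested commutators $[H,\sigma^\mu_{\probeindex}]_j$ by $\exp(\Od(j))$ terms with $\exp(\Od(j))$-bounded coefficients, and then invoke Theorem~\ref{thm:hightemp} for the $\beta$-expansion of each resulting Pauli expectation. The subtleties you flag (applicability of Theorem~\ref{thm:hightemp} to general $\mathcal{O}(j)$-local Paulis rather than terms of $H$, and the cluster-counting needed to get $\exp(\Od(j))$ rather than a superexponential term count) are real but are handled the same way, and in fact more tersely, in the paper's own argument.
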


\begin{proof}
    We can first Taylor expand $\obsvalue{\beta}{t}{\mu}{C} = \tr(C^\dagger[e^{\i Ht} \sigma^\mu_\probeindex e^{-\i Ht}] \cdot \rho_\beta)$ in $t$ to get 
    \begin{equation}
        \obsvalue{\beta}{t}{\mu}{C} = \sum^\infty_{j=0} \frac{1}{j!} (\i t)^j \, \tr(C^\dagger[[H, \sigma^\mu_\probeindex]_j] \cdot \rho_\beta) \label{eq:series_t}
    \end{equation}
    by the Hadamard formula (Lemma~\ref{lem:hadamard}). Note that the Pauli expansion of $[H, \sigma^\mu_\probeindex]_j$ consists of terms which are at most $j$ hops from site $\probeindex$, of which there are at most $\exp(\Od(j))$ each of locality $\mathcal{O}(j)$. So the Taylor coefficient $\obsvalue{\beta^{(k)}}{t^{(j)}}{\mu}{C}$ is a linear combination with $\exp(\mathcal{O}(j))$-bounded coefficients of at most $\exp(\Od(j))$ many polynomials of the form $p_k$ from Theorem~\ref{thm:hightemp}, each of which can be computed in time $\mathcal{O}(j\exp(\Od(k)))$ and has coefficients summing to $\exp(\Od(k))$.
\end{proof}

\noindent Next, we show that for the particular Hamiltonian to which we have probe access, we can estimate any Taylor coefficient $\obsvalue{\beta^{(k)}}{t^{(j)}}{\mu}{C}$ from experiments.

We begin by showing that we can estimate the time derivatives of $\obsvalue{\beta}{t}{\mu}{C}$ through finite differencing. Define
\begin{equation}
    \newdelt{j}{\beta}{\mu}{C}\triangleq \frac{\i^j}{j!}\tr(C^\dagger[[H,\sigma^\mu_\probeindex]_j]\cdot \rho_\beta)\,. \label{eq:newdelt}
\end{equation}

\begin{lemma}\label{lem:deltestimate}
    For any $\beta > 0$, given probe access to $H$ at inverse temperature $\beta$, one can estimate $\newdelt{j}{\beta}{\mu}{C}$ for any $\mu\in\{X,Y,Z\}$ and any single-qubit control channel $C$ to additive error $\mathcal{O}(t\exp(\Od(j)))$ with probability at least $1 - \delta$ using $n_{\sf query} = \Theta_{\mathfrak{d}}(t)^{-2j-2}\log 1/\delta$ queries and total evolution time $j^2 \Theta_{\mathfrak{d}}(t)^{-2j-1}\log1/\delta$.
\end{lemma}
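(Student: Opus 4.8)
The plan is to recover $\newdelt{j}{\beta}{\mu}{C}$ by recognizing it as the degree-$j$ Taylor coefficient in $t$ of the observable $f(t) \triangleq \obsvalue{\beta}{t}{\mu}{C}$, and to extract that coefficient by \emph{finite differencing} noisy estimates of $f$ on a short equispaced grid of evolution times. Expanding $f$ via the Hadamard formula (Lemma~\ref{lem:hadamard}) gives $f(t) = \sum_{k\ge 0} \newdelt{k}{\beta}{\mu}{C}\, t^k$, a series that converges absolutely for $|t|$ below some $\Theta_{\mathfrak{d}}(1)$ threshold. Fix step size $h$ (we will take $h\asymp t$, which is why we must assume $t = \mathcal{O}_{\mathfrak{d}}(1/j)$, so that the largest evolution time $jh$ stays within the radius of convergence). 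For each $\ell = 0,1,\ldots,j$, run $n$ independent repetitions of the experiment ``prepare $\rho_\beta$, apply the channel $C$ at site $\probeindex$, evolve for time $\ell h$, then measure $\sigma^\mu$ at $\probeindex$,'' and let $f_\ell\in[-1,1]$ be the empirical mean of the $\pm1$ outcomes. Since $\mathbb{E}[\text{outcome}] = \tr(\sigma^\mu_\probeindex (C[\rho_\beta])_H(\ell h)) = f(\ell h)$ exactly, a Hoeffding bound and a union bound over the $j+1$ grid points yield $|f_\ell - f(\ell h)| \le \epsilon_0$ for all $\ell$ simultaneously, with probability $\ge 1-\delta$, provided $n = \mathcal{O}(\epsilon_0^{-2}\log(j/\delta))$. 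We then output the $j$-th divided difference $\widehat{\newdelt{j}{\beta}{\mu}{C}} \triangleq (j!\,h^j)^{-1}\sum_{\ell=0}^{j}(-1)^{j-\ell}\binom{j}{\ell} f_\ell$. Note that this procedure, and all bounds below, are uniform over $\mu$, over the channel $C$, and over $\beta$.

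The analysis combines two ingredients. First, the elementary fact that the finite-difference operator $\Delta^j_h: g\mapsto \sum_\ell(-1)^{j-\ell}\binom{j}{\ell} g(\ell h)$ annihilates every polynomial of degree $<j$ and sends $t^j$ to the constant $j!\,h^j$; hence, writing $\hat f$ for the degree-$j$ truncation of the Taylor series of $f$, we get $(j!\,h^j)^{-1}\Delta^j_h[\hat f](0) = \newdelt{j}{\beta}{\mu}{C}$ with no error. Second, a coefficient/tail bound: because $C^\dagger$ is unital (so $\|C^\dagger[X]\|_\infty\le\|X\|_\infty$) and $\rho_\beta$ is a density matrix, $|\newdelt{k}{\beta}{\mu}{C}| \le (k!)^{-1}\|[H,\sigma^\mu_\probeindex]_k\|_\infty \le \exp(\Od(k))$, the last inequality being exactly the nested-commutator locality estimate already used to prove Lemma~\ref{lem:fewpolys}. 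Consequently, at any point $s$ with $|s|\le\Theta_{\mathfrak{d}}(1)$ we have $|f(s)-\hat f(s)| = |\sum_{k>j}\newdelt{k}{\beta}{\mu}{C}\,s^k| \le \exp(\Od(j))\,|s|^{j+1}$.

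Putting these together, decompose the error as $\widehat{\newdelt{j}{\beta}{\mu}{C}} - \newdelt{j}{\beta}{\mu}{C} = (j!\,h^j)^{-1}\Delta^j_h[\hat F - f](0) + (j!\,h^j)^{-1}\Delta^j_h[f - \hat f](0)$, where $\hat F$ is the empirical function and we have used the first ingredient. Bounding $\Delta^j_h$ of any function by $2^j$ times its supremum on the grid, the \emph{statistical} term is at most $2^j (j!\,h^j)^{-1}\epsilon_0$ and the \emph{truncation} term is at most $2^j(j!\,h^j)^{-1}\exp(\Od(j))(jh)^{j+1}$. Taking $h \asymp t$ and using the Stirling estimate $j^{j+1}/j! \le j\,e^j$, the truncation term collapses to $\exp(\Od(j))\,t$; choosing the per-point accuracy $\epsilon_0 = \Theta_{\mathfrak{d}}(t)^{j+1}$ makes the statistical term $\exp(\Od(j))\,t$ as well, for total error $\mathcal{O}(t\exp(\Od(j)))$. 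This value of $\epsilon_0$ forces $n = \Theta_{\mathfrak{d}}(t)^{-2j-2}\log(1/\delta)$ shots per grid point, hence $n_{\sf query} = (j+1)n = \Theta_{\mathfrak{d}}(t)^{-2j-2}\log(1/\delta)$ queries and total evolution time $n\sum_{\ell=0}^j \ell h = \Theta(j^2)\cdot t\cdot\Theta_{\mathfrak{d}}(t)^{-2j-2}\log(1/\delta) = j^2\,\Theta_{\mathfrak{d}}(t)^{-2j-1}\log(1/\delta)$, matching the claim. (Using instead a rescaled grid of total width $\Theta(t)$ would relax the requirement $t=\mathcal{O}_{\mathfrak d}(1/j)$ at the cost of a single extra factor of $j$ in the evolution-time bound.)

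The only delicate point is the bookkeeping. The three factors entering the error estimate — the finite-difference weights summing to $2^j$, the $(jh)^{j+1}$ arising from evaluating the Taylor tail near the far end of the grid, and the Vandermonde normalization $1/(j!\,h^j)$ — are each naively as large as $j^{\Theta(j)}$, so it is not a priori evident that their product is only $\exp(\Od(j))$. The resolution is the cancellation $j^{j+1}/j!\le j\,e^j$: the super-exponential growth of the grid-tail term is precisely offset by the $j!$ in the denominator, leaving the benign $\exp(\Od(j))$ overhead in both the error and the sample size. The secondary subtlety, as noted, is simply that the grid must remain inside the radius of convergence of the $t$-expansion of $f$, which is what pins $t$ to the $\mathcal{O}_{\mathfrak d}(1/j)$ regime. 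Everything else — the Hoeffding and union bounds, and the commutator norm estimate — is routine and already available from Lemmas~\ref{lem:hadamard} and~\ref{lem:fewpolys}.
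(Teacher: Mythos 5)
Your proposal is correct and follows essentially the same route as the paper's proof: finite differencing of $\obsvalue{\beta}{t}{\mu}{C}$ over the equispaced grid $\{\ell t\}_{\ell=0}^{j}$, truncation control via the Hadamard formula together with the bound $\norm{[H,\sigma^\mu_\probeindex]_{j+1}}_{\sf op}\le\exp(\Od(j))$, and Hoeffding estimation of each grid value to accuracy $\Theta_{\mathfrak{d}}(t)^{j+1}$, yielding the stated query and evolution-time bounds. Your more explicit bookkeeping (the $1/(j!\,h^j)$ normalization, the $2^j$ weight bound, and the $j^{j+1}/j!\le j\,e^j$ cancellation) is a careful elaboration of what the paper compresses into a single displayed inequality, not a different argument.
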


\begin{proof}
    Recalling the expansion of $\obsvalue{\beta}{t}{\mu}{C}$ in $t$ from Eq.~\eqref{eq:series_t}, by Lemma~\ref{lem:hadamard} we have
    \begin{equation}
        \Bigl|\frac{\i^j}{j!}\tr(C^\dagger[[H,\sigma^\mu_\probeindex]_j]\cdot \rho_\beta) - \frac{\i^j}{t^j} \sum^j_{\ell = 0} (-1)^{j-\ell}\binom{j}{\ell} \obsvalue{\beta}{\ell t}{\mu}{C}\Bigr| \le \mathcal{O}(t\exp(j+1) \norm{[H,\sigma^\mu_\probeindex]_{j+1}}_{\sf op})\,.
    \end{equation}
    Note that $\norm{[H,\sigma^\mu_\probeindex]_{j+1}}_{\sf op} \le \exp(\Od(j))$.
    One can estimate $\obsvalue{\beta}{t}{\mu}{C}$ for any $t$ to additive error $\epsilon$ with probability at least $1 - \delta$ using $\mathcal{O}(\log(1/\delta)/\epsilon^2)$ queries and total evolution time $t$, so by taking $\epsilon = \Od(t)^{j+1}$, we can estimate $\newdelt{j}{\beta}{\mu}{C}$ to additive error $\mathcal{O}(t\exp(\Od(j)))$ using $n_{\sf query} = \Theta_{\mathfrak{d}}(t)^{-2j-2}\log 1/\delta$ queries and total evolution time $\mathcal{O}(j^2 n_{\sf query} t) = j^2 \Theta_{\mathfrak{d}}(t)^{-2j-1} \log 1/\delta$ as claimed.
\end{proof}

\noindent The above Lemma shows how to estimate the normalized time derivatives $\newdelt{j}{\beta}{\mu}{C}$ at fixed $\beta$. To recover the full Taylor coefficients of $\obsvalue{\beta^{(k)}}{t^{(j)}}{\mu}{C}$ in both $t$ and $\beta$, we must additionally expand in the inverse temperature and control the associated truncation errors. The next result formalizes this step.


\begin{lemma}\label{lem:derivestimate}
    There is a constant $\beta_{\sf crit}$ depending only on the locality of $H$ and the degree of its dual interaction graph such that the following holds. Let $j,k\in\mathbb{Z}_{\ge 0}$, and let $\beta \le \beta_{\sf crit} / k$. Given probe access to $H$ at temperatures no higher than $1/\beta$, one can estimate $\obsvalue{\beta^{(k)}}{t^{(j)}}{\mu}{C}$ for any $\mu\in\{X,Y,Z\}$ and any single-qubit control channel $C$ to additive error $\nu + \beta\, \Od(k)^{k+1} \exp(\Od(j))$ with probability at least $1 - \delta$ using $\exp(\Od(j(j+k)))\cdot \mathcal{O}((\nu\beta^k)^{-2j-2} \log 1/\delta)$ queries with total evolution time $\exp(\Od(j(j+k)))\cdot \mathcal{O}((\nu\beta)^{-2j-1}  \log 1/\delta)$.
\end{lemma}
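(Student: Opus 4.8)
The plan is to recover $\obsvalue{\beta^{(k)}}{t^{(j)}}{\mu}{C}$ through two nested rounds of finite differencing. Recall from Eq.~\eqref{eq:series_t} that $\newdelt{j}{\beta'}{\mu}{C}$ is exactly the coefficient of $t^j$ in the time-expansion of $\obsvalue{\beta'}{t}{\mu}{C}$, so by Lemma~\ref{lem:fewpolys} the quantity we want is precisely the coefficient of $(\beta')^k$ in the function $\beta'\mapsto\newdelt{j}{\beta'}{\mu}{C}$. I would therefore first use Lemma~\ref{lem:deltestimate} to estimate $\newdelt{j}{\beta_\ell}{\mu}{C}$, to some intermediate accuracy $\epsilon_0$ fixed below, at each of the $k+1$ inverse temperatures $\beta_\ell\triangleq\ell\beta$, $\ell=0,\ldots,k$; the hypothesis $\beta\le\beta_{\sf crit}/k$ keeps all of these in $[0,\beta_{\sf crit}]$, inside the regime where the high-temperature expansion converges. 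Then I would output the $k$-th finite difference $\widehat A\triangleq\frac{1}{k!\,\beta^k}\sum_{\ell=0}^k(-1)^{k-\ell}\binom{k}{\ell}\hatdelt{j}{\beta_\ell}{\mu}{C}$.

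To analyze the error, I would expand $\newdelt{j}{\beta'}{\mu}{C}=\sum_{m\ge0}a_m(\beta')^m$, where by Lemma~\ref{lem:fewpolys} together with Theorem~\ref{thm:hightemp} each coefficient is $a_m=\obsvalue{\beta^{(m)}}{t^{(j)}}{\mu}{C}$, a polynomial in $\params$ of degree at most $j+m$ with $|a_m|\le\exp(\Od(j+m))$ on $\norm{\params}_\infty\le\lmax$, and $a_k$ is the target. Using the identity $\sum_{\ell=0}^k(-1)^{k-\ell}\binom{k}{\ell}\ell^m=k!\,S(m,k)$ ($S$ the Stirling numbers of the second kind, vanishing for $m<k$ and equal to $1$ for $m=k$), the noiseless finite difference equals $a_k+\sum_{m>k}a_m\beta^{m-k}S(m,k)$; since $S(m,k)\le k^{m-k}$ and $\beta k\le\beta_{\sf crit}$, for $\beta_{\sf crit}$ a small enough constant (depending only on the locality and $\mathfrak{d}$) this tail is a convergent geometric series dominated by its first term and bounded by $\beta\cdot\exp(\Od(j+k))\le\beta\,\Od(k)^{k+1}\exp(\Od(j))$ — exactly the truncation term in the statement. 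The noise contributes at most $\frac{2^k}{k!\,\beta^k}\epsilon_0$ to $\widehat A$, so I would set $\epsilon_0\triangleq\nu\,k!\,\beta^k/2^k$ to make this at most $\nu$, giving total error $\nu+\beta\,\Od(k)^{k+1}\exp(\Od(j))$ as claimed.

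The resource bounds then follow mechanically: invoking Lemma~\ref{lem:deltestimate} at each $\beta_\ell$ with target accuracy $\epsilon_0$ and failure probability $\delta/(k+1)$ requires a time-differencing step $t\asymp\epsilon_0\exp(-\Od(j))$, costing $\exp(\Od(j^2))\,\epsilon_0^{-2j-2}\log(1/\delta)$ queries per $\ell$; substituting $\epsilon_0=\nu\,k!\,\beta^k/2^k$, absorbing the factor $(2^k/k!)^{\Theta(j)}\le\exp(\Od(jk))$, and summing over the $k+1$ invocations yields the claimed query count $\exp(\Od(j(j+k)))\cdot\mathcal{O}((\nu\beta^k)^{-2j-2}\log(1/\delta))$, and the stated total evolution time follows in the same way from the $j^2\Theta_{\mathfrak{d}}(t)^{-2j-1}\log(1/\delta)$ bound of Lemma~\ref{lem:deltestimate}. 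A union bound over the $k+1$ invocations gives overall success probability $1-\delta$.

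I expect the one genuinely delicate step to be the truncation analysis of the \emph{outer} (inverse-temperature) difference. Unlike the inner difference — where the Hadamard formula (Lemma~\ref{lem:hadamard}) controls the time-truncation tail for any step size — the high-temperature series for $\tr(\sigma_a\rho_{\beta'})$ converges only below a locality-dependent radius, and a $k$-th difference with step $\Theta(\beta)$ samples inverse temperatures up to $k\beta$; it is precisely to keep those inside the radius of convergence, and to force the Stirling-weighted tail $\sum_{m>k}a_m\beta^{m-k}S(m,k)$ to remain geometrically summable, that the hypothesis $\beta\le\beta_{\sf crit}/k$ is needed. Everything else is routine bookkeeping of the propagated errors and of the sampling cost imported from Lemma~\ref{lem:deltestimate}.
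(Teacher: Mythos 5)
Your proposal is correct and follows essentially the same route as the paper: an inner finite difference in $t$ via Lemma~\ref{lem:deltestimate}, an outer $k$-th finite difference in $\beta$ at the nodes $0,\beta,\ldots,k\beta$, a truncation bound on the high-temperature tail (your Stirling-number computation is just a more explicit form of the paper's term-by-term Taylor-error bound, and the slight looseness in $S(m,k)\le k^{m-k}$ is absorbed by the $\Od(k)^{k+1}$ slack), and the same choice of intermediate accuracy $\epsilon_0\asymp\nu\beta^k$. The only detail the paper handles that you elide is the $\ell=0$ node, which corresponds to infinite temperature and is outside the stated access model; the paper simply observes $\newdelt{j}{0}{\mu}{C}=0$ so no experiment is needed there.
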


\noindent In particular, when $j,k,\mathfrak{d} = \mathcal{O}(1)$, the corresponding Taylor coefficient can be estimated to error $\nu + \mathcal{O}(\beta)$ in $\poly(1/\beta,1/\nu)$ queries and evolution time.

\begin{proof}
     For $\beta > 0$, let the estimates for $\newdelt{j}{\beta}{\mu}{C}$ from Lemma~\ref{lem:deltestimate} be denoted by $\hatdelt{j}{\beta}{\mu}{C}$. For $\beta = 0$, note that $\newdelt{j}{0}{\mu}{C} = 0$, so define $\hatdelt{j}{0}{\mu}{C} = 0$; we will not need to probe the system at infinite temperature to produce this estimate. To compute the Taylor coefficient $\obsvalue{\beta^{(k)}}{t^{(j)}}{\mu}{C}$, we must estimate the coefficient of $\beta^k$ in the series expansion of $\newdelt{j}{\beta}{\mu}{C}$ in $\beta$. We once again use finite differencing, but this time we account for the truncation error using Theorem~\ref{thm:hightemp} and Lemma~\ref{lem:fewpolys} instead of Lemma~\ref{lem:hadamard}. We want to bound the approximation error
    \begin{multline}
        \Bigl|\obsvalue{\beta^{(k)}}{t^{(j)}}{\mu}{C} - \frac{1}{\beta^k k!} \sum^k_{\ell = 0} (-1)^{k-\ell} \binom{k}{\ell} \hatdelt{j}{\ell\beta}{\mu}{C} \Bigr| \\
        \le \Bigl|\obsvalue{\beta^{(k)}}{t^{(j)}}{\mu}{C} - \frac{1}{\beta^k k!} \sum^k_{\ell = 0} (-1)^{k-\ell} \binom{k}{\ell} \newdelt{j}{\ell\beta}{\mu}{C} \Bigr| + \frac{1}{k!}(2/\beta)^k\cdot \mathcal{O}(t\exp(\Od(j)))\,. \label{eq:betaderiv}
    \end{multline}
    Recall Eq.~\eqref{eq:series_t}, which in our notation can now be expressed as $\obsvalue{\beta}{t}{\mu}{C} = \sum^\infty_{j=0} \newdelt{j}{\beta}{\mu}{C} t^j$. By the reasoning in the proof of Lemma~\ref{lem:fewpolys}, the Taylor error in truncating $\newdelt{j}{\ell\beta}{\mu}{C}$ past the $(\ell\beta)^k$ term is bounded by $\sum^\infty_{k'=k+1} (\ell\beta)^{k'} q_{k'}$ where each $q_{k'}$ is a linear combination with $\exp(\mathcal{O}(j))$-bounded coefficients of at most $\exp(\Od(j))$ many polynomials of the form $p_{k'}$ from Theorem~\ref{thm:hightemp}, each of which has coefficients summing to $\exp(\Od(k))$. The Taylor error is thus at most $\sum^\infty_{k'=k+1} (\ell\beta)^{k'} \cdot \exp(\Od(j+k'))$, which is bounded by $(\ell\beta)^{k+1} \exp(\Od(j))$ provided $k\beta$ is bounded by some absolute constant $\beta_{\sf crit}$ depending only on the locality of $H$ and $\mathfrak{d}$. We can thus bound the first term on the right-hand side of Eq.~\eqref{eq:betaderiv} by $\beta \Od(k)^{k+1} \exp(\Od(j))$.

    Altogether, this requires estimating $\hatdelt{j}{\ell\beta}{\mu}{C}$ for $0\le\ell\le k$, each to additive error $\mathcal{O}(t\exp(\Od(j)))$, for a total of $(k+1)n_{\sf query}$ queries and total evolution time $j^2 k\, \Theta_{\mathfrak{d}}(t)^{-2j-1} \log 1/\delta$, and the aggregate error to which we can estimate $\obsvalue{\beta^{(k)}}{t^{(j)}}{\mu}{C}$ is 
    \begin{equation}
        \beta \Od(k)^{k+1} \exp(\Od(j)) + \frac{1}{k!}(2/\beta)^k\cdot \mathcal{O}(t\exp(\Od(j)))
    \end{equation}
    We can thus take $t = \nu\beta^k \cdot \exp(-\Theta_{\mathfrak{d}}(j+k))$, resulting in the claimed complexity bounds.
\end{proof}

\noindent In the last stage of our algorithm, we will also need to make use of the following truncation error bound for approximating derivatives of $\newdelt{j}{\beta}{\mu}{C}$. Write $C^\dagger[[H,\sigma^\mu_\probeindex]_j] = \sum_a c_a \sigma_a$, recalling from the proofs of Lemma~\ref{lem:fewpolys} and Lemma~\ref{lem:derivestimate} that the Paulis are $\mathcal{O}(j)$-local, there are at most $\exp(\Od(j))$ many, and their coefficients are $\exp(\mathcal{O}(j))$-bounded. Letting $p_{a;k}$ denote the polynomial $p_k$ arising in the series expansion of $\tr(\sigma_a \rho_\beta)$ in Theorem~\ref{thm:hightemp}, we thus have $\newdelt{j}{\beta}{\mu}{C} = \sum^\infty_{k=0} \beta^k \sum_a c_a  p_{a;k}$ and more generally
\begin{equation}
    \frac{1}{k!}\frac{\partial^k}{\partial \beta^k} \newdelt{j}{\beta}{\mu}{C} = \sum_{k'\ge k} \beta^{k'-k} \binom{k'}{k} \sum_a c_a p_{a;k'}\,.
\end{equation}

\begin{lemma}\label{lem:hideg}
    Let $\epsilon, h > 0$. Let $j,k\in\mathbb{Z}_{\ge 0}$, and define $\overline{k}$ to satisfy $\overline{k} - k = \Theta_{\mathfrak{d}}\Bigl(\frac{\log (1/\epsilon) + j + k}{\log 1/\beta}\Bigr)$ for sufficiently large constant factor. Let $\beta \le \beta_{\sf crit} - kh$. 
    
    Given probe access to $H$ at temperatures no higher than $1/\beta$, one can estimate 
    \begin{equation}
        \hideg{j}{k}{\overline{k}} \triangleq \sum_{k \le k' \le \overline{k}} \beta^{k'-k} \binom{k'}{k} \sum_a c_a p_{a;k'} = \obsvalue{\beta^{(k)}}{t^{(j)}}{\mu}{C} + \sum_{k < k' \le \overline{k}} \beta^{k'-k} \binom{k'}{k} \sum_a c_a p_{a;k'} \label{eq:extraterms}
    \end{equation}
    for any $\mu\in\{X,Y,Z\}$ and any single-qubit control channel $C$ to additive error $\epsilon + h\,\Od(k)^{k+1}\exp(\Od(j))$ with probability at least $1 - \delta$ using $\Od(1/h)^{2k(j+1)}\cdot \mathcal{O}(\log 1/\delta)$ queries with total evolution time $\Od(1/h)^{k(2j+1)}\cdot \mathcal{O}(\log 1/\delta)$.

    In addition, the polynomial in Eq.~\eqref{eq:extraterms}, regarded as a formal polynomial $\hideg{j}{k}{\overline{k}}(\params)$ in the unknown parameters $\params$ of $H$, can be computed in time $\Theta_{\mathfrak{d}}\Bigl(\frac{\log (1/\epsilon) + j + k}{\log 1/\beta}\Bigr)\cdot \exp(\Od(j+k))$.
\end{lemma}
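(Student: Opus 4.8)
The plan is to prove Lemma~\ref{lem:hideg} by adapting the finite-differencing estimator from Lemma~\ref{lem:derivestimate}, but now estimating a \emph{high-degree} truncation rather than a single Taylor coefficient, and then tracking two separate error contributions: (i) the Taylor truncation error from dropping terms of $\beta$-degree above $\overline{k}$ in $\newdelt{j}{\beta}{\mu}{C}$, and (ii) the statistical and time-truncation error from estimating the individual $\newdelt{j}{\ell h}{\mu}{C}$ quantities via Lemma~\ref{lem:deltestimate}. First I would observe that, by exactly the same reasoning as in the proofs of Lemma~\ref{lem:fewpolys} and Lemma~\ref{lem:derivestimate}, writing $C^\dagger[[H,\sigma^\mu_\probeindex]_j] = \sum_a c_a \sigma_a$ with $\exp(\Od(j))$ Paulis each of locality $\mathcal{O}(j)$ and coefficients of magnitude $\exp(\mathcal{O}(j))$, the coefficient of $\beta^{k'-k}$ in $\frac{1}{k!}\partial_\beta^k \newdelt{j}{\beta}{\mu}{C}$ is $\binom{k'}{k}\sum_a c_a p_{a;k'}$, whose absolute coefficient sum is at most $\binom{k'}{k}\exp(\Od(j+k')) \le \exp(\Od(j+k'))$ (since $\binom{k'}{k} \le 2^{k'}$). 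Hence $\hideg{j}{k}{\overline{k}}$ is obtained from $\frac{1}{k!}\partial_\beta^k\newdelt{j}{\beta}{\mu}{C}$ by dropping the tail $\sum_{k'>\overline{k}} h^{k'-k}\binom{k'}{k}\sum_a c_a p_{a;k'}$, which for $|\beta| \le \beta_{\sf crit} - kh$ (so that $\ell h \le \beta_{\sf crit}$ for all $\ell \le k$, keeping the geometric series in $h$ summable) is bounded by $\sum_{k'>\overline{k}} h^{k'-k}\exp(\Od(j+k')) \le h^{\overline{k}-k}\exp(\Od(j+\overline{k}))$. Plugging in the definition $\overline{k} - k = \Theta_{\mathfrak{d}}\bigl(\tfrac{\log(1/\epsilon)+j+k}{\log 1/\beta}\bigr)$ with a large enough constant, the factor $h^{\overline{k}-k} = \beta^{\overline{k}-k}$ beats $\exp(\Od(j+\overline{k}))$ and the tail is at most $\epsilon$.

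Next I would set up the estimator. We estimate $\hideg{j}{k}{\overline{k}}$ by the finite difference $\frac{1}{h^k k!}\sum_{\ell=0}^k (-1)^{k-\ell}\binom{k}{\ell}\hatdelt{j}{\ell h}{\mu}{C}$, where $\hatdelt{j}{\ell h}{\mu}{C}$ are the Lemma~\ref{lem:deltestimate} estimates (and $\hatdelt{j}{0}{\mu}{C} = 0$ exactly, so infinite temperature need not be probed). The finite difference of the \emph{exact} functions $\newdelt{j}{\ell h}{\mu}{C}$ recovers $\hideg{j}{k}{\overline{k}}$ up to the same $\beta$-tail already bounded; here I would note the finite-difference identity $\frac{1}{h^k k!}\sum_\ell (-1)^{k-\ell}\binom{k}{\ell} g(\ell h) = \sum_{k'\ge k} h^{k'-k}\binom{k'}{k}[\beta^{k'}]g$ for any analytic $g$, which is exactly the structure we want; truncating at $\overline{k}$ leaves precisely the tail. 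The error from using $\hatdelt{}{}{}{}$ instead of $\newdelt{}{}{}{}$ is at most $\frac{1}{h^k k!}\sum_\ell \binom{k}{\ell}\cdot(\text{error of Lemma~\ref{lem:deltestimate}})$; by Lemma~\ref{lem:deltestimate} we can make the per-estimate error an arbitrary $\mathcal{O}(t\exp(\Od(j)))$ at the cost of $\Theta_{\mathfrak{d}}(t)^{-2j-2}\log 1/\delta$ queries, so choosing $t$ so that $\frac{2^k}{h^k}\cdot \mathcal{O}(t\exp(\Od(j))) \le h\,\Od(k)^{k+1}\exp(\Od(j))$, i.e. $t = h^{k+1}\exp(-\Theta_{\mathfrak{d}}(j+k))$, yields the stated error term $h\,\Od(k)^{k+1}\exp(\Od(j))$. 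Substituting this $t$ into the query and evolution-time bounds of Lemma~\ref{lem:deltestimate}, summed over the $k+1$ values of $\ell$ and amplified to confidence $1-\delta$ by a union bound, gives $\Od(1/h)^{2k(j+1)}\cdot\mathcal{O}(\log 1/\delta)$ queries and total evolution time $\Od(1/h)^{k(2j+1)}\cdot\mathcal{O}(\log1/\delta)$ as claimed.

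For the final sentence of the lemma — the classical running time to write down $\hideg{j}{k}{\overline{k}}(\params)$ as a formal polynomial — I would invoke Theorem~\ref{thm:hightemp}, which says the terms of each $p_{a;k'}$ can be computed in time $\exp(\Od(k'))$ (using $\mathcal{O}(j)$-locality of $\sigma_a$, which contributes only a $\poly(j,k')$ factor), together with the fact that there are $\exp(\Od(j))$ Paulis $\sigma_a$ and $\overline{k}-k+1 = \Theta_{\mathfrak{d}}\bigl(\tfrac{\log(1/\epsilon)+j+k}{\log 1/\beta}\bigr)$ values of $k'$; multiplying the binomial factors $\binom{k'}{k}$ and coefficients $c_a$ through and collecting by monomial gives total time $(\overline{k}-k)\cdot\exp(\Od(j+\overline{k})) = \Theta_{\mathfrak{d}}\bigl(\tfrac{\log(1/\epsilon)+j+k}{\log 1/\beta}\bigr)\cdot\exp(\Od(j+k))$, absorbing the $\exp(\Od(\overline{k}-k))$ factor into $\exp(\Od(j+k))$ since $\overline{k}-k$ is logarithmic. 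The main obstacle, and the one step deserving genuine care rather than bookkeeping, is the \emph{joint} calibration of the three free parameters $h$ (the $\beta$-finite-difference spacing), $t$ (the time-finite-difference spacing hidden inside Lemma~\ref{lem:deltestimate}), and $\overline{k}$ (the truncation degree): one must verify that the choice $\overline{k}-k \sim \tfrac{\log(1/\epsilon)+j+k}{\log 1/\beta}$ with a sufficiently large constant really does suppress the $h^{\overline{k}-k}\exp(\Od(j+\overline{k}))$ tail below $\epsilon$ uniformly, while simultaneously the constraint $\beta \le \beta_{\sf crit}-kh$ keeps every shifted evaluation point $\ell h \le \beta_{\sf crit}$ inside the radius of convergence guaranteed by Theorem~\ref{thm:hightemp}, and that the resulting $t$ does not blow up the evolution-time budget beyond $\Od(1/h)^{k(2j+1)}$; everything else is a direct transcription of the estimators already built in Lemmas~\ref{lem:deltestimate} and~\ref{lem:derivestimate}.
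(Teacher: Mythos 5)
There is a genuine gap in the central step: your finite-difference estimator is anchored at inverse temperature $0$ rather than at $\beta$, and as a result it does not estimate $\hideg{j}{k}{\overline{k}}$. The quantity $\hideg{j}{k}{\overline{k}}$ is a truncation of $\frac{1}{k!}\partial_\beta^k \newdelt{j}{\beta}{\mu}{C}$ evaluated \emph{at} $\beta$ --- that is where the coefficients $\beta^{k'-k}\binom{k'}{k}$ come from, via $\frac{1}{k!}\frac{d^k}{d\beta^k}\beta^{k'} = \binom{k'}{k}\beta^{k'-k}$. The paper accordingly forms the forward difference over the points $\beta,\beta+h,\ldots,\beta+kh$, which is exactly why the hypothesis reads $\beta \le \beta_{\sf crit} - kh$ (so that $\beta+kh\le\beta_{\sf crit}$ stays inside the convergence radius of Theorem~\ref{thm:hightemp}), and why the finite-difference error scales with the spacing $h$ rather than with $\beta$. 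Your estimator $\frac{1}{h^k k!}\sum_\ell(-1)^{k-\ell}\binom{k}{\ell}\hatdelt{j}{\ell h}{\mu}{C}$ instead approximates the $k$-th derivative at $0$, i.e.\ the bare Taylor coefficient $\obsvalue{\beta^{(k)}}{t^{(j)}}{\mu}{C}$, up to $\mathcal{O}(h)$; the correction terms $\sum_{k<k'\le\overline{k}}\beta^{k'-k}\binom{k'}{k}\sum_a c_a p_{a;k'}$, which are generically of size $\Theta(\beta)$ at the fixed working temperature and are the entire point of this lemma (they are what lets Theorem~\ref{thm:main_general} decouple $\beta$ from the target accuracy $\epsilon$), are simply absent from your estimator. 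The resulting bias is $\mathcal{O}(\beta)$, not $\epsilon + h\,\Od(k)^{k+1}\exp(\Od(j))$.

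Two symptoms of the same confusion: (i) the identity you invoke, $\frac{1}{h^k k!}\sum_\ell(-1)^{k-\ell}\binom{k}{\ell}g(\ell h) = \sum_{k'\ge k}h^{k'-k}\binom{k'}{k}[\beta^{k'}]g$, is false --- for a forward difference anchored at $0$ the correct coefficients are Stirling numbers of the second kind $S(k',k)$ (e.g.\ $S(4,3)=6\neq 4=\binom{4}{3}$), whereas the binomials $\binom{k'}{k}$ arise only from differentiating at $\beta$; and (ii) you equate $h^{\overline{k}-k}$ with $\beta^{\overline{k}-k}$ when bounding the tail, but $h$ and $\beta$ are independent parameters here --- the tail of the target polynomial is in powers of $\beta$, while the finite-difference error is linear in $h$. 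The remaining ingredients of your write-up --- the tail bound from the choice of $\overline{k}$, taking $t=h^{k+1}$ inside Lemma~\ref{lem:deltestimate}, the query and evolution-time accounting, and the classical time to write down the formal polynomial --- match the paper's argument and go through once the estimator is re-anchored at $\beta$.
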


\noindent When $j,k,\mathfrak{d} = \mathcal{O}(1)$, the corresponding degree-$(k + \Theta(\frac{\log 1/\epsilon}{\log 1/\beta}))$ truncated polynomial can be estimated to error $\epsilon + \mathcal{O}(h)$ in $\poly(1/h)$ queries and evolution time, and the polynomial can be written down as a formal polynomial in the unknown parameters of $H$ in time $\Theta(\frac{\log 1/\epsilon}{\log 1/\beta})$. Importantly, only the degree of the truncation, and not the complexity of estimation, depends on $\epsilon$.

\begin{proof}
    We can control the truncation error via:
    \begin{align}
        \Bigl|\frac{1}{k!}\frac{\partial^k}{\partial \beta^k} \newdelt{j}{\beta}{\mu}{C} - \sum_{k \le k'\le \overline{k}} \beta^{k'-k} \binom{k'}{k} \sum_a c_a p_{a;k'}\Bigr| &\le \sum_{k'>\overline{k}}\beta^{k' - k}\binom{k'}{k} \sum_a |c_a| \cdot |p_{a;k'}| \\
        &\le \exp(\Od(j+k)) \sum_{k'> \overline{k}} \Od(\beta)^{k'-k} \\
        &\le \exp(\Od(j+k))\cdot \Od(\beta)^{\overline{k} - k}\,. \label{eq:higherdegree1}
    \end{align}
    In particular, by the choice of $\overline{k}$ in the lemma, the above is at most $\epsilon$.
    
    As in the proof of Lemma~\ref{lem:derivestimate}, let $\hatdelt{j}{\beta}{\mu}{C}$, recalling this was guaranteed to be $\mathcal{O}(t\exp(\Od(j)))$. Using finite differencing, for any $h > 0$ we can consider
    \begin{multline}
        \Bigl|\frac{\partial^k}{\partial \beta^k} \newdelt{j}{\beta}{\mu}{C} - \frac{1}{h^k} \sum^k_{\ell = 0} (-1)^{k-\ell}\binom{k}{\ell}\hatdelt{j}{\beta + \ell h}{\mu}{C}\Bigr| \\
        \le \Bigl|\frac{\partial^k}{\partial \beta^k} \newdelt{j}{\beta}{\mu}{C} - \frac{1}{h^k}\sum^k_{\ell=0}(-1)^{k-\ell}\binom{k}{\ell}\newdelt{j}{\beta + \ell h}{\mu}{C}\Bigr| + \mathcal{O}(t\exp(\Od(j)))\cdot (2/h)^k\,. \label{eq:higherdegree2}
    \end{multline}
    The first term on the right-hand side is almost identical to the Taylor error bounded at the end of the proof of Lemma~\ref{lem:derivestimate}; indeed, by the same argument, one can bound this by $h \Od(k)^{k+1}\exp(\Od(j))$; note that this scales with $h$ instead of $\beta$ because we are evaluating the estimates $\widehat{\Delta}^{(j)}_{\mu,C}$ at $\beta, \beta + h, \ldots, \beta+kh$ rather than $0,\beta,\ldots,k\beta$. Combining this with Eqs.~\eqref{eq:higherdegree1} and~\eqref{eq:higherdegree2}, we conclude that using the estimates $\hatdelt{j}{(\ell+1)\beta}{\mu}{C}$ for $0 \le \ell \le k$, we can estimate $\sum_{k\le k'\le \overline{k}}\beta^{k'-k}\binom{k'}{k} \sum_a c_a p_{a;k'}$ to error
    \begin{equation}
        \epsilon + h\,\Od(k)^{k+1}\exp(\Od(j)) + \mathcal{O}(t\exp(\Od(j)))\cdot (2/h)^k\,.
    \end{equation}
    In particular, if we take $t = h^{k+1}$, then the third term is dominated by the second, yielding the desired error bound. Substituting this choice of $t$ into the complexity bounds in Lemma~\ref{lem:deltestimate} for producing the estimates $\hatdelt{j}{\beta}{\mu}{C}$ yields the claimed complexity bound.
\end{proof}

The sequence of lemmas in this Section establishes that the Taylor coefficients of our probe observables are bounded-degree polynomials in the Hamiltonian parameters with controlled coefficients, and that they can be estimated to prescribed accuracy using a finite number of queries with bounded evolution time. Together, these results furnish the bridge between experimental data and the algebraic systems that underlie our convergence and identifiability analyses in subsequent sections.

\section{Certifying Generic Fibers for Rectangular Maps}
\label{sec:verify-generic-fiber}

Our goal in this section is to address the following question.  Suppose we have a set of $m$ polynomials in $N$ variables where $m \geq N$.  Now suppose we consider a generic input in $\mathbb{R}^N$ and evaluate the polynomials, and then take the inverse image of the result.  The original point will be in that inverse image (the fiber), but other points may be as well, possibly including complex-valued ones.  Here we show, under certain conditions, that generically the size of these fibers is constant.  This will enter into our analysis in the following way.  Our $m$ probe interrogations of a thermal system  result in $m$ numbers, which we can regard as the right-hand side of a polynomial system in $N$ variables describing the Hamiltonian.  Then we want to reconstruct the variables of that Hamiltonian, and it is useful to know under generic conditions how many `solutions' to the Hamiltonian parameters we will find given our measurements.  Among those solutions will be the `true' Hamiltonian parameters, but there may be others.  We will choose measurements so that the solutions can be classified and interpreted.

With this motivation, we turn to the problem of verifying the size of `generic' fibers of a polynomial system, under sensible conditions satisfied by our physical example.  To this end, consider a set of $m$ polynomials with rational coefficients, namely
\begin{align}
P=(p_0,\ldots,p_{m-1})\subset \mathbb{Q}[X_1,\ldots,X_N],\quad m\ge N\,.
\end{align}
We view this as a polynomial map $P=(p_0,\ldots,p_{m-1}):\mathbb{C}^N\to\mathbb{C}^m$ and define its image variety as the Zariski closure $\mathbb{Y}=\overline{P(\mathbb{C}^N)}^{\mathrm{Zar}}\subset\mathbb{C}^m$. More formally, our goal is to certify the generic degree
\begin{align}
g := \text{gdeg}\bigl(P:\mathbb{C}^N\to \mathbb{Y}\bigr),
\end{align}
i.e.~the number of points in a generic fiber $P^{-1}(\bc)$ for $\bc \in \mathbb{Y}$.  Moreover, for a real-valued right-hand side $\bc \in P(\R^N)\subset \mathbb{Y}\cap\R^m$, we want to verify that the complex fiber $P^{-1}(\bc)\subset\mathbb{C}^N$ has size $g$ for generic choices of $\bc$ in the real image. We emphasize that the fiber may include solutions with complex coordinates, and our certificate will count the total number of complex points.

We let $J_P(x)$ denote the $m \times N$ Jacobian matrix of $P$ and call an $N\times N$ minor of $J_P$ a Jacobian minor.  In the special case that $m = N$, the unique $N \times N$ Jacobian minor is just the determinant of the Jacobian.  We have the following lemma and theorem.

\begin{lemma}\label{lem:rank-dim}
If some  $N \times N$ Jacobian minor of  $J_P$ is not the zero polynomial, then $\dim \,\mathbb{Y} = N$.
\end{lemma}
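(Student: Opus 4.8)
The plan is to relate the generic rank of the Jacobian $J_P$ to the dimension of the image variety $\mathbb{Y}$, using the fact that for a polynomial (hence algebraic) map, the generic rank of the differential equals the dimension of the image closure. Since $\mathbb{Y} = \overline{P(\mathbb{C}^N)}^{\mathrm{Zar}}$ and $P$ is dominant onto $\mathbb{Y}$ by construction, it suffices to show that $\dim \mathbb{Y} \ge N$; the reverse inequality $\dim \mathbb{Y} \le N$ is automatic since $\mathbb{Y}$ is the closure of the image of an $N$-dimensional space under a morphism. So the real content is the lower bound.

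First I would recall the standard fact (e.g.\ from the theory of algebraic maps over $\mathbb{C}$, or via generic smoothness / the constant-rank theorem in the analytic category applied on a Zariski-dense open set) that $\dim \mathbb{Y}$ equals the maximum over $x \in \mathbb{C}^N$ of $\mathrm{rank}\, J_P(x)$, and moreover this maximal rank is attained on a Zariski-dense open subset of $\mathbb{C}^N$. Then, observe that the hypothesis — some $N \times N$ minor of $J_P$ is a nonzero polynomial in $X_1,\ldots,X_N$ — says precisely that there exists a point $x_0$ (in fact a Zariski-dense open set of points) where that minor does not vanish, so $\mathrm{rank}\, J_P(x_0) = N$. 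Since the rank can be at most $N$ (the Jacobian has $N$ columns), the generic rank is exactly $N$, giving $\dim \mathbb{Y} = N$.

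Concretely the steps are: (1) note $\dim \mathbb{Y} \le N$ since $\dim \overline{P(\mathbb{C}^N)} \le \dim \mathbb{C}^N = N$; (2) invoke the rank–dimension theorem for algebraic morphisms to identify $\dim \mathbb{Y}$ with the generic rank of $J_P$; (3) use that a nonzero polynomial is nonvanishing on a dense open set, so the nonzero $N\times N$ minor witnesses generic rank $N$; (4) combine to conclude $\dim \mathbb{Y} = N$.

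I expect the only subtlety — and the step I would be most careful about — is citing the rank–dimension correspondence in the correct generality: one wants the statement that for $P : \mathbb{C}^N \to \mathbb{C}^m$ a morphism of varieties, $\dim \overline{\mathrm{Im}(P)}$ equals the generic rank of the Jacobian (equivalently, the rank of the differential at a generic point). This is cleanest over an algebraically closed field of characteristic zero, where generic smoothness of the dominant map $P : \mathbb{C}^N \to \mathbb{Y}$ guarantees that the differential is surjective onto $T_c\mathbb{Y}$ at a generic point, so its rank is $\dim \mathbb{Y}$; conversely the rank never exceeds $\dim \mathbb{Y}$ since the image of the differential lies in (the tangent space of) $\mathbb{Y}$. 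Everything else is routine.
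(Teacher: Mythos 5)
Your proposal is correct and follows essentially the same route as the paper: both arguments use the nonvanishing minor to produce a point where $J_P$ has full rank $N$, conclude that the image is locally $N$-dimensional there (you via the generic-rank/image-dimension correspondence, the paper via the holomorphic implicit function theorem in constant-rank form — the same underlying tool), and combine this with the trivial bound $\dim\mathbb{Y}\le N$.
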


\begin{proof}
Choose $\textbf{x}_0\in\mathbb{C}^N$ where that minor is nonzero. Then the differential $dP_{\textbf{x}_0}$ has rank $N$. By the holomorphic implicit function theorem (in constant-rank normal form), in neighborhoods of $\textbf{x}_0$ and $\textbf{c}_0:=P(\textbf{x}_0)$ there are holomorphic coordinates in which $P$ is a submersion onto an $N$–dimensional complex submanifold through $\textbf{c}_0$. Hence the image has local (and therefore global) dimension at least $N$ at $\textbf{c}_0$, i.e.~$\dim \,\mathbb{Y}\geq N$. Since the domain has dimension $N$, we also have $\dim \,\mathbb{Y}\leq N$. Therefore $\dim \,\mathbb{Y} = N$.
\end{proof}

\begin{theorem}[Finite \'{e}tale cover of the image]\label{thm:finiteetale}
Under the hypothesis of Lemma~\ref{lem:rank-dim}, the polynomial system $\{P = 0\}$ satisfies generic finiteness in the sense of Definition~\ref{def:finiteness}, and furthermore the fibers consist of \emph{non-singular (isolated)} points.
\end{theorem}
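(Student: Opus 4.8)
The plan is to reduce the statement to standard facts about dominant morphisms of equidimensional complex varieties. By Lemma~\ref{lem:rank-dim}, the hypothesis forces $\dim\mathbb{Y}=N$, so $P:\C^N\to\mathbb{Y}$ is a dominant morphism of irreducible varieties of the same dimension, over a field of characteristic zero. First I would pass to coordinate rings: $P^*$ embeds $\C[\mathbb{Y}]$ into $\C[X_1,\ldots,X_N]$ (injectively, since $P$ is dominant), and the generic fiber is $\mathrm{Spec}$ of $B:=\C[X_1,\ldots,X_N]\otimes_{\C[\mathbb{Y}]}\C(\mathbb{Y})$, i.e.\ the localization of $\C[X_1,\ldots,X_N]$ at the nonzero image of $\C[\mathbb{Y}]\setminus\{0\}$. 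Thus $B$ is a domain, and because the source and target of $P$ have the same dimension, the theorem on generic fiber dimension gives $\dim B=0$; a Noetherian $0$-dimensional domain is a field, so in fact $B=\C(X_1,\ldots,X_N)$, a finite extension of $\C(\mathbb{Y})$ of some degree $g$. We will take $D=g=\mathrm{gdeg}(P)$.

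Second, I would spread this out over a dense open. By Grothendieck's generic freeness lemma (used already in the proof of Lemma~\ref{lem:poisson} in Appendix~\ref{app:productproof}) there is a nonzero $s\in\C[\mathbb{Y}]$ such that $\C[X_1,\ldots,X_N]_s$ is a free $\C[\mathbb{Y}]_s$-module; tensoring with $\C(\mathbb{Y})$ identifies the rank with $\dim_{\C(\mathbb{Y})}B=g$. Hence over $V_1:=\mathbb{Y}\setminus\{s=0\}$ the map $P$ is finite and flat of degree $g$, so every fiber over $V_1$ has exactly $g$ points counted with scheme-theoretic multiplicity. Separately, let $R\subseteq\C^N$ be the common zero locus of all $N\times N$ minors of $J_P$; the hypothesis says $R\subsetneq\C^N$, so $\dim R\le N-1$ and $\overline{P(R)}\subsetneq\mathbb{Y}$ is a proper closed subset. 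On $V:=V_1\setminus\overline{P(R)}$, which is still dense and Zariski-open, every $\bx\in P^{-1}(\bc)$ has $\mathrm{rank}\,J_P(\bx)=N$, so by the holomorphic implicit function theorem $P$ is a local embedding near $\bx$ and $\bx$ is an isolated, reduced (multiplicity-one) point of its fiber. Combining with the degree-$g$ count, for $\bc\in V$ the fiber $P^{-1}(\bc)$ consists of precisely $g$ distinct non-singular points — which is simultaneously the generic finiteness asserted in Definition~\ref{def:finiteness} (the coordinates of $\bc$ playing the role of the coefficient variables $t_i$ and $P^{-1}(\bc)$ the affine fiber $V_\phi$) and the non-singular/isolated conclusion, and exhibits $P^{-1}(V)\to V$ as a finite étale cover of degree $g$.

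Finally, for the claim about real right-hand sides: since $P$ has rational coefficients, the data cutting out $\mathbb{Y}\setminus V$ (the minors of $J_P$, the freeness denominator $s$, and defining equations of $\overline{P(R)}$) may all be taken with rational coefficients, so $\C^N\setminus P^{-1}(V)$ is contained in the zero set of a single nonzero polynomial $f\in\mathbb{Q}[X_1,\ldots,X_N]$. A nonzero real polynomial does not vanish identically on $\R^N$, so $\R^N\setminus\{f=0\}$ is dense; for any $\bx$ in it, $\bc=P(\bx)$ lies in the real image $P(\R^N)$ and its complex fiber $P^{-1}(\bc)\subset\C^N$ has exactly $g$ points, as required. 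The step I expect to be the main obstacle is the constant-cardinality assertion: fiber cardinality can genuinely jump at special points, so ruling out both collisions and spurious multiplicity requires the flatness produced by generic freeness together with the Jacobian-rank argument. One must also be careful that $\mathbb{Y}$ may be singular, which is why every ramification/reducedness statement above is phrased via the rank of $J_P$ on the smooth source $\C^N$ rather than via smoothness of $\mathbb{Y}$.
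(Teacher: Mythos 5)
Your proof is correct, but it reaches the conclusion by a genuinely different route than the paper. The paper's argument is: fiber-dimension theorem $\Rightarrow$ generic fibers are zero-dimensional, hence $P$ is quasi-finite over an open $V\subset\mathbb{Y}$; Zariski's Main Theorem $\Rightarrow$ $P$ is finite over a smaller open $U_1$; deleting the branch locus (possible in characteristic $0$) makes the cover finite \'etale, and constancy of the degree of a finite \'etale cover over the connected base gives the fixed fiber size $g$ with reduced points. You instead get the constant count $g$ directly from Grothendieck's generic freeness: $\C[X_1,\ldots,X_N]$ localized at a suitable $s$ is free of rank $g=[\C(X_1,\ldots,X_N):\C(\mathbb{Y})]$ over $\C[\mathbb{Y}]_s$, so $P$ is finite flat of degree $g$ over $V_1=\mathbb{Y}\setminus\{s=0\}$ and every such fiber has $g$ points with multiplicity; you then kill the multiplicities by excising $\overline{P(R)}$, where $R$ is the common zero locus of the $N\times N$ Jacobian minors, and invoking the implicit function theorem at each fiber point. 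The trade-off: the paper's route is the textbook one but leans on Zariski's Main Theorem and an abstractly defined branch locus; yours avoids ZMT, reuses the generic-freeness machinery the paper already deploys for Lemma~\ref{lem:poisson}, and ties the non-singularity of fiber points explicitly to the nonvanishing Jacobian minor hypothesis --- which matches the concrete certification procedure of Remark~\ref{remark:certify_generic} more closely. Your closing paragraph on real right-hand sides covers material the paper defers to that remark rather than proving in the theorem itself; it is a correct bonus, not a required step. The only caveat, shared equally by the paper's proof, is that ``generic finiteness'' is being read as genericity of $\bc$ within the image variety $\mathbb{Y}$ rather than within all of $\C^m$, since Definition~\ref{def:finiteness} is phrased for square systems.
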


\begin{proof}
By Lemma~\ref{lem:rank-dim}, $\dim \,\mathbb{Y} = N$. By the fiber-dimension theorem, the minimal fiber dimension of $P:\mathbb{C}^N\to \mathbb{Y}$ is $N-\dim \,\mathbb{Y}=0$, so there exists a nonempty Zariski-open $V \subset \mathbb{Y}$ over which all fibers are zero-dimensional and hence finite. Thus $P$ is \emph{quasi-finite} over $V$ (i.e.~of finite type with finite fibers).  By Zariski's Main Theorem, after shrinking to a smaller nonempty open $U_1\subset V$ the map $P^{-1}(U_1)\to U_1$ is \emph{finite} (i.e.~affine with coordinate rings finite over the base; in particular, proper with finite fibers).  In characteristic~$0$, deleting the branch locus from $U_1$ yields a nonempty open $U \subset U_1$ over which $P$ is \'{e}tale (equivalently, smooth of relative dimension~$0$). Since $P$ is already finite on $U_1$, it follows that $P^{-1}(U)\to U$ is finite \'{e}tale (i.e.~finite and smooth of relative dimension $0$, or flat and unramified).

Because $\mathbb{Y}$ is irreducible, any nonempty Zariski-open $U\subset \mathbb{Y}$ is Zariski-dense (and hence dense in the Euclidean topology as well). Over the connected base $U$, a finite \'{e}tale cover has constant degree; denote this degree by $g$. By definition, $g$ is the generic fiber cardinality of $P$, so each $\textbf{c}\in U$ has $|P^{-1}(\textbf{c})|=g$ reduced points. This $g$ is the generic degree.
\end{proof}

\begin{remark}[Certifying the generic fiber size]\label{remark:certify_generic}
Assume some $N \times N$ Jacobian minor of $J_P$ is not identically zero. 
Pick a point $\textbf{x}_{\,0} \in \mathbb{C}^N$ where that minor is nonzero and set 
$\textbf{c}_{\,0} = P(\textbf{x}_{\,0})$ and solve $P(\textbf{x}) = \textbf{c}_{\,0}$.  If $P(\textbf{x}) = \textbf{c}_0$ has $g < \infty$ solutions which are isolated and have the maximal Jacobian rank $N$, then  $|P^{-1}(\textbf{c}_{\,0})|$ equals the generic degree $g$. 

Since the branch locus is measure zero in $\mathbb{Y}$, suitable $\textbf{c}_{\,0}$'s (and hence suitable $\textbf{x}_{\,0}$'s) are generic, and hence should be easy to find.

Moreover, the branch locus is a proper Zariski-closed subset; hence its complement $U$ is nonempty Zariski-open. Since $P(\mathbb{R}^N)$ is Zariski-dense in $\mathbb{Y}$, we can find a suitable $\textbf{c}_{\,0}$ from a generic $\textbf{x}_{\,0} \in \mathbb{R}^N$ as opposed to $\mathbb{C}^N$, if we desire.
\end{remark}

As part of certifying the generic fiber size, we need to solve a polynomial system with isolated, non-degenerate solutions.  We note that there are many methods for doing this symbolically (see e.g.~\cite{cox1998using} for an overview), as well as numerically to any specified precision (see e.g.~\cite{sommese2005numerical, burgisser2013condition}; for a more recent discussion with useful statements about computational complexity see~\cite{el2018bit}).

\section{Quantitative Identifiability for Non-Degenerate Systems}
\label{sec:quantitative}

In this section, we show that under certain non-degeneracy conditions on the polynomial map $P$ from Section~\ref{sec:verify-generic-fiber}, given any point which is sufficiently close to a point  $P(\bx^*)$ in the image of $P$, then it must be close to $\bx^*$, or more precisely, to $\bx^*$ up to simple and unavoidable symmetries.

\subsection{Setup}

\paragraph{Notation and smoothed model.} Here we use similar notation as Section~\ref{sec:verify-generic-fiber}. Let $P = (p_0,\ldots,p_N): \C^N \to \C^{N+1}$ be a polynomial map with image variety $\mathbb{Y}$ and variables denoted by $X_1,\ldots,X_N$. Letting $L$ denote the linear projection onto the last $N$ coordinates of $\R^{N+1}$ and $L^\perp$ the projection onto the first coordinate, define $F = L\circ P: \C^N \to \C^N$ and $G = L^\perp \circ P: \C^N\to \C$. We will denote the Jacobian of $F, P$ by $J_F, J_P$ respectively, and the Hessian of $p_i$ by $\Hess_i$. 

As before, we will use $\bc\in \mathbb{Y}$ to denote the ``right-hand side'' of the polynomial system. In this section, this right-hand side will be generated via the following probabilistic process, which is directly motivated by our model for smoothed Hamiltonians in Definition~\ref{def:smoothed_analysis}:
\begin{enumerate}
    \item Nature samples $\bx^* \sim \mathcal{N}(\mu,\smoothing^2\, \Id)$, where $\mu\in\C^N$ is an unknown parameter and $\smoothing$ is the smoothing parameter in the sense of Definition~\ref{def:smoothed_analysis}.
    \item $\bc \triangleq P(\bx^*)$. For convenience, define $c_0 \triangleq L^\perp \bc \in \C$ and $\bc' = (c_1,\ldots,c_N) \triangleq L\bc$.
\end{enumerate}
This can be thought of as a quantitative notion of genericity.

\paragraph{Symmetries in common zero set.} The common zeros of the polynomial systems we consider in this work naturally have symmetries.

\begin{assumption}\label{assume:symmetry}
    $\C^N$ is equipped with an action of the group $\mathcal{G} = \mathcal{S}_\ell$ such that for every $\pi\in \mathcal{G}$, we have $P(\bx) = P(\pi\cdot \bx)$, and furthermore the set of points in $\C^N$ belonging to an orbit of size less than $\ell$ is of Lebesgue measure zero. 
\end{assumption}

\paragraph{Generically finite fibers.} Recall the notation of Section~\ref{sec:alggeo}. We will make the following assumptions about $P$ which were also leveraged in Section~\ref{sec:verify-generic-fiber}. 

\begin{assumption}\label{assume:fiber}
    The (square) polynomial system $\{p_1 - c_1 = \cdots = p_N - c_N = 0\}$ satisfies generic finiteness (Definition~\ref{def:finiteness}). Moreover, the size of a generic affine fiber of the (rectangular) system $\{p_0 - c_0 = \cdots = p_N - c_N = 0\}$ is $\ell$. In particular, by Assumption~\ref{assume:symmetry}, for a generic point in $Y$, its fiber under $P$ corresponds to a single orbit under the action of $\mathcal{G}$.
\end{assumption}
\noindent The first part of this assumption will allow us to exploit the product formula (Lemma~\ref{lem:poisson}. The second part of this assumption implies that almost surely over the randomness of $\bx^*$, we have that $\zeroset$ is the orbit of $\bx^*$.

\paragraph{Radius bounds.} Finally, we will use a basic bound on how large a fixed polynomial can be over the ground truth point $\bx^*$.

\begin{lemma}\label{lem:basic_conc}
    For any polynomial $f$ whose coefficients only depend on $F$ or $P$, there is a $\tau_f = \poly_P(\norm{\mu}_\infty,\smoothing, \log 1/\delta)$ for which
    \begin{equation}
        \Pr{\norm{f(\bx^*)}_\infty \ge \tau_f} \le \delta\,.
    \end{equation}
\end{lemma}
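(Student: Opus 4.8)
The plan is to reduce the claim to a standard Gaussian anti-concentration-flavored tail bound by expressing $f(\bx^*)$ coordinate-by-coordinate and controlling each polynomial in $\bx^*$ directly. First I would observe that, by hypothesis, each coordinate of $f$ is a fixed polynomial in $X_1,\ldots,X_N$ whose coefficients and degree depend only on $F$ and $P$ (i.e. on the fixed map $P$ and its components, not on the random draw). Call the maximum degree $t$ and let $B$ be a bound on the sum of absolute values of the coefficients of any coordinate $f_i$; both $t$ and $B$ are quantities of the form $\poly_P(1)$ since they are determined entirely by the defining data of $P$.

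Next I would use the smoothed model: $\bx^* \sim \mathcal{N}(\mu,\smoothing^2\Id)$, so we may write $\bx^* = \mu + \smoothing g$ with $g\sim \mathcal{N}(0,\Id)$. By standard Gaussian concentration (as already invoked right after Definition~\ref{def:smoothed_analysis} to define $\lmax$), with probability at least $1-\delta$ we have $\norm{\bx^*}_\infty \le \norm{\mu}_\infty + 2\smoothing\sqrt{\log(N/\delta)} =: R_\delta$, and $R_\delta = \poly_P(\norm{\mu}_\infty,\smoothing,\log(1/\delta))$. On this event, for each coordinate $i$ and each monomial $X^\alpha$ of $f_i$ with $|\alpha|\le t$, we have $|(\bx^*)^\alpha| \le R_\delta^t$, so $|f_i(\bx^*)| \le B\cdot R_\delta^{t}$, and hence $\norm{f(\bx^*)}_\infty \le B\cdot R_\delta^t =: \tau_f$. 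Since $B$, $t$, and the map $R_\delta\mapsto R_\delta^t$ are all $\poly_P(\cdot)$ in their arguments, we get $\tau_f = \poly_P(\norm{\mu}_\infty,\smoothing,\log(1/\delta))$ as claimed, completing the argument.

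The only mild subtlety — and the one place where a slightly less crude bound might be wanted downstream — is that the naive union over monomials gives a polynomial dependence on $R_\delta$ of degree $t$ rather than exploiting cancellation or variance; but since the statement only asks for \emph{some} $\poly_P$ bound holding with probability $1-\delta$, the deterministic bound on the high-probability event $\{\norm{\bx^*}_\infty \le R_\delta\}$ suffices and no anti-concentration (e.g. Carbery–Wright) is actually needed here. If one instead wanted a two-sided or lower-tail control on $f(\bx^*)$, that would be the real obstacle and would require Lemma~\ref{lem:carbery} together with the variance lower bound of Lemma~\ref{lem:anticonc}; but for this particular (upper-tail) lemma, the main ``obstacle'' is merely bookkeeping the polynomial dependencies to confirm they collapse into a single $\poly_P$ expression.
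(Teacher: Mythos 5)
Your proposal is correct and matches the paper's proof, which likewise reduces the claim to the standard Gaussian concentration bound $\norm{\bx^*}_\infty \le \norm{\mu}_\infty + 2\smoothing\sqrt{\log N/\delta}$ and then bounds the polynomial deterministically on that event. Your write-up just makes the bookkeeping (degree and coefficient bounds depending only on $P$) more explicit than the paper's one-line argument.
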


\begin{proof}
    This immediately follows from the fact that $\norm{\bx^*}_\infty \le \norm{\mu}_\infty + 2\smoothing\sqrt{\log N/\delta}$ with probability at least $1 - \delta$ by standard Gaussian concentration.
\end{proof}

\noindent We will use the bound in Lemma~\ref{lem:basic_conc} often enough that it will be convenient to define the notation 
\begin{equation}
    \calR \triangleq \poly_P(\norm{\mu}_\infty,\smoothing,\log 1/\delta)\,.
\end{equation}
Under this shorthand, $\calR^{\mathcal{O}(1)}$ is ``equal'' to $\calR$, and any polynomial function of $\calR$ that only depends on the coefficients of $P$ is also ``equal'' to $\calR$. This will allow us to sidestep the cumbersome exercise of tracking specific constants as they depend on coefficients of $P$.

\paragraph{Proof sketch.} Here we outline our general strategy. We will first show in Section~\ref{sec:avoid} that the constraint associated to the polynomial $G$ is violated by all points in $\zeroset'\backslash\zeroset$, i.e., by all points which satisfy the constraints associated to $F$ but which are not among the points in the orbit of $\bx^*$. Then in Section~\ref{sec:neighborhood}, we show that the only points which \emph{approximately} satisfy the constraints associated to $F$ lie in a neighborhood of $\zeroset'$. Such points can be obtained by, e.g., exhaustively enumerating over a fine grid of parameter space and only keeping those points which approximately satisfy the constraints associated to $F$. Finally, in Section~\ref{sec:ruleout}, we use the fact proved in Section~\ref{sec:avoid} that the constraint associated to $G$ is violated by points in $\zeroset'\backslash\zeroset$, together with Lipschitzness of $G$, to conclude that all of the spurious points found by exhaustive enumeration, i.e., ones not close to the orbit of $\bx^*$, will fail to approximately satisfy the constraint associated to $G$ as expected. This will allow us to whittle down the points in the neighborhood of $\zeroset'$ to only those in a neighborhood of $\zeroset$. Our final algorithm {\sc FindRoot} (Algorithm~\ref{alg:findroot}), which we present in Section~\ref{sec:puttogether}, then takes these points and runs a few Newton steps to push them even closer to $\zeroset$ (see Theorem~\ref{thm:solvesystem} for the final guarantee).

As we progress through the steps of the proof, we will pick up some additional assumptions that have to be made --- and which are efficiently certifiable --- to ensure $P$ is sufficiently ``non-degenerate'' for the main result of this section to hold. While the reader might be worried about the length of this list of assumptions (six in total), it turns out that all of them can be certified by simply running the check in Remark~\ref{remark:certify_generic} - see Remark~\ref{remark:check} at the end for further discussion.

\subsection{Avoiding Near Zeroes}
\label{sec:avoid}

In this section we show that with high probability over $\bx^*$ generated via the probabilistic process above, for every $\bx\in \zeroset'\backslash \zeroset$ the quantity $|G(\bx) - c_0|$ is not too small. For this, we will need one more assumption that requires some development. This is the first of many places where we will make use of the product formula (Lemma~\ref{lem:poisson}).

\begin{lemma}\label{lem:apply_poisson_product}
    The product $\prod_{\bx\in \zeroset'}(c_0 - G(\bx))$, regarded as a function in $\bc$, is equal to $R(\bc)^\ell$ for a polynomial $R$ in the variables $\bc$.
\end{lemma}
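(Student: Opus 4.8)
The plan is to realize the product $\prod_{\bx\in\zeroset'}(c_0-G(\bx))$ as a genuine polynomial via the product formula, and then to recognize the $\ell$-th power structure from the $\mathcal G$-symmetry. Write $F=L\circ P=(p_1,\dots,p_N)$ and $G=L^\perp\circ P=p_0$, so that, recalling $\bc'=(c_1,\dots,c_N)$, the set $\zeroset'$ is exactly the common zero locus of $\{p_1-c_1=\dots=p_N-c_N=0\}$. First I would apply Lemma~\ref{lem:poisson} with coefficient ring $S=\C[c_1,\dots,c_N]$, with $q_i=p_i-c_i$ (these satisfy generic finiteness by the first half of Assumption~\ref{assume:fiber}), with auxiliary polynomial $f=p_0=G$, and with formal variable $t$; this yields that $\Phi(t,\bc')\triangleq\prod_{\bx\in\zeroset'}(t-G(\bx))^{m(\bx)}$ is an element of $S[t]$, i.e.\ a polynomial in $t$ and $c_1,\dots,c_N$, and substituting $t\mapsto c_0$ gives a polynomial $Q(\bc)$ in $\bc=(c_0,\dots,c_N)$. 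To drop the scheme-theoretic multiplicities $m(\bx)$, I would apply Theorem~\ref{thm:finiteetale} to the square map $F$ (here $\det J_F\not\equiv 0$ since $F$ is dominant by Assumption~\ref{assume:fiber}, the degenerate case of empty generic fiber being trivial with $R\equiv 1$): over a dense Zariski-open set of $\bc'$ the fiber $\zeroset'=F^{-1}(\bc')$ is finite \'etale, so $J_F$ is invertible at each of its points and every $m(\bx)=1$. Hence $Q(\bc)=\prod_{\bx\in\zeroset'}(c_0-G(\bx))$ for generic $\bc'$, which is the object in the statement.

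Next I would invoke the symmetry. By Assumption~\ref{assume:symmetry}, $P(\pi\cdot\bx)=P(\bx)$ for all $\pi\in\mathcal G=\mathcal S_\ell$, so $F(\pi\cdot\bx)=F(\bx)$ (hence $\zeroset'$ is $\mathcal G$-stable) and $G(\pi\cdot\bx)=G(\bx)$. The set $B$ of points with $\mathcal G$-orbit of size $<\ell$ is Zariski-closed --- it is the union of the fixed-point loci of the non-identity elements --- and proper since it has Lebesgue measure zero, so $\dim\overline{F(B)}\le\dim B<N$ and $\overline{F(B)}$ is a proper Zariski-closed subset of $\C^N$. Thus for $\bc'$ avoiding $\overline{F(B)}$ (and in the \'etale locus), $\zeroset'$ is a disjoint union of full $\mathcal G$-orbits $\mathcal O$, each of size exactly $\ell$, and $G$ is constant, say $=\gamma_{\mathcal O}$, on each. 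Grouping the product over orbits gives $Q(c_0,\bc')=\prod_{\mathcal O\subseteq\zeroset'}(c_0-\gamma_{\mathcal O})^\ell$, so every root of $Q(\,\cdot\,,\bc')\in\C[c_0]$ has multiplicity divisible by $\ell$, for all generic $\bc'$.

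The last and most delicate step is the descent: passing from ``$Q(\,\cdot\,,\bc')$ is an $\ell$-th power in $\C[c_0]$ for generic $\bc'$'' to ``$Q$ is an $\ell$-th power in the UFD $\C[\bc]$''. Here I would factor $Q=u\prod_j P_j^{f_j}$ into distinct irreducibles in $\C[\bc]$ with $u\in\C^\times$; since $Q$ is monic in $c_0$, each $P_j$ has positive degree in $c_0$, is squarefree in $c_0$, and the $P_j$ are pairwise coprime in $c_0$, so for $\bc'$ outside the (finitely many) vanishing loci of the relevant leading coefficients, discriminants, and resultants, the factorization $Q(\,\cdot\,,\bc')=u\prod_j P_j(\,\cdot\,,\bc')^{f_j}$ exhibits all root multiplicities of $Q(\,\cdot\,,\bc')$ as exactly the $f_j$; combined with the previous paragraph, $\ell\mid f_j$ for all $j$, so $R\triangleq u^{1/\ell}\prod_j P_j^{f_j/\ell}\in\C[\bc]$ satisfies $R^\ell=Q=\prod_{\bx\in\zeroset'}(c_0-G(\bx))$, as desired. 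The main obstacle is exactly this descent, since the $\mathcal G$-symmetry only manifestly produces an $\ell$-th power factorization fiberwise over generic $\bc'$ (equivalently, over an algebraic closure of the function field), and one needs the UFD structure of $\C[\bc]$ to conclude the $\ell$-th root is itself polynomial; a secondary point requiring care is that all the ``generic $\bc'$'' conditions used above --- \'etaleness so that $m(\bx)\equiv 1$, $\bc'\notin\overline{F(B)}$, and the $P_j$ staying squarefree and pairwise coprime under specialization --- simultaneously hold on a common dense Zariski-open subset, so that the polynomial identities asserted along the way are legitimate.
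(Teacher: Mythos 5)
Your proposal is correct and follows essentially the same route as the paper: invoke the product formula (Lemma~\ref{lem:poisson}) with $q_i = p_i - c_i$ and $f$ built from $G$ to realize the product as a polynomial in $\bc$, then use the $\mathcal{G}$-symmetry to group the linear factors into orbits of size $\ell$. The paper's proof simply asserts at that point that the $\ell$-th root $R(\bc)=\prod_i (c_0 - z_i(\bc'))$ is a polynomial, whereas you carefully justify both that the scheme-theoretic multiplicities are generically $1$ and that the fiberwise $\ell$-th power structure descends to an $\ell$-th power factorization in the UFD $\C[\bc]$; these additions are correct and fill in details the paper leaves implicit.
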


\begin{proof}
    By the product formula (Lemma~\ref{lem:poisson}) - with $q_1,\ldots,q_N$ therein taken to be $p_1 - c_1,\ldots,p_N - c_N$, $f$ taken to be $G - c_0$, and specializing $t$ therein to $0$ - we find that $R'(\bc) \triangleq \prod_{\bx\in \zeroset'} (c_0 - G(\bx))$ can be regarded formally as a polynomial in the variables $\bc$. Because the set of $\bx \in \C^N$ whose orbit under the action of $\mathbb{S}_\ell$ has size less than $\ell$ is of measure zero, the factors of $\prod_{\bx \in \zeroset'} (c_0 - G(\bx))$ come in groups of $\ell$, each corresponding to a full orbit. Denote these factors by $c_0 - z_1(\bc'), \ldots, c_0 - z_m(\bc')$ for $m \triangleq |\zeroset'|/\ell$. Therefore, $R'(\bc) = R(\bc)^\ell$ for polynomial $R(\bc) = \prod^m_{i=1} (c_0 - z_i(\bc'))$ as claimed.
\end{proof}

\noindent Regarding $R(\bc)$ as a polynomial in the single variable $c_0$, let $\mathfrak{D}(\bc')$ denote its discriminant. In the notation of the proof above,
\begin{equation}
    \mathfrak{D}(\bc') = \prod_{i<j} (z_i(\bc') - z_j(\bc'))^2\,. \label{eq:discriminant}
\end{equation}
Note that the coefficients of $\mathfrak{D}$ only depend on the coefficients of $P$ and, in particular, are independent of $\bc$. The following assumption is a quantitative version of the assumption that the discriminant is not too small when evaluated at $c_0$, which will be essential for showing that $|G(\bx) - c_0|$ is not too small for all $\bx\in \zeroset'\backslash\zeroset$ in Section~\ref{sec:avoid}.

\begin{assumption}\label{assume:hermite_discriminant}
    Let $\mathfrak{D}\circ F$ have Hermite expansion
    \begin{equation}
        \mathfrak{D}\circ F = \sum_\alpha b_\alpha h_\alpha\,.
    \end{equation}
    Then $\sum_{\alpha \ \mathrm{maximal}} b_\alpha^2 \ge \gamma^2$ for some parameter $\gamma > 0$.
\end{assumption}

\noindent Pragmatically, this can be verified by checking, for a random choice of $\bx^*$, whether the associated $\zeroset$ has the correct size $\ell$. Under this assumption, we can use polynomial anti-concentration applied to $\mathfrak{D}\circ F$ to conclude that the $z_i(\bc')$'s are well-separated.

\begin{lemma}\label{lem:avoidzeros}
    With probability at least $1 - \delta$ over $\bx^*$,
    \begin{equation}
        \min_{\bx \in \zeroset'\backslash \zeroset} |G(\bx) - c_0| \ge \gamma\,\poly_P(\delta,\smoothing) / \calR \,.
    \end{equation}
\end{lemma}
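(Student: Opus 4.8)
The plan is to reduce the claim to an anti-concentration statement about the single fixed polynomial $\mathfrak{D}\circ F$ evaluated at the smoothed point $\bx^*$, exploiting the discriminant structure recorded in \eqref{eq:discriminant}. First I would identify exactly what $\min_{\bx\in\zeroset'\backslash\zeroset}|G(\bx) - c_0|$ equals in terms of the roots $z_i(\bc')$. Recall from the proof of Lemma~\ref{lem:apply_poisson_product} that, almost surely over $\bx^*$, the affine fiber $\zeroset' = F^{-1}(\bc')$ decomposes into $m = |\zeroset'|/\ell$ full $\mathcal{G}$-orbits, and since $G = L^\perp\circ P$ is $\mathcal{G}$-invariant by Assumption~\ref{assume:symmetry} it is constant on each orbit, with values $z_1(\bc'),\ldots,z_m(\bc')$. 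By Assumption~\ref{assume:fiber} (and genericity of $\bx^*$) the orbit of $\bx^*$ is exactly $\zeroset = P^{-1}(\bc)\subseteq\zeroset'$, and on it $G$ equals $G(\bx^*) = L^\perp P(\bx^*) = c_0$; labelling this orbit by $1$, so $z_1(\bc') = c_0$, we get that $\zeroset'\backslash\zeroset$ is the union of the remaining $m-1$ orbits and hence
\begin{equation}
\min_{\bx\in\zeroset'\backslash\zeroset}|G(\bx) - c_0| \;=\; \min_{2\le i\le m}\bigl|z_i(\bc') - z_1(\bc')\bigr| \;\ge\; \min_{i\ne j}\bigl|z_i(\bc') - z_j(\bc')\bigr| \;=:\; \delta_{\min}\,.
\end{equation}

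Next I would convert a lower bound on $\delta_{\min}$ into a lower bound on the discriminant, keeping the ``extra'' factors under control via a uniform root bound. Conditioning on the probability-$(1-\delta/2)$ event from Lemma~\ref{lem:basic_conc} that $\norm{\bx^*}_\infty\le\calR$ (hence $\norm{\bc'}_\infty = \norm{F(\bx^*)}_\infty\le\calR$), the $z_i(\bc')$ are the roots in $c_0$ of the monic degree-$m$ polynomial $R(c_0,\bc') = \prod_i(c_0 - z_i(\bc'))$ of Lemma~\ref{lem:apply_poisson_product}, whose coefficients are $P$-dependent polynomials in $\bc'$; the Cauchy root bound then gives $|z_i(\bc')|\le\calR$. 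Since $\mathfrak{D}(\bc') = \prod_{i<j}(z_i(\bc') - z_j(\bc'))^2$ is a product of $\binom{m}{2}$ factors of which the smallest equals $\delta_{\min}^2$ and the rest are each at most $(2\calR)^2$, I obtain $\delta_{\min}\ge|\mathfrak{D}(\bc')|^{1/2}/(2\calR)^{\binom{m}{2}-1}$, and since $m$ depends only on $P$ the denominator is $\calR$ in the shorthand of this section.

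Finally I would anti-concentrate $|\mathfrak{D}(\bc')| = |(\mathfrak{D}\circ F)(\bx^*)|$. The key point is that $\mathfrak{D}\circ F$ is a single fixed real polynomial whose degree $t$ and coefficients depend only on $P$, so this is precisely the regime our tools are built for: Assumption~\ref{assume:hermite_discriminant} together with Lemma~\ref{lem:anticonc} yields $\Var\bigl((\mathfrak{D}\circ F)(\bx^*)\bigr)\ge\smoothing^{2t}\gamma^2$, and applying Carbery--Wright (Lemma~\ref{lem:carbery}) to $g\mapsto(\mathfrak{D}\circ F)(\mu+\smoothing g)$ with $u = 0$ and $\epsilon$ a suitable power of $\delta$ chosen so that $\mathcal{O}(\epsilon^{1/t})\le\delta/2$ gives $|(\mathfrak{D}\circ F)(\bx^*)|\ge\epsilon\,\Var^{1/2}\ge\gamma\cdot\poly_P(\delta,\smoothing)$ with probability at least $1-\delta/2$. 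A union bound over the two events and chaining the three displays then gives $\min_{\bx\in\zeroset'\backslash\zeroset}|G(\bx) - c_0|\ge\gamma^{1/2}\,\poly_P(\delta,\smoothing)/\calR$, which is of the claimed form (the precise power of $\gamma$ being immaterial for how this bound is used downstream, e.g.\ in Section~\ref{sec:ruleout}).

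I expect the main technical care to be needed not in the anti-concentration step, which is routine once the problem is phrased as ``lower bound $|(\mathfrak{D}\circ F)(\bx^*)|$'', but in the genericity bookkeeping of the first step: one needs that almost surely $\zeroset'$ really does split into $m$ clean $\mathcal{G}$-orbits and that $\zeroset$ is exactly the orbit of $\bx^*$, so that the locus where $G$ could already take the value $c_0$ on a \emph{second} orbit coincides with the vanishing of $\mathfrak{D}\circ F$ (this rests on the product-formula machinery of Lemma~\ref{lem:poisson}/\ref{lem:apply_poisson_product} and the measure-zero clauses of Assumptions~\ref{assume:symmetry}--\ref{assume:fiber}, together with absolute continuity of $\bx^*$). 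A secondary point is verifying that the combinatorial parameters $m$, $t=\deg(\mathfrak{D}\circ F)$, and the Cauchy root bound are all genuinely controlled by $P$ and $\calR$, so that they can be absorbed into the $\poly_P$/$\calR$ shorthand without ever swamping the $\gamma$-dependent factor.
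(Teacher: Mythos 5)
Your proposal is correct and follows essentially the same route as the paper's proof: reduce to lower-bounding $\min_{i\neq j}|z_i(\bc')-z_j(\bc')|$, anti-concentrate the discriminant $\mathfrak{D}(\bc') = (\mathfrak{D}\circ F)(\bx^*)$ via Carbery--Wright together with Lemma~\ref{lem:anticonc} and Assumption~\ref{assume:hermite_discriminant}, and control the remaining $O_P(1)$ factors by Cauchy's root bound and Lemma~\ref{lem:basic_conc}. Your extra care about the orbit bookkeeping and the resulting $\gamma^{1/2}$ (rather than $\gamma$) is a fair observation about a looseness already present in the paper's own statement, and is harmless downstream.
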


\begin{proof}
    To bound $\min_{\bx\in\zeroset'\backslash\zeroset}|G(\bx) - c_0|$, it suffices to bound $\min_{i<j} |z_i(\bc') - z_j(\bc')|$, in the notation of the proof of Lemma~\ref{lem:apply_poisson_product}. To do that, we will leverage Eq.~\eqref{eq:discriminant}. Recalling that $\bc' = F(\bx^*)$ for $\bx^*\sim \mathcal{N}(\mu,\smoothing^2\cdot\Id)$ and noting that $\mathfrak{D}\circ F$ is some polynomial of degree $O_P(1)$, by Carbery-Wright we have that
    \begin{equation}
        \Pr{|\mathfrak{D}(\bc')| \le \mathrm{poly}_F(\delta)\cdot\Var(\mathfrak{D}\circ F)^{1/2}} \le \delta\,.
    \end{equation}
    Lemma~\ref{lem:anticonc} and Assumption~\ref{assume:hermite_discriminant} imply that $\Var(\mathfrak{D}\circ F) \ge \poly_P(\smoothing)\gamma^2$, so with probability at least $1 - \delta$, $|\mathfrak{D}(\bc')| > \gamma\,\poly_P(\delta,\smoothing)$.

    Recall that the $z_i(\bc')$'s are roots of the polynomial $\prod_{\bx\in \zeroset'} (c_0 - G(\bx)) = \sum^r_{t=0} a_t(\bc') c_0^t$, where $a_0,\ldots,a_r$ are polynomials whose coefficients only depend on $F$ and $a_r = 1$. By Cauchy's root bound, $|z_i(\bc')| \le 1 + \max_{t<r} |a_t(\bc')|$, so by Lemma~\ref{lem:basic_conc} applied to $a_t\circ F$, we conclude that $|z_i(\bc') - z_j(\bc')| \le \calR$ for all $i,j$. From this and our lower bound on $|\mathfrak{D}(\bc')|$, our claimed bound follows as there are $\binom{m}{2} = \binom{|\zeroset'|/\ell}{2} = O_P(1)$ pairs $1\le i<j < m$. Note that technically we need to apply a union bound, but it is over $O_P(1)$ many things, and replacing $\delta$ in $\calR$ by $\delta/O_P(1)$ does not change the asymptotic scaling of $\calR$.
\end{proof}

\subsection{Finding Neighborhood of \texorpdfstring{$\zeroset'$}{V'}}
\label{sec:neighborhood}

In this section, we show that if some candidate solution $\bx$ approximately satisfies the constraints associated to $F$, then it must lie in a small neighborhood around one of the points in $\zeroset'$. To do this, we would like to use the quantitative inverse function theorem (Lemma~\ref{lem:inverse}), but for that we need to ensure that the Jacobian of $F$ at each of the points in the fiber $F^{-1}(\bc')$ is well-conditioned. Note that this is more subtle than simply showing that $J_F(\bx^*)$ is well-conditioned, which is more straightforward as one can directly use anti-concentration of the determinant of $J_F$ at $\bx^*$, which is a simple application of the tools from Section~\ref{sec:anticoncentration}. Instead, we need to argue that with high probability over $\bx^*$, \emph{all} the points in $F^{-1}(F(\bx^*))$ have well-conditioned Jacobian. Once again, the product formula will play a key role, and indeed the argument is very similar to the one in the previous section.

As in Section~\ref{sec:verify-generic-fiber}, let $\jac_F \triangleq \det J_F \in \C[X_1,\ldots,X_N]$. Let $\jac^{\sf hom}_F \in \C[X_0,\ldots,X_N]$ denote its homogenization.

\begin{lemma}
    The product $\prod_{\bx\in \zeroset'} \jac_F(\bx)$, regarded as a function of $\bc'$, is equal to a polynomial $\mathfrak{J}$ in the variables $\bc'$.
\end{lemma}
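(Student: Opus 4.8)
The plan is to apply the product formula (Lemma~\ref{lem:poisson}) in essentially the same way as in the proof of Lemma~\ref{lem:apply_poisson_product}, but now with the function $f$ being $\jac_F$ rather than $G - c_0$. Concretely, I would invoke Lemma~\ref{lem:poisson} with $q_1,\ldots,q_N$ taken to be $p_1 - c_1,\ldots,p_N - c_N$ (which satisfy generic finiteness by the first part of Assumption~\ref{assume:fiber}), with $f$ taken to be $\jac_F = \det J_F \in \C[X_1,\ldots,X_N]$, and specializing the auxiliary variable $t$ therein to $0$. Since $\jac_F$ does not depend on $c_0$ (it only involves the square subsystem associated to $F$), this is a legitimate choice of $f$, and Lemma~\ref{lem:poisson} immediately tells us that
\begin{equation}
    \prod_{\bx\in \zeroset'} \jac_F(\bx)^{m(\bx)}
\end{equation}
is a polynomial in the coefficient variables, which in our setting are exactly the entries of $\bc'$ (since, as noted after Definition~\ref{def:finiteness} and in Section~\ref{sec:alggeo}, only the constant terms of the $q_i$ carry the coefficient variables, and here those constant terms are $-c_1,\ldots,-c_N$).

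The one remaining wrinkle is to replace the scheme-theoretic multiplicities $m(\bx)$ with plain cardinality. Here I would appeal to Assumption~\ref{assume:fiber} (together with Theorem~\ref{thm:finiteetale}, whose hypothesis is met because the relevant Jacobian minor $\jac_F$ is not identically zero — otherwise the generic fiber would not be finite), which guarantees that for a generic choice of $\bc'$ the fiber $\zeroset' = F^{-1}(\bc')$ consists of exactly $\lvert \zeroset'\rvert$ reduced, non-singular points, so $m(\bx) = 1$ for all $\bx\in\zeroset'$ generically. Thus the polynomial identity $\prod_{\bx\in\zeroset'}\jac_F(\bx)^{m(\bx)} = \prod_{\bx\in\zeroset'}\jac_F(\bx)$ holds on a Zariski-dense set, hence identically, and we may define $\mathfrak{J}(\bc') \triangleq \prod_{\bx\in \zeroset'}\jac_F(\bx)$, a polynomial in $\bc'$ whose coefficients depend only on $P$ (equivalently $F$).

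I do not anticipate a genuine obstacle here: the work is entirely a matter of checking that the hypotheses of Lemma~\ref{lem:poisson} apply with this particular choice of $f$, which is a near-verbatim repetition of the setup in Lemma~\ref{lem:apply_poisson_product}. If anything, the mildest subtlety is the bookkeeping that the ``coefficient variables'' $t_1,\ldots,t_m$ of Section~\ref{sec:alggeo} correspond precisely to $c_1,\ldots,c_N$ in this instantiation, so that the resulting polynomial is genuinely a function of $\bc'$ and of nothing else (no dependence on $c_0$, since $\jac_F$ involves only $F$); this is immediate from the structure of the system $\{p_i - c_i = 0\}_{i=1}^N$. The real payoff of this lemma comes in the next step of Section~\ref{sec:neighborhood}, where — exactly as in Lemma~\ref{lem:avoidzeros} — one combines this product representation with anti-concentration (Carbery--Wright plus Lemma~\ref{lem:anticonc}) to argue that $\jac_F(\bx)$ is not too small for \emph{all} $\bx$ in the fiber $F^{-1}(F(\bx^*))$ simultaneously, which then feeds into the quantitative inverse function theorem (Lemma~\ref{lem:inverse}); that subsequent argument, not this bare factorization statement, is where the analytic content lies.
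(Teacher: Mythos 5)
Your proof is correct and matches the paper's: the paper likewise invokes the product formula (Lemma~\ref{lem:poisson}) with $f = \jac_F$ and the system $\{p_i - c_i\}$, treating the $c_i$ as the coefficient variables. Your additional remark about the scheme-theoretic multiplicities being $1$ generically is a fine piece of bookkeeping that the paper leaves implicit.
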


\begin{proof}
    This follows immediately by the product formula (Lemma~\ref{lem:poisson}), this time with $f$ in Lemma~\ref{lem:poisson} taken to be $\jac_F$.
\end{proof}

\noindent The following assumption is a quantitative version of the assumption that $\bc'$ avoids the zero locus of $\mathfrak{J}$, so that $J_F$ at each of the points in the fiber $F^{-1}(\bc')$ is well-conditioned. Pragmatically, it can be verified by checking whether, for a randomly sampled $\bx^*$, the fiber $F^{-1}(F(\bx^*))$ has nonsingular Jacobians.

\begin{assumption}\label{assume:jacprod}
    Let $\mathfrak{J}\circ F$ have Hermite expansion
    \begin{equation}
        \mathfrak{J}\circ F = \sum_\alpha b'_\alpha h_\alpha\,.
    \end{equation}
    Then $\sum_{\alpha \ \mathrm{maximal}} b^2_\alpha \ge \gamma'^2$ for some parameter $\gamma' > 0$.
\end{assumption}

\begin{lemma}\label{lem:sigmin_lbd}
    With probability at least $1 - \delta$ over $\bx^*$,
    \begin{equation}
        \min_{\bx\in \zeroset'} \sigma_{\min}(J_F(\bx)) \ge \gamma'\,\poly_P(\delta,\smoothing) / \calR\,.
    \end{equation}
\end{lemma}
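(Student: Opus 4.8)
The plan is to mirror the argument of Lemma~\ref{lem:avoidzeros} almost verbatim, with the single-variable discriminant $\mathfrak{D}$ replaced by the polynomial $\mathfrak{J}(\bc') = \prod_{\bx \in \zeroset'} \jac_F(\bx)$. First I would observe that since $\jac_F$ is a fixed polynomial whose coefficients depend only on $F$, the product $\mathfrak{J}$ is a polynomial of degree $O_P(1)$ in $\bc'$ (this is exactly the preceding lemma), and hence $\mathfrak{J} \circ F$ is a polynomial of degree $O_P(1)$ in the entries of $\bx^*$. Applying Carbery--Wright (Lemma~\ref{lem:carbery}) to $\mathfrak{J} \circ F$ evaluated at $\bx^* \sim \mathcal{N}(\mu, \smoothing^2 \Id)$ gives
\begin{equation}
    \Pr{|\mathfrak{J}(\bc')| \le \poly_F(\delta) \cdot \Var(\mathfrak{J} \circ F)^{1/2}} \le \delta\,,
\end{equation}
and then Lemma~\ref{lem:anticonc} together with Assumption~\ref{assume:jacprod} lower-bounds $\Var(\mathfrak{J} \circ F) \ge \poly_P(\smoothing)\, \gamma'^2$, so that $|\mathfrak{J}(\bc')| > \gamma'\, \poly_P(\delta, \smoothing)$ with probability at least $1 - \delta$.

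Next I would convert this lower bound on the \emph{product} $\prod_{\bx \in \zeroset'} \jac_F(\bx)$ into a lower bound on the smallest individual factor $\min_{\bx \in \zeroset'}|\jac_F(\bx)|$. For this I need a matching \emph{upper} bound on each factor: since $\norm{\bx}_\infty \le \calR$ for every $\bx \in \zeroset'$ (by Cauchy's root bound applied to the elimination polynomials cutting out $\zeroset'$, exactly as in the proof of Lemma~\ref{lem:avoidzeros}, combined with Lemma~\ref{lem:basic_conc} for the coefficients), and $\jac_F$ is a fixed polynomial, we get $|\jac_F(\bx)| \le \calR$ for all $\bx \in \zeroset'$. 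Because $|\zeroset'| = O_P(1)$, dividing the product bound by the $(|\zeroset'|-1)$-fold product of these upper bounds yields $\min_{\bx \in \zeroset'} |\jac_F(\bx)| \ge \gamma'\, \poly_P(\delta,\smoothing)/\calR$.

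Finally I would pass from $|\jac_F(\bx)| = |\det J_F(\bx)|$ to $\sigma_{\min}(J_F(\bx))$. Since $\det J_F(\bx) = \prod_{i=1}^N \sigma_i(J_F(\bx))$ and each $\sigma_i(J_F(\bx)) \le \norm{J_F(\bx)}_{\sf op} \le \calR$ (again because $J_F$ has fixed polynomial entries and $\norm{\bx}_\infty \le \calR$), we have $\sigma_{\min}(J_F(\bx)) \ge |\det J_F(\bx)| / \calR^{N-1} \ge |\jac_F(\bx)|/\calR$, which combined with the previous paragraph gives the claimed bound $\min_{\bx \in \zeroset'}\sigma_{\min}(J_F(\bx)) \ge \gamma'\,\poly_P(\delta,\smoothing)/\calR$. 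A union bound over the $O_P(1)$ high-probability events above costs only a $\poly_P$ factor inside $\calR$, which is absorbed. The only mild subtlety --- and the closest thing to an obstacle --- is ensuring that $\zeroset' = \zeroset_\bc$ genuinely has the generic size so that the product formula applies and $\mathfrak{J}$ is the correct polynomial; but this holds almost surely over $\bx^*$ by Assumption~\ref{assume:fiber}, so it contributes nothing beyond another measure-zero exclusion. Everything else is a direct transcription of the anti-concentration argument already carried out for the discriminant.
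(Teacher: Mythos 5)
Your proposal is correct and follows essentially the same route as the paper: Carbery--Wright applied to $\mathfrak{J}\circ F$ with the variance lower bound from Lemma~\ref{lem:anticonc} and Assumption~\ref{assume:jacprod}, then converting the lower bound on the product $\prod_{\bx\in\zeroset'}\jac_F(\bx)$ into a lower bound on each factor via the upper bounds of Corollary~\ref{cor:jacbounds}, and finally passing from $|\det J_F|$ to $\sigma_{\min}(J_F)$ via $|\jac_F| \le \norm{J_F}_{\sf op}^{N-1}\sigma_{\min}(J_F)$. No gaps.
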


\noindent The proof is very similar to that of Lemma~\ref{lem:avoidzeros}, except that we need to use slightly different resultants. We include it for completeness.

\begin{proof}
    Noting that $\mathfrak{J}\circ F$ is some polynomial of degree $O_P(1)$, by Carbery-Wright we have that
    \begin{equation}
        \Pr{|\mathfrak{J}(\bc')| \le \poly_P(\delta)\cdot \Var(\mathfrak{J}\circ F)^{1/2}} \le \delta\,.
    \end{equation}
    Lemma~\ref{lem:anticonc} and Assumption~\ref{assume:jacprod} imply that $\Var(\mathfrak{J}\circ F) \ge \poly_P(\smoothing)\gamma'^2$, so with probability at least $1 - \delta$, $\mathfrak{J}(\bc')| > \gamma'\,\poly_P(\delta,\smoothing)$. Combining this with the upper bound on $\jac_F(\bx)$ for $\bx\in\zeroset'$ in Corollary~\ref{cor:jacbounds} below, we conclude that with probability at least $1 - \delta$,
    \begin{equation}
        \min_{\bx\in \zeroset'} \jac_F(\bx) \ge \gamma'\,\poly_P(\delta,\smoothing) / \calR\,.
    \end{equation}
    Because $\jac_F \le \norm{J_F}^{N-1}_{\sf op} \cdot \sigma_{\min}(J_F)$, the claim follows by the upper bound on $\norm{J_F}_F$ in Corollary~\ref{cor:jacbounds} below.
\end{proof}

\noindent The proof above required the following upper bounds.

\begin{lemma}\label{lem:upperbounds}
    With probability at least $1 - \delta$ over $\bx^*$,
    \begin{equation}
        \max_{\bx\in\zeroset'} \norm{\bx}_\infty \le \calR\,.
    \end{equation}
\end{lemma}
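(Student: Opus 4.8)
The plan is to bound $\max_{\bx \in \zeroset'} \norm{\bx}_\infty$ by leveraging the product formula (Lemma~\ref{lem:poisson}) to control the elementary symmetric functions of the coordinates of points in the fiber, then invoke Cauchy's root bound exactly as in the proof of Lemma~\ref{lem:avoidzeros}. Concretely, for each coordinate index $s \in [N]$, I would apply Lemma~\ref{lem:poisson} with the square system $q_i = p_i - c_i$ for $i\in[N]$ and with $f$ taken to be the coordinate function $X_s$. This produces a polynomial $\prod_{\bx \in \zeroset'}(t - X_s(\bx))$ in $t$ and the coefficient variables $\bc'$, whose coefficients $a_{s,0}(\bc'),\ldots,a_{s,r}(\bc')$ (with $a_{s,r} \equiv 1$) depend only on the coefficients of $F$. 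The roots of this univariate polynomial in $t$ are precisely the $s$-th coordinates $x_s$ over all $\bx \in \zeroset'$, counted with multiplicity.

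\textbf{The key steps, in order:} (i) For each $s\in[N]$, form the polynomial $\prod_{\bx\in\zeroset'}(t - X_s(\bx)) = \sum_{\ell=0}^{r} a_{s,\ell}(\bc')\,t^\ell$ via the product formula, noting $a_{s,r} \equiv 1$. (ii) Since $\bc' = F(\bx^*)$ with $\bx^* \sim \mathcal{N}(\mu,\smoothing^2\,\Id)$, apply Lemma~\ref{lem:basic_conc} to the polynomials $a_{s,\ell}\circ F$ (whose coefficients depend only on $F$): with probability at least $1-\delta'$, $\max_{\ell < r}|a_{s,\ell}(\bc')| \le \calR$. (iii) By Cauchy's root bound, every root $x_s$ of the polynomial in step (i) satisfies $|x_s| \le 1 + \max_{\ell<r}|a_{s,\ell}(\bc')| \le \calR$; hence $\max_{\bx\in\zeroset'}|x_s| \le \calR$. (iv) Union bound over $s\in[N]$ (there are $N = O_P(1)$ of them, so replacing $\delta$ by $\delta/N$ in $\calR$ leaves its asymptotic scaling unchanged) to conclude $\max_{\bx\in\zeroset'}\norm{\bx}_\infty \le \calR$ with probability at least $1-\delta$.

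\textbf{The main obstacle} I anticipate is a bookkeeping one rather than a conceptual one: ensuring that the product formula genuinely applies here, i.e. that the square system $\{p_1 - c_1 = \cdots = p_N - c_N = 0\}$ satisfies generic finiteness (Definition~\ref{def:finiteness}) so that $\zeroset'$ is a finite set of the expected size and the symmetric functions of $\{X_s(\bx) : \bx\in\zeroset'\}$ are polynomial in $\bc'$ — but this is exactly Assumption~\ref{assume:fiber}, already in force. One subtlety worth a line of care is that the fiber $\zeroset' = F^{-1}(\bc')$ counted with scheme-theoretic multiplicity has total size the generic degree $D$ (so $r = D$), and Cauchy's bound applies verbatim to the monic univariate polynomial regardless of multiplicities; the multiplicities only affect the combinatorics and not the root magnitudes. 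Everything else is a direct transcription of the anti-concentration-free half of the argument in Lemma~\ref{lem:avoidzeros}.

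\begin{corollary}\label{cor:jacbounds}
    With probability at least $1 - \delta$ over $\bx^*$, we have $\max_{\bx\in\zeroset'}\norm{J_F(\bx)}_F \le \calR$ and $\max_{\bx\in\zeroset'}|\jac_F(\bx)| \le \calR$.
\end{corollary}

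\begin{proof}
    By Lemma~\ref{lem:upperbounds}, with probability at least $1-\delta$ we have $\max_{\bx\in\zeroset'}\norm{\bx}_\infty \le \calR$. Each entry of $J_F$ is a fixed polynomial in $X_1,\ldots,X_N$ whose coefficients depend only on $F$, so on the event $\norm{\bx}_\infty \le \calR$ we have $|J_F(\bx)_{ij}| \le \calR$ for all $i,j$, hence $\norm{J_F(\bx)}_F \le \calR$. Similarly, $\jac_F = \det J_F$ is a fixed polynomial in the entries of $J_F$, so $|\jac_F(\bx)| \le \calR$ on the same event.
\end{proof}
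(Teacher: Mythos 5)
Your proof is correct and follows essentially the same route as the paper's: apply the product formula with $f = X_i$ to get a monic polynomial in $t$ with coefficients depending only on $F$, bound those coefficients via Lemma~\ref{lem:basic_conc}, and conclude with Cauchy's root bound and a union bound over coordinates. The appended corollary likewise mirrors the paper's Corollary~\ref{cor:jacbounds}.
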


\begin{proof}
    By the product formula (Lemma~\ref{lem:poisson}), this time applied to $f = X_i$, for any $i\in[N]$ we have
    that the map $(t,\bc') \mapsto \prod_{\bx\in\zeroset'} (t - \bx_i)$ is a polynomial whose coefficients depend solely on the coefficients of $F$. Regarding this as a polynomial $\sum^r_{s=0} a_s(\bc') t^s$, Cauchy's root bound ensures that the roots $\{\bx_i\}_{\bx\in\zeroset'}$ are bounded in magnitude by $1 +  \max_{s<r}|a_s(\bc')|$, so by Lemma~\ref{lem:basic_conc} applied to $a_s\circ F$, we find by taking all $i\in[N]$ that $\norm{\bx}_\infty \le \calR$ for all $\bx\in\zeroset'$.
\end{proof}

\begin{corollary}\label{cor:jacbounds}
    For all $\bx\in\zeroset'$ and $j\in[N+1]$, $\norm{J_P(\bx)}^2_F, |\jac_F(\bx)|, \norm{ \Hess_j(\bx)}^2_F \le \calR$.
\end{corollary}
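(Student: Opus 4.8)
The plan is to deduce all three bounds from the uniform coordinate bound $\max_{\bx\in\zeroset'}\norm{\bx}_\infty\le\calR$ established in Lemma~\ref{lem:upperbounds}, together with the observation that the quantities in question are all \emph{fixed} polynomial expressions in the coordinates of $\bx$ whose degrees and coefficients depend only on $P$.

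First I would condition on the event $\mathcal{E}$ that $\max_{\bx\in\zeroset'}\norm{\bx}_\infty\le\calR$, which by Lemma~\ref{lem:upperbounds} holds with probability at least $1-\delta$; all bounds below are asserted on $\mathcal{E}$. Next, observe that each entry of $J_P(\bx)$ is $\partial p_i/\partial X_j$ for some $i\in\{0,\ldots,N\}$ and $j\in[N]$, which is a polynomial in $X_1,\ldots,X_N$ of degree $O_P(1)$ with at most $O_P(1)$ monomials, whose coefficients are absolute constants determined by $P$. Evaluating such a polynomial at a point of infinity-norm at most $\calR$ yields a value of magnitude at most $\poly_P(\calR)$, which under the convention of Section~\ref{sec:quantitative} is simply $\calR$. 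Summing the squares of the $(N+1)N = O_P(1)$ entries preserves this, so $\norm{J_P(\bx)}_F^2\le\calR$ for all $\bx\in\zeroset'$.

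For $\jac_F(\bx) = \det J_F(\bx)$, I would note that the determinant of an $N\times N$ matrix is a fixed polynomial of degree $N$ in its entries; composing this with the previous step, $\jac_F$ is a fixed polynomial in $X_1,\ldots,X_N$ of degree $O_P(1)$ with $O_P(1)$-bounded coefficients, so the same evaluation argument gives $|\jac_F(\bx)|\le\calR$. Identically, each entry of $\Hess_j(\bx)$ is a second partial derivative $\partial^2 p_j/\partial X_a\partial X_b$, again a fixed polynomial of degree $O_P(1)$ with $O_P(1)$-bounded coefficients, and summing the squares of its $N^2 = O_P(1)$ entries gives $\norm{\Hess_j(\bx)}_F^2\le\calR$ for every $j\in[N+1]$.

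The only point requiring care is the bookkeeping: one must check that the number of monomials, their degrees, and the number of matrix entries being summed are all $O_P(1)$, so that every intermediate estimate is a polynomial in $\calR$ with $P$-dependent coefficients and hence collapses to $\calR$ by the stated convention. There is no genuine obstacle beyond this accounting.
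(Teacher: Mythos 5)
Your proof is correct and follows essentially the same route as the paper's: both invoke Lemma~\ref{lem:upperbounds} to bound $\norm{\bx}_\infty$ by $\calR$ on $\zeroset'$ and then observe that each quantity is a fixed polynomial in $\bx$ with $O_P(1)$ degree and coefficients, so its value collapses to $\calR$ under the stated convention. Your version merely makes the bookkeeping (conditioning on the high-probability event, counting entries and monomials) more explicit than the paper's one-line argument.
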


\begin{proof}
    This simply follows by the fact that the coefficients of $\jac_F$, $\norm{J_P}^2_F$, $\norm{\Hess_j(\bx)}^2_F$ as polynomials in $\bx$ are at most $O_P(1)$ in magnitude by definition, and the entries of $\bx$ are at most $\calR$ in magnitude by Lemma~\ref{lem:upperbounds}.
\end{proof}

\noindent Lastly, we will need to lower bound the distance from $\bc'$ to the discriminant locus. This will ensure that we do not run into points in the image whose fiber is infinitely large. Indeed, by the first part of our Assumption~\ref{assume:fiber} a generic fiber in the image variety of $F$ consists of exactly $d$ points. By B\'ezout, $d \le \prod^N_{i=1}\mathrm{deg} f_i = O_P(1)$, and every point in such a fiber is nonsingular. Furthermore, there is a nonzero polynomial $\Delta$ on the target such that for all points over which $\Delta$ does not vanish, the fiber satisfies these properties.

We need a quantitative bound on the extent to which $\Delta$ is nonzero. As explained in Section~\ref{sec:verify-generic-fiber}, this can be certified by again considering a random $\bx^*$ and verifying that the size of the fiber $F^{-1}(F(\bx^*))$ is finite.

\begin{assumption}\label{assume:Delta}
    Let $\Delta\circ F$ have Hermite expansion
    \begin{equation}
        \Delta\circ F = \sum_\alpha b''_\alpha h_\alpha\,.
    \end{equation}
    Then $\sum_{\alpha \ \mathrm{maximal}}b^2_\alpha \ge \gamma''^2$ for some parameter $\gamma'' > 0$.
\end{assumption}

\begin{lemma}\label{lem:dist_to_locus}
    There is a $\tau = \gamma''\,\poly_P(\delta,\smoothing)/\calR$ such that with probability at least $1 - \delta$, all $\bx\in B_\tau(\bc')$ lie outside the discriminant locus $\{\Delta = 0\}$.
\end{lemma}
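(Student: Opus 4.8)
The plan is to mirror the proofs of Lemma~\ref{lem:avoidzeros} and Lemma~\ref{lem:sigmin_lbd}: use anti-concentration to show that $|\Delta(\bc')|$ is bounded away from zero with high probability over $\bx^*$, and then convert this pointwise nonvanishing into nonvanishing over a whole ball around $\bc'$ via a Lipschitz estimate on $\Delta$ valid on a bounded region containing $\bc'$.

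First I would lower bound $|\Delta(\bc')|$. Since $\Delta$ is a polynomial on the target of $F$ with coefficients depending only on $P$, and both $\Delta$ and $F$ have degree $O_P(1)$, the composition $\Delta\circ F$ is a polynomial of degree $t = O_P(1)$. Recalling $\bc' = F(\bx^*)$ with $\bx^*\sim\mathcal{N}(\mu,\smoothing^2\,\Id)$, Carbery-Wright (Lemma~\ref{lem:carbery}) gives that with probability at least $1-\delta$, $|\Delta(\bc')| \ge \poly_P(\delta)\cdot\Var(\Delta\circ F)^{1/2}$. By Lemma~\ref{lem:anticonc} applied to $\Delta\circ F$ together with Assumption~\ref{assume:Delta}, $\Var(\Delta\circ F) \ge \smoothing^{2t}\,\gamma''^2 = \poly_P(\smoothing)\,\gamma''^2$. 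Hence with probability at least $1-\delta$, $|\Delta(\bc')| \ge \gamma''\,\poly_P(\delta,\smoothing)$.

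Next I would establish the Lipschitz estimate and conclude. By standard Gaussian concentration, $\norm{\bx^*}_\infty \le \calR$ with probability at least $1-\delta$, hence $\norm{\bc'}_\infty = \norm{F(\bx^*)}_\infty \le \calR$ as well, $F$ being a polynomial with $O_P(1)$ coefficients. Thus every point of $B_1(\bc')$ has $\infty$-norm at most $\calR$; since the entries of $\nabla\Delta$ are polynomials of degree $O_P(1)$ with $O_P(1)$ coefficients, $\norm{\nabla\Delta}$ is bounded by $\calR$ on $B_1(\bc')$, so $\Delta$ is $\calR$-Lipschitz there. Union-bounding over the $O_P(1)$ bad events (which does not affect the asymptotics of $\calR$), for any $\bx\in B_\tau(\bc')$ with $\tau\le 1$ we get $|\Delta(\bx)| \ge |\Delta(\bc')| - \calR\tau \ge \gamma''\,\poly_P(\delta,\smoothing) - \calR\tau$. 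Taking $\tau = \gamma''\,\poly_P(\delta,\smoothing)/\calR$, shrunk by a constant if needed so that $\tau\le 1$ and the right-hand side stays positive, yields $|\Delta(\bx)| > 0$ for all $\bx\in B_\tau(\bc')$, i.e.~$B_\tau(\bc')$ avoids $\{\Delta = 0\}$, as claimed.

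I do not expect a genuine obstacle here: the argument is a direct instantiation of the anti-concentration-plus-Lipschitz recipe already used twice in this section, the only care needed being to invoke Assumption~\ref{assume:Delta} correctly to lower bound $\Var(\Delta\circ F)$ and to keep track of which polynomial factors are absorbed into the shorthand $\calR$.
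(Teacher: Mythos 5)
Your proposal is correct and follows essentially the same route as the paper's proof: anti-concentration (Carbery--Wright plus Lemma~\ref{lem:anticonc} and Assumption~\ref{assume:Delta}) to lower bound $|\Delta(\bc')|$ by $\gamma''\,\poly_P(\delta,\smoothing)$, followed by an $\calR$-Lipschitz bound on $\Delta$ near $\bc'$ to push the nonvanishing out to a ball of radius $\tau = \gamma''\,\poly_P(\delta,\smoothing)/\calR$. Your write-up is simply a more explicit version of the paper's terser argument.
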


\begin{proof}
    We first show that the value of $\Delta$ at $\bc'$ is lower bounded with high probability. As $\Delta$ only depends on $F$ and is a nonzero polynomial, by Carbery-Wright and the fact that $\Var(\Delta\cdot F) \ge \poly_P(\smoothing)\gamma''^2$ by Lemma~\ref{lem:anticonc}, we have
    \begin{equation}
        \Pr{|\Delta(\bc')| \le \gamma''\,\poly_P(\delta,\smoothing)} \le \delta\,.
    \end{equation}
    Furthermore, $\Delta$ is $\calR$-Lipschitz in a small neighborhood around $\bx^*$ with probability at least $1 - \delta$. In this case, in order for $\Delta$ to vanish at a point, that point must be at least $\tau$-far from $\bc'$.
\end{proof}

\noindent Finally, we are ready to lower bound the distance between the points in $\zeroset'$. We will use a resultant similar to the one in Lemma~\ref{lem:upperbounds}. For a fixed unit vector $\vphi\in\C^N$, by the product formula (Lemma~\ref{lem:poisson}) applied to $f = \langle \vphi, \bx\rangle$, the following is a polynomial in the variables $(t,\bc')$ whose coefficients depend solely on the coefficients of $F$.
\begin{equation}
    \prod_{\bx\in\zeroset'}(t - \langle \vphi, \bx\rangle)\,.
\end{equation}
Let $\mathfrak{S}_\vphi$ denote the discriminant of this as a polynomial in $t$, and regard the discriminant as a polynomial in $\bc'$ to obtain
\begin{equation}
    \mathfrak{S}_\vphi(\bc') = \pm\prod_{\bx,\bx'\in\zeroset': \bx\neq \bx'} \langle \vphi, \bx - \bx'\rangle\,. \label{eq:discriminantS}
\end{equation}

\begin{assumption}\label{assume:rootsep}
    Let $\mathfrak{S}_\vphi\circ F$ have Hermite expansion
    \begin{equation}
        \mathfrak{S}_\vphi\circ F = \sum_\alpha b^{\vphi}_\alpha h_\alpha\,.
    \end{equation}
    Then $\sum_{\alpha \ \mathrm{maximal}} (b^\vphi_\alpha)^2 \ge \gamma'''^2$ for some parameter $\gamma''' > 0$ and some unit vector $\vphi\in\C^N$.
\end{assumption}

\begin{lemma}\label{lem:rootsep}
    With probability at least $1 - \delta$,
    \begin{equation}
        \min_{\bx,\bx'\in \zeroset': \bx\neq \bx'} \norm{\bx - \bx'}_2 \ge \gamma'''\,\poly_P(\delta,\smoothing)/\calR\,.
    \end{equation}
\end{lemma}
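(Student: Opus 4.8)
The plan is to mirror the proof of Lemma~\ref{lem:avoidzeros} almost verbatim, replacing the discriminant $\mathfrak{D}$ of the univariate polynomial whose roots are the values of $G$ on $\zeroset'$ with the discriminant $\mathfrak{S}_\vphi$ of the univariate polynomial $\prod_{\bx\in\zeroset'}(t-\langle\vphi,\bx\rangle)$ whose roots are the projections of the fiber points onto the fixed unit vector $\vphi$ supplied by Assumption~\ref{assume:rootsep}. First I would recall that, by the product formula (Lemma~\ref{lem:poisson}) applied with $q_1,\dots,q_N$ taken to be $p_1-c_1,\dots,p_N-c_N$, with $f$ taken to be $\langle\vphi,\bx\rangle$, and with the auxiliary variable specialized to $0$, the quantity $\mathfrak{S}_\vphi$ regarded as a function of $\bc'$ is a polynomial whose coefficients depend only on the coefficients of $F$; and that almost surely over $\bx^*$ the fiber $\zeroset'=F^{-1}(\bc')$ consists of exactly the generic number of simple points, so that the factorization in Eq.~\eqref{eq:discriminantS} is valid.

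Next I would invoke anti-concentration. Since $\mathfrak{S}_\vphi\circ F$ is a polynomial of degree $O_P(1)$ and $\bc'=F(\bx^*)$ with $\bx^*\sim\mathcal{N}(\mu,\smoothing^2\,\Id)$, Carbery-Wright (Lemma~\ref{lem:carbery}) gives
\begin{equation}
    \Pr{|\mathfrak{S}_\vphi(\bc')| \le \poly_P(\delta)\cdot\Var(\mathfrak{S}_\vphi\circ F)^{1/2}} \le \delta\,,
\end{equation}
while Lemma~\ref{lem:anticonc} together with Assumption~\ref{assume:rootsep} gives $\Var(\mathfrak{S}_\vphi\circ F)\ge\poly_P(\smoothing)\,\gamma'''^2$. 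Hence with probability at least $1-\delta$ we have $|\mathfrak{S}_\vphi(\bc')|\ge\gamma'''\,\poly_P(\delta,\smoothing)$.

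To finish, I would combine this lower bound on the product $|\mathfrak{S}_\vphi(\bc')|=\prod_{\bx\neq\bx'\in\zeroset'}|\langle\vphi,\bx-\bx'\rangle|$ with the upper bound $\max_{\bx\in\zeroset'}\norm{\bx}_\infty\le\calR$ from Lemma~\ref{lem:upperbounds}. Since $\vphi$ is a unit vector, each factor obeys $|\langle\vphi,\bx-\bx'\rangle|\le\norm{\bx-\bx'}_2\le\calR$ by Cauchy--Schwarz, and there are only $O_P(1)$ factors because $|\zeroset'|=O_P(1)$ by B\'ezout (as noted before the statement). Isolating any single factor then yields $|\langle\vphi,\bx-\bx'\rangle|\ge\gamma'''\,\poly_P(\delta,\smoothing)/\calR$ for every pair $\bx\neq\bx'$, and $\norm{\bx-\bx'}_2\ge|\langle\vphi,\bx-\bx'\rangle|$ gives the claim; as in Lemma~\ref{lem:avoidzeros}, the $O_P(1)$ union bounds only rescale $\delta$ inside $\calR$ and do not change its asymptotics.

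The only step requiring genuine care — and hence the main (mild) obstacle — is justifying that Eq.~\eqref{eq:discriminantS} really holds with probability one over $\bx^*$, i.e., that $\zeroset'$ has no scheme-theoretic multiplicities and no two of its points share a $\vphi$-projection, so that $\mathfrak{S}_\vphi(\bc')$ genuinely factors as the product of the nonzero differences $\langle\vphi,\bx-\bx'\rangle$. This is precisely where generic finiteness (the first part of Assumption~\ref{assume:fiber}, together with the B\'ezout/nonsingularity remarks preceding the statement) and the fact that $\mathfrak{S}_\vphi\circ F$ is not the zero polynomial (implied by Assumption~\ref{assume:rootsep}) are used: together they confine the bad $\bx^*$ to a measure-zero set. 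Everything else is a routine repetition of the Carbery--Wright-plus-Cauchy-root-bound argument already carried out for Lemma~\ref{lem:avoidzeros} and Lemma~\ref{lem:sigmin_lbd}.
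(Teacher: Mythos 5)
Your proposal is correct and follows essentially the same route as the paper's proof: apply Carbery--Wright together with Lemma~\ref{lem:anticonc} and Assumption~\ref{assume:rootsep} to lower bound $|\mathfrak{S}_\vphi(\bc')|$, then use the $\calR$ upper bound on each of the $O_P(1)$ factors from Lemma~\ref{lem:upperbounds} to isolate a single factor $|\langle\vphi,\bx-\bx'\rangle|\le\norm{\bx-\bx'}_2$. Your additional remarks on why the factorization in Eq.~\eqref{eq:discriminantS} is valid almost surely are a correct elaboration of a point the paper leaves implicit.
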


\begin{proof}
    The coefficients of the discriminant in Eq.~\eqref{eq:discriminantS} as a polynomial in $\bc'$ only depend on $F$. So by Carbery-Wright, Lemma~\ref{lem:anticonc}, and Assumption~\ref{assume:rootsep},
    \begin{equation}
        \Pr{|\mathfrak{S}_\vphi(\bc')| > \gamma'''\,\poly_P(\delta,\smoothing)} \le \delta\,.
    \end{equation}
    On the other hand, all of the factors in the discriminant are upper bounded by $\calR$ by Lemma~\ref{lem:upperbounds}, so the claim follows.
\end{proof}

We are now ready to prove the main claim of this section, namely that a sufficiently accurate approximate solution to the system $F(\bx) = \bc$ must be close to one of the finitely many points in $F^{-1}(\bc')$.

\begin{lemma}\label{lem:square_identifiable}
    Let $\eta>0$ be any parameter satisfying $\eta \le \min(\gamma',\gamma'',\gamma''')\,\poly_P(\delta,\smoothing)/\calR$. Suppose that some $\hat{\bx}\in\C^N$ satisfies $\norm{F(\hat{\bx}) - F(\bx^*)}_2 \le \eta\gamma'\,\poly_P(\delta,\smoothing)/\calR$. Then there is some $\bx\in \zeroset'$ for which $\norm{\hat{\bx} - \bx} \le \eta$.
\end{lemma}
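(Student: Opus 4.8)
The plan is to apply the quantitative inverse function theorem (Lemma~\ref{lem:inverse}) to the polynomial map $F$ at each of the finitely many points of $\zeroset' = F^{-1}(\bc')$, and then argue that $\hat\bx$ must be captured by the inverse branch at one of these points. First, I would observe that $\zeroset'$ is finite with $|\zeroset'| = O_P(1)$: by Lemma~\ref{lem:dist_to_locus}, with probability at least $1-\delta$ the point $\bc'$ lies at distance $\ge \tau = \gamma''\,\poly_P(\delta,\smoothing)/\calR$ from the discriminant locus $\{\Delta = 0\}$, so $\bc'$ is a regular value and by B\'ezout the fiber consists of at most $d = O_P(1)$ nonsingular points. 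Condition on this event as well as the events in Lemmas~\ref{lem:sigmin_lbd},~\ref{lem:upperbounds}, and~\ref{lem:rootsep}, at a total cost of $O_P(1)$ applications of the union bound (which, as remarked in the proof of Lemma~\ref{lem:avoidzeros}, does not change the asymptotic scaling of $\calR$).

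Next I would set up the hypotheses of Lemma~\ref{lem:inverse} at a generic point $\bx\in\zeroset'$. For the smoothness parameter $\smoothness$ I would use the bound on the Hessians from Corollary~\ref{cor:jacbounds}: for $\bx',\bx''$ in a ball of radius $\eta$ around $\bx$, the operator-norm difference $\norm{J_F(\bx') - J_F(\bx'')}_{\sf op}$ is controlled by $\sqrt{N}\max_j \norm{\Hess_j}_{\sf op}$ times $\norm{\bx'-\bx''}_2$, and since $\norm{\bx}_\infty \le \calR$ for all $\bx\in\zeroset'$, and $\eta \le \poly_P(\delta,\smoothing)/\calR$ is small, the same bound $\norm{\Hess_j(\cdot)}_F^2 \le \calR$ persists on $B_\eta(\bx)$ up to constants; hence one can take $\smoothness = \calR$. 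For the minimum singular value, Lemma~\ref{lem:sigmin_lbd} gives $\sigma_{\min}(J_F(\bx)) \ge \gamma'\,\poly_P(\delta,\smoothing)/\calR$ for every $\bx\in\zeroset'$ simultaneously. The condition $\eta < \sigma_{\min}(J_F(\bx))/\smoothness$ is then guaranteed precisely by the hypothesis $\eta \le \gamma'\,\poly_P(\delta,\smoothing)/\calR$ (absorbing the extra factor of $\calR$ from $\smoothness$ into the polynomial, which is legitimate since everything is measured in units of $\calR$). With these choices, $\xi \triangleq \sigma_{\min}(J_F(\bx))\eta - \tfrac12\eta^2\smoothness \ge \tfrac12\sigma_{\min}(J_F(\bx))\eta \ge \eta\gamma'\,\poly_P(\delta,\smoothing)/\calR$, so Lemma~\ref{lem:inverse} tells us that for any $\tilde\by\in B_\xi(F(\bx))$ there is a unique preimage in $B_\eta(\bx)$.

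Finally I would apply this with $\tilde\by = F(\hat\bx)$. Since $F(\bx) = \bc'= F(\bx^*)$ for every $\bx\in\zeroset'$, the hypothesis $\norm{F(\hat\bx) - F(\bx^*)}_2 \le \eta\gamma'\,\poly_P(\delta,\smoothing)/\calR \le \xi$ places $F(\hat\bx)$ inside $B_\xi(F(\bx))$ for \emph{every} $\bx\in\zeroset'$. Thus for each $\bx\in\zeroset'$ there is a unique point $\bx^\circ(\bx)\in B_\eta(\bx)$ with $F(\bx^\circ(\bx)) = F(\hat\bx)$; in particular, taking any one $\bx$, such a branch point exists. To conclude that $\hat\bx$ itself is one of these branch points (hence within $\eta$ of some point of $\zeroset'$), I would choose $\eta$ small enough relative to the root-separation bound of Lemma~\ref{lem:rootsep}: if $\eta \le \tfrac13\min_{\bx\neq\bx'\in\zeroset'}\norm{\bx-\bx'}_2$ — which is implied by $\eta \le \gamma'''\,\poly_P(\delta,\smoothing)/\calR$ after shrinking the polynomial — then the balls $B_\eta(\bx)$ are pairwise disjoint, so the branch points $\bx^\circ(\bx)$ are distinct; combined with the fact that the number of solutions of $F(\cdot)=F(\hat\bx)$ in a neighborhood of $\zeroset'$ equals $|\zeroset'|$ (generic finiteness, Assumption~\ref{assume:fiber}, plus $F(\hat\bx)$ avoiding the discriminant locus by Lemma~\ref{lem:dist_to_locus} applied at $F(\hat\bx)$ rather than $\bc'$), the full solution set $F^{-1}(F(\hat\bx))$ near $\zeroset'$ is exactly $\{\bx^\circ(\bx)\}_{\bx\in\zeroset'}$, and $\hat\bx$ — being a solution at distance $\le\eta$ from $\zeroset'$ via the inverse-function branch at whichever $\bx$ is nearest — must coincide with one of them. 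Hence $\norm{\hat\bx - \bx}_2 \le \eta$ for some $\bx\in\zeroset'$.

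The main obstacle I anticipate is the last step: cleanly arguing that $\hat\bx$ genuinely \emph{is} one of the inverse-function branch points rather than merely that some branch point exists near $\zeroset'$. This requires knowing that $\hat\bx$ lies in a region where $F$ is a local diffeomorphism onto its image and that $F(\hat\bx)$ is a regular value with the full generic fiber count, so that no extra spurious preimage can hide. I would handle this by additionally invoking Lemma~\ref{lem:dist_to_locus}-type reasoning at $F(\hat\bx)$ (using $\norm{F(\hat\bx)-\bc'}_2 \le \xi \le \tau$, so $F(\hat\bx)$ is also off the discriminant locus) to pin down $|F^{-1}(F(\hat\bx))| = |\zeroset'|$, together with a continuity/degree argument that the preimages of $F(\hat\bx)$ near $\zeroset'$ are precisely the inverse-function branch points; everything else is a routine bookkeeping of the $\poly_P(\delta,\smoothing)/\calR$ factors.
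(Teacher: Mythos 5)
Your proposal is correct and follows essentially the same route as the paper's proof: apply the quantitative inverse function theorem (Lemma~\ref{lem:inverse}) at each point of $\zeroset'$ using the singular-value bound from Lemma~\ref{lem:sigmin_lbd} and the smoothness bound from Corollary~\ref{cor:jacbounds}, use the root-separation bound (Lemma~\ref{lem:rootsep}, parameter $\gamma'''$) to make the $\eta$-balls disjoint so that the fiber $F^{-1}(F(\hat{\bx}))$ contains at least $d$ distinct branch points, and use distance to the discriminant locus (Lemma~\ref{lem:dist_to_locus}, parameter $\gamma''$) to cap the fiber size at $d$, forcing $\hat{\bx}$ to coincide with one of the branch points. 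The "obstacle" you flag at the end is resolved exactly by this counting argument, which is also how the paper handles it.
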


\begin{proof}
    By the quantitative inverse function theorem (Lemma~\ref{lem:inverse}), if we take $F$ and $\bx$ therein to be our $F$ and a point $\bx\in F^{-1}(\bc')$, and note that $\sigma_{\min}(J_F(\bx)) \ge \gamma'\,\poly_P(\delta,\smoothing)/\calR$ (by Lemma~\ref{lem:sigmin_lbd}) and $\smoothness = \calR$ (by Corollary~\ref{cor:jacbounds}), we conclude that for any $\eta < \gamma'\,\poly_P(\delta,\smoothing)/\calR$, for each such $\bx \in F^{-1}(\bc')$ there is a unique $\tilde{x}\in B_\eta(\bx)$ for which $F(\tilde{\bx}) = F(\hat{\bx})$ provided $\norm{F(\hat{\bx}) - F(\bx^*)}_2 \le \eta\gamma'\,\poly_P(\delta,\smoothing)/\calR$. We wish to take $\eta$ small enough that (1) these balls $B_\eta(\bx)$ for different $\bx\in F^{-1}(\bc)$ are disjoint, and (2) $B_\eta(\bx^*)$ does not intersect the discriminant locus $\{\Delta = 0\}$. (1) requires $\eta < \gamma'''\,\poly_P(\delta,\smoothing)/\calR$ and ensures that $F^{-1}(F(\hat{\bx}))$ contains at least $d$ many distinct points, where $d$ is the size of a generic fiber of $F$. (2) requires $\eta < \gamma''\,\poly_P(\delta,\smoothing) / \calR$ and ensures that there are only finitely many, and in fact at most $d$, distinct points in $F^{-1}(F(\hat{\bx}))$. Therefore, $\hat{\bx}$ must be one of these $d$ points, each of which we have ensured above is in an $\eta$-ball of some point $\bx\in \zeroset'$.
\end{proof}

\subsection{Ruling Out Spurious Zeros}
\label{sec:ruleout}

Finally, in this section we show that $G$ can be used to rule out the spurious zeros, i.e., the points in $\zeroset'\backslash \zeroset$.

\begin{theorem}\label{thm:ruleout}
    Let $\eta>0$ be any parameter satisfying $\eta \le \min(\gamma,\gamma',\gamma'',\gamma''')\,\poly_P(\delta,\smoothing)/\calR$. Suppose that some $\hat{\bx}\in\C^N$ satisfies $\norm{F(\hat{\bx}) - F(\bx^*)}_2 \le \eta\gamma'\,\poly_P(\delta,\smoothing)/\calR$ and $|G(\hat{\bx}) - c_0| \le \gamma\,\poly_P(\delta,\smoothing)/\calR$. Then there is some $\bx\in \zeroset$ for which $\norm{\hat{\bx} - \bx} \le \eta$.
\end{theorem}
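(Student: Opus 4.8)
The plan is to combine the two preceding results of this section. First I would invoke Lemma~\ref{lem:square_identifiable}: since $\eta \le \min(\gamma',\gamma'',\gamma''')\,\poly_P(\delta,\smoothing)/\calR$ and $\norm{F(\hat{\bx}) - F(\bx^*)}_2 \le \eta\gamma'\,\poly_P(\delta,\smoothing)/\calR$, there exists $\bx\in\zeroset'$ with $\norm{\hat{\bx} - \bx}_2 \le \eta$. It then remains to upgrade membership in $\zeroset'$ to membership in $\zeroset$, i.e.\ to rule out the possibility that $\bx \in \zeroset'\backslash\zeroset$. The key idea is that $G$ separates these two sets quantitatively: by Lemma~\ref{lem:avoidzeros}, with probability at least $1-\delta$ we have $\min_{\bx''\in\zeroset'\backslash\zeroset}|G(\bx'') - c_0| \ge \gamma\,\poly_P(\delta,\smoothing)/\calR$.

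Suppose for contradiction that $\bx \in \zeroset'\backslash\zeroset$. By Corollary~\ref{cor:jacbounds} (and Lemma~\ref{lem:upperbounds}), the polynomial $G$ is $\calR$-Lipschitz on a neighborhood of radius, say, $1$ around each point of $\zeroset'$ — more precisely, since $\norm{\bx}_\infty \le \calR$ and the coefficients of $G$ are $O_P(1)$-bounded, the gradient of $G$ has norm at most $\calR$ on the ball $B_1(\bx)$, so $G$ is $\calR$-Lipschitz there. Hence, provided $\eta \le 1$ (which holds since $\eta \le \gamma\,\poly_P(\delta,\smoothing)/\calR$),
\begin{equation}
    |G(\hat{\bx}) - G(\bx)| \le \calR\,\norm{\hat{\bx} - \bx}_2 \le \calR\,\eta\,.
\end{equation}
Combining this with the hypothesis $|G(\hat{\bx}) - c_0| \le \gamma\,\poly_P(\delta,\smoothing)/\calR$ and the triangle inequality gives
\begin{equation}
    |G(\bx) - c_0| \le |G(\bx) - G(\hat{\bx})| + |G(\hat{\bx}) - c_0| \le \calR\,\eta + \gamma\,\poly_P(\delta,\smoothing)/\calR\,.
\end{equation}
Choosing $\eta \le \gamma\,\poly_P(\delta,\smoothing)/\calR^2$ — which is subsumed by the stated bound $\eta \le \min(\gamma,\gamma',\gamma'',\gamma''')\,\poly_P(\delta,\smoothing)/\calR$ after absorbing an extra factor of $\calR$ into the polynomial (recall $\calR^{\mathcal{O}(1)} = \calR$ under the convention of this section) — makes the right-hand side strictly smaller than $\gamma\,\poly_P(\delta,\smoothing)/\calR$, contradicting Lemma~\ref{lem:avoidzeros}. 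Therefore $\bx \in \zeroset$, as desired.

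The only real subtlety — and the step I would be most careful about — is bookkeeping of the $\poly_P(\delta,\smoothing)/\calR$ factors: one has to make sure the Lipschitz loss of a factor of $\calR$ when transferring the bound on $G(\hat{\bx})$ to $G(\bx)$ is still absorbed by the slack between $\eta$ and $\gamma\,\poly_P(\delta,\smoothing)/\calR$, and similarly that all the high-probability events (Lemma~\ref{lem:square_identifiable}'s hypotheses, Lemma~\ref{lem:avoidzeros}, and the Lipschitz bound on $G$ from Corollary~\ref{cor:jacbounds}) hold simultaneously; this requires a union bound over $O_P(1)$ events, which as noted elsewhere in the section does not change the asymptotic form of $\calR$ after replacing $\delta$ by $\delta/O_P(1)$. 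Everything else is a direct concatenation of Lemma~\ref{lem:square_identifiable} and Lemma~\ref{lem:avoidzeros} via the triangle inequality.
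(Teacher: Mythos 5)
Your proposal is correct and follows essentially the same route as the paper's proof: invoke Lemma~\ref{lem:square_identifiable} to place $\hat{\bx}$ in an $\eta$-ball around some $\bx\in\zeroset'$, then use the separation from Lemma~\ref{lem:avoidzeros} together with the $\calR$-Lipschitzness of $G$ (from Lemma~\ref{lem:upperbounds} and the bounded coefficients of $G$) to derive a contradiction if $\bx\notin\zeroset$. If anything, you are slightly more careful than the paper about absorbing the extra factor of $\calR$ from the Lipschitz step into the $\poly_P(\delta,\smoothing)/\calR$ convention.
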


\begin{proof}
    By Lemma~\ref{lem:square_identifiable}, the condition that $\norm{F(\hat{\bx}) - F(\bx^*)}_2 \le \eta\gamma'\,\poly_P(\delta,\smoothing)/\calR$ already implies that $\hat{\bx}$ lives in an $\eta$-ball around some point $\bx\in\zeroset'$. Suppose to the contrary that $|G(\hat{\bx}) - c_0| \le \gamma$ yet $\bx\not\in\zeroset$. By Lemma~\ref{lem:avoidzeros} we know that $|G(\bx) - c_0| \ge \gamma\,\poly_P(\delta,\smoothing)/\calR$. But because $\norm{\bx}_\infty \le \calR$ by Lemma~\ref{lem:upperbounds} and $G$ is a polynomial whose coefficients and degree are bounded by $O_P(1)$ by definition, $G$ is $\calR$-Lipschitz in a neighborhood around $\bx$, meaning that $|G(\hat{\bx}) - c_0| \ge \gamma\,\poly_P(\delta,\smoothing)/\calR - \eta\calR$. As we assumed $\eta \le \gamma\,\poly_P(\delta,\smoothing)/\calR$, this would contradict the assumption that $|G(\hat{\bx}) - c_0| \le \gamma\,\poly_P(\delta,\smoothing)/\calR$.
\end{proof}

\subsection{Proof of Main Identifiability Result}
\label{sec:puttogether}

We are now ready to state and prove our main theorem, namely that under the above assumptions on $P$, an approximate solution to the system given by $P$ must be close to the orbit of $\bx^*$, and furthermore one can even find such an approximate solution via a combination of exhaustive enumeration over the parameter space followed by iterative refinement.

\begin{algorithm2e}
\DontPrintSemicolon
\caption{\textsc{FindRoot}($P = (G,F),\tilde{\bc} = (\tilde{c}_0, \tilde{\bc}'),\epsilon$)}
\label{alg:findroot}
\KwIn{Polynomial map $P = (G,F):\C^N\to\C^{N+1}$, $\tilde{\bc} = (\tilde{c}_0, \tilde{\bc}')$ which is close to $\bc = P(\bx^*)$, error $\epsilon>0$}
\KwOut{$\hat{\bx}$ which is $\epsilon$-close to the $\mathcal{G}$-orbit of $\bx^*$}
    $\zeta \gets \frac{\gamma'\min(\gamma,\gamma',\gamma'',\gamma''')\,\poly_P(\delta,\smoothing)}{\calR}$\;
    $\xi_0 \gets \frac{\gamma\poly_P(\delta,\smoothing)}{\calR}$\;
    $\underline{\sigma}\gets \gamma'\poly_P(\delta,\smoothing)/\calR$ \tcp*{Lemma~\ref{lem:sigmin_lbd}}
    $T \gets \lceil\log_2(6\underline{\sigma}/\epsilon\sqrt{N+1})\rceil$\;
    Let $L$ be a $\zeta$-net over $[-\lmax,\lmax]^N$.\;
    \For(\tcp*[f]{Grid search over parameter space}){$\hat{\bx}^{(0)} \in L$}{
        \If{$\norm{F(\hat{\bx}^{(0)}) - \tilde{\bc}'}_2 \le \calR\sqrt{N}\zeta$ and $|G(\hat{\bx}^{(0)}) - \tilde{c}_0| \le \xi_0$ \label{step:checkf}}{
            Break out of loop.\; \label{step:breakout}
        }
    }
    \For(\tcp*[f]{Refine with Newton steps}){$0 \le t < T$}{
        $\hat{\bx}^{(t+1)} \gets \proj_{[-\lmax,\lmax]^N} (\hat{\bx}^{(t)} - J_P(\hat{\bx}^{(t)})^\dagger\cdot (P(\hat{\bx}^{(t)}) - \tilde{\bc}))$\; \label{step:newton}
    }
    \Return{$\hat{\bx}^{(T)}$}
\end{algorithm2e}

\begin{theorem}\label{thm:solvesystem}
    Let $\epsilon \le \gamma'\poly_P(\delta,\smoothing)/\calR$. Under Assumptions~\ref{assume:symmetry}-\ref{assume:rootsep} on the polynomial map $P: \C^N\to\C^{N+1}$, with probability $1 - \delta$ over the ground truth solution $\bx^*$, {\sc FindRoot} (Algorithm~\ref{alg:findroot}) takes as input a description of $P$ as well as a vector $\tilde{\bc} \in \C^{N+1}$ and, if 
    \begin{equation}
        \norm{\bc - \tilde{\bc}}_\infty \le \frac{\gamma'\poly_P(\delta,\smoothing)}{\calR}\min(\epsilon,\gamma,\gamma',\gamma'',\gamma''')\,, \label{eq:cclose}
    \end{equation}
    outputs $\hat{\bx}$ satisfying $\norm{\hat{\bx} - \bx^{**}}_\infty \le \epsilon$ for some point $\bx^{**}$ in the $\mathcal{G}$-orbit of $\bx^*$. The algorithm runs in time $O\Bigl(\frac{\lmax}{\gamma'\min(\gamma,\gamma',\gamma'',\gamma''')}\Bigr)^N\cdot \calR / \poly_P(\delta,\smoothing) + \poly(N) \cdot \mathcal{O}(\log 1/\epsilon)$ and succeeds with probability $1 - \delta$ over the randomness of $\bx^*$.
\end{theorem}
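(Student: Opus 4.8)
The plan is to analyze the two phases of {\sc FindRoot} separately. For the grid search I will establish (i) \emph{completeness}: the test on line~\ref{step:checkf} is passed by the grid point closest to $\bx^*$, so the loop terminates with some candidate $\hat\bx^{(0)}$ rather than running to exhaustion; and (ii) \emph{soundness}: whatever candidate $\hat\bx^{(0)}$ it terminates with must lie within $\eta$ of some point $\bx^{**}$ of the $\mathcal{G}$-orbit of $\bx^*$, for a suitably small $\eta$. For the Newton phase I will invoke the local-convergence guarantee of Theorem~\ref{thm:newton}. Throughout, I condition on the high-probability event (over $\bx^*$) underlying Theorem~\ref{thm:ruleout} --- namely the intersection of the events of Lemmas~\ref{lem:avoidzeros},~\ref{lem:sigmin_lbd},~\ref{lem:upperbounds},~\ref{lem:dist_to_locus},~\ref{lem:rootsep} and of $\norm{\bx^*}_\infty\le\calR$ --- invoking each with failure probability $\delta/O_P(1)$ so that a union bound over the $O_P(1)$ events keeps the total failure probability at $\delta$. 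On this event $\zeroset$ is exactly the $\mathcal{G}$-orbit of $\bx^*$ (by the second parts of Assumptions~\ref{assume:symmetry} and~\ref{assume:fiber}), every point of $\zeroset'\supseteq\zeroset$ has $\ell_\infty$-norm at most $\calR$, and $\sigma_{\min}(J_F)\ge\curvature\triangleq\gamma'\poly_P(\delta,\smoothing)/\calR$ on all of $\zeroset'$.

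\emph{Grid search.} Since $\bx^*\in[-\lmax,\lmax]^N$, the closest grid point $\hat\bx^{(0)}_\star\in L$ has $\norm{\hat\bx^{(0)}_\star-\bx^*}_\infty\le\zeta$. Each coordinate of $F$ and of $G$ is a fixed polynomial of $O_P(1)$ degree and $O_P(1)$-bounded coefficients, hence $\calR$-Lipschitz on $B_\zeta(\bx^*)$, so $\norm{F(\hat\bx^{(0)}_\star)-\bc'}_2$ and $|G(\hat\bx^{(0)}_\star)-c_0|$ are both $\le\calR\sqrt{N}\zeta$; since in addition $\norm{\bc-\tilde\bc}_\infty$ is negligible compared with $\zeta$ and $\xi_0$ by Eq.~\eqref{eq:cclose} (this is where the hidden polynomial factors in $\zeta$ and $\xi_0$ get fixed), the test on line~\ref{step:checkf} is passed by $\hat\bx^{(0)}_\star$, so the loop breaks. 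Conversely, any grid point $\hat\bx^{(0)}$ that passes that test satisfies $\norm{F(\hat\bx^{(0)})-F(\bx^*)}_2\le\calR\sqrt{N}\zeta+\sqrt{N}\norm{\bc-\tilde\bc}_\infty$ and $|G(\hat\bx^{(0)})-c_0|\le\xi_0+\norm{\bc-\tilde\bc}_\infty$. Fixing $\eta$ to be a sufficiently small multiple of $\min(\gamma,\gamma',\gamma'',\gamma''')\poly_P(\delta,\smoothing)/\calR$ --- small enough also to lie below the Newton basin radius $\curvature/(2\smoothness\sqrt{mN})$ of Eq.~\eqref{eq:init} with $m=N+1$ and $\smoothness=\calR$ --- these two displayed bounds meet the hypotheses of Theorem~\ref{thm:ruleout}, which produces $\bx^{**}\in\zeroset$ with $\norm{\hat\bx^{(0)}-\bx^{**}}_\infty\le\eta$; note $P(\bx^{**})=P(\bx^*)=\bc$ by the $\mathcal{G}$-invariance in Assumption~\ref{assume:symmetry}, and $\bx^{**}\in[-\lmax,\lmax]^N$ since it lies in the orbit of $\bx^*$.

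\emph{Newton refinement and runtime.} Apply Theorem~\ref{thm:newton} to $f\triangleq P-\tilde\bc:\R^N\to\R^{N+1}$ with $m=N+1$, initialization $\hat\bx^{(0)}$, and ground truth $\bx^{**}$. Condition~1 holds with parameter $\norm{f(\bx^{**})}_\infty=\norm{\bc-\tilde\bc}_\infty\le\curvature^2/(8m)$ by Eq.~\eqref{eq:cclose}; condition~2 holds with $\smoothness=\calR$ since each $p_j$ is a fixed polynomial whose Hessian is $\calR$-bounded on $[-\lmax,\lmax]^N$; condition~3 holds with $\curvature=\gamma'\poly_P(\delta,\smoothing)/\calR$ because $J_F(\bx^{**})$ is an $N\times N$ submatrix of $J_P(\bx^{**})$, so $\sigma_{\min}(J_P(\bx^{**}))\ge\sigma_{\min}(J_F(\bx^{**}))\ge\curvature$ by Lemma~\ref{lem:sigmin_lbd} --- in particular $J_P(\bx^{**})$ has full column rank, the finite $\mathcal{G}$-action having zero-dimensional orbits and hence no infinitesimal obstruction. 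The initialization bound Eq.~\eqref{eq:init} holds by the choice of $\eta$ and Eq.~\eqref{eq:eps} follows from Eq.~\eqref{eq:cclose}; the assumed bound $\epsilon\le\gamma'\poly_P(\delta,\smoothing)/\calR$ makes $T$ well-defined and nonnegative. Hence after $T=\lceil\log_2(6\curvature/\epsilon\sqrt{N+1})\rceil=\mathcal{O}(\log 1/\epsilon)$ iterations, Theorem~\ref{thm:newton} gives $\norm{\hat\bx^{(T)}-\bx^{**}}_\infty\le 6\norm{\bc-\tilde\bc}_\infty\curvature^{-1}\sqrt{N+1}\le\epsilon$, again by Eq.~\eqref{eq:cclose}. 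The running time is $|L|\cdot\poly(N)$ for the grid search, with $|L|=(\mathcal{O}(\lmax)/\zeta)^N$, plus $T\cdot\poly(N)$ for the Newton steps (each a pseudoinverse and a matrix-vector product), which matches the claimed bound once the $\calR/\poly_P(\delta,\smoothing)$ absorption convention is applied.

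\emph{Main obstacle.} The real work is not any individual step but the joint calibration of three length scales against the single net width $\zeta$ and single test threshold $\xi_0$ hard-coded in Algorithm~\ref{alg:findroot}: one needs \emph{(measurement noise $+$ grid resolution)} $\lesssim\eta\lesssim\min(\text{Newton basin},\ \text{inter-fiber separation},\ \text{distance to discriminant locus})$ to hold simultaneously, where the left end is governed by Eq.~\eqref{eq:cclose} and the $\calR$-Lipschitz bounds and the right end by Lemmas~\ref{lem:sigmin_lbd},~\ref{lem:dist_to_locus},~\ref{lem:rootsep} and the basin condition of Theorem~\ref{thm:newton}. Checking that all these inequalities can be met is precisely the role of the $\poly_P(\delta,\smoothing)/\calR$ absorption convention of Section~\ref{sec:quantitative}: once one agrees that any fixed polynomial in the coefficients of $P$ and in $\delta,\smoothing$ is absorbed, the nesting is routine, whereas tracking the constants explicitly would be prohibitive. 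A secondary point needing care is that $J_P(\bx^{**})$ has full column rank despite $P$ being $\mathcal{G}$-invariant, which holds because $\zeroset$ is zero-dimensional; and, in the general statement, that the $\mathcal{G}$-orbit of $\bx^*$ stays inside $[-\lmax,\lmax]^N$ (automatic for the norm-preserving actions, such as inversion, that arise in the applications).
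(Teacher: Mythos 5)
Your proposal is correct and follows essentially the same route as the paper's proof: soundness of the grid-search test via Theorem~\ref{thm:ruleout}, completeness via the $\calR$-Lipschitz bounds from Corollary~\ref{cor:jacbounds} together with the choice of $\zeta$, and then Theorem~\ref{thm:newton} applied to $f = P - \tilde{\bc}$ with curvature supplied by Lemma~\ref{lem:sigmin_lbd}. Your added details (the explicit $\sigma_{\min}(J_P)\ge\sigma_{\min}(J_F)$ submatrix step and the union bound over the conditioning events) are consistent with what the paper leaves implicit.
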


\begin{proof}
    We would like to apply Theorem~\ref{thm:newton} to argue that as long as $\hat{\bx}^{(0)}$ found in Line~\ref{step:breakout} is sufficiently close to a point $\bx^{**}$ in the orbit of $\bx^*$, applying enough Newton steps in Line~\ref{step:newton} will result in $\hat{\bx}^{(T)}$ which is $\epsilon$-close to $\bx^{**}$. We will take $f$ therein to be $P - \tilde{\bc}$, $\params^*$ therein to be any $\bx^{**}$ in the orbit, $\epsilon$ therein to be $\epsilon\underline{\sigma}/6\sqrt{N+1}$, $\smoothness$ therein to be $\calR N$ (by Corollary~\ref{cor:jacbounds}), and $\underline{\sigma}$ therein to $\gamma'\poly_P(\delta,\smoothing)/\calR$. Because $\norm{f(\bx^{**})}_\infty \le \norm{\bc - \tilde{\bc}}_\infty \le \epsilon\underline{\sigma}/6\sqrt{N+1}$ and $\epsilon \le \gamma'\poly_P(\delta,\smoothing)/\calR$ by assumption (the latter ensuring that Eq.~\eqref{eq:eps} is satisfied), we find that after the Newton iterations in Line~\ref{step:newton}, {\sc FindRoot} will output $\hat{\bx}^{(T)}$ which is $\epsilon$-close to $\bx^{**}$ provided that
    \begin{equation}
        \norm{\hat{\bx}^{(0)} - \bx^{**}}_\infty \le \frac{\gamma'\poly_P(\delta,\smoothing)/\calR}{2\calR N\sqrt{(N+1)N}} = \frac{\gamma'\poly_P(\delta,\smoothing)}{\calR}\,. \label{eq:warmstart}
    \end{equation}
    To achieve this, we will apply Theorem~\ref{thm:ruleout} with $\eta$ therein taken to be $\frac{\min(\gamma,\gamma',\gamma'',\gamma''')\poly_P(\delta,\smoothing)}{\calR N}$, which means that as long as we can ensure $\hat{\bx}^{(0)}$ satisfies 
    \begin{equation}
        \norm{F(\hat{\bx}^{(0)}) - \bc'}_2 \le \eta\gamma'\poly_P(\delta,\smoothing)/\calR = \frac{\gamma'\min(\gamma,\gamma',\gamma',\gamma''')\poly_P(\delta,\smoothing)}{\calR N}\,. \label{eq:xprime}
    \end{equation}
    and 
    \begin{equation}
        |G(\hat{\bx}^{(0)}) - c_0| \le \gamma\poly_P(\delta,\smoothing)/\calR\,, \label{eq:xi0}
    \end{equation}
    then the point $\hat{\bx}^{(0)}$ satisfies Eq.~\eqref{eq:warmstart} for some $\bx^{**}$ in the orbit of $\bx^*$.

    It remains to set the granularity of the net $L$ over $[-\lmax,\lmax]^N$ fine enough that we are guaranteed to find $\hat{\bx}^{(0)}$ satisfying Eqs.~\eqref{eq:xprime} and~\eqref{eq:xi0}. Note that by our choice of $\xi'$ and $\xi_0$ in Algorithm~\ref{alg:findroot}, and by the assumed bound on $\norm{\bc - \tilde{\bc}}_\infty$, any $\hat{\bx}^{(0)}$ which satisfies the condition in Line~\eqref{step:checkf} for breaking out of the loop must satisfy Eqs.~\eqref{eq:xprime} and~\eqref{eq:xi0}. 
    
    Therefore, the only thing that remains to verify is that there exists such an $\hat{x}^{(0)}$ in the net $L$. But $\norm{F(\bx^{**}) - F(\bx')}_2 \le \calR\sqrt{N}\norm{\bx^{**} - \bx'}_2$ and $|G(\bx^{**}) - G(\bx')| \le \calR\norm{\bx^{**} - \bx'}_2$ for any $\bx^{**}$ in the orbit of $\bx^*$ and any $\bx'$ in a small neighborhood of $\bx^{**}$, by the bound on $\norm{J_P}_F$ in Corollary~\ref{cor:jacbounds}. So given that $\hat{\bx}^{(0)} = \bx^{**}$ would satisfy the condition in Line~\ref{step:checkf}, any $\hat{\bx}^{(0)}$ in a $\zeta = \frac{\gamma'\min(\gamma,\gamma',\gamma'',\gamma''')\,\poly_P(\delta,\smoothing)}{\calR N^{3/2}}$-ball around $\bx^{**}$ would satisfy the condition in Line~\ref{step:checkf}. The theorem follows by our choice of $\zeta$.
\end{proof}

\begin{remark}[On verifying the assumptions]\label{remark:check}
    Apart from Assumption~\ref{assume:symmetry} which is an immediate byproduct of the problem setting, and Assumption~\ref{assume:fiber} for which we gave a prescription for how to certify in Remark~\ref{remark:certify_generic}, it may appear less clear how to check that Assumptions~\ref{assume:fiber}-\ref{assume:rootsep} for a given problem instance. It turns out, however, that each of these Assumptions immediately follows from the exact same check in Remark~\ref{remark:certify_generic}, albeit with unspecified constants.
    
    Assumptions~\ref{assume:hermite_discriminant} and~\ref{assume:rootsep} are simply quantitative versions of the condition that not all fibers of $F$ have a point with multiplicity $>1$, i.e.~a multiple zero for the polynomial $R$ defined in Lemma~\ref{lem:apply_poisson_product}. One can certify these with unspecified constants by exhibiting any fiber of the square system such that the Jacobian $J_F$ is non-singular, as is already done in the check in Remark~\ref{remark:certify_generic}. In addition, Assumption~\ref{assume:jacprod} is, by definition, certified by this check. Assumption~\ref{assume:Delta} is simply a quantitative version of generic finiteness and holds with some constant factor $\gamma''$ provided Assumption~\ref{assume:fiber} holds.
\end{remark}

\section{A General Learning Guarantee for Probe Tomography}
\label{sec:general}

Here we will fit all of the pieces from previous sections together to obtain a general learning guarantee for probe tomography of local, low-intersection Hamiltonians. The pseudocode for the associated algorithm, {\sc ProbeLearn}, can be found in Algorithm~\ref{alg:probe}. Later, in Section~\ref{sec:examples}, we will instantiate the guarantee in the Theorem below for three classes of structured Hamiltonians.

\begin{theorem}\label{thm:main_general}
    Let $H$ be a $\smoothing$-smoothed Hamiltonian from an $N$-parameter family, and denote the parameters of $H$ by $\params^*\in\R^N$.
    
    Let $S$ be a list of triples $\{(j_i,k_i,\mu_i,C_i)\}_{0\le i \le N}$, where $j_i,k_i\in \mathbb{Z}_{\ge 0}$ and $C_i$ is a single-qubit channel. Let $S'$ denote the last $N$ triples in this list. Define $\overline{j}\triangleq \max_i j_i$ and $\overline{k}\triangleq \max_i k_i$.
    
    Let $F$ (resp. $G$) denote the polynomial map $\C^N\to \C^N$ (resp. $\C^N\to \C$) that maps the parameters of the local Hamiltonian family to the corresponding values for the Taylor coefficients indexed by $S'$ (resp. $(j_0,k_0,C_0)$). Suppose $P = (G,F)$ satisfies Assumptions~\ref{assume:symmetry}-\ref{assume:rootsep} of Section~\ref{sec:quantitative} (see Remark~\ref{remark:check} on how to verify these in practice).

    Then with probability at least $1 - \delta$ over the randomness of the smoothing, {\sc ProbeLearn} (Algorithm~\ref{alg:probe}) has query complexity 
    \begin{equation}
        (\calR/\poly_P(\delta,\smoothing))\cdot \bigl[(\gamma'\min(\gamma,\gamma',\gamma'',\gamma'''))^{-2(\maxk+1)(\maxj+1)} + (1/\epsilon\gamma')^{2\maxk(\maxj+1)}\bigr]\,,
    \end{equation}
    total evolution time
    \begin{equation}
        (\calR/\poly_P(\delta,\smoothing)\cdot \bigl[(\gamma'\min(\gamma,\gamma',\gamma'',\gamma'''))^{-4\maxj-2} + (1/\epsilon\gamma')^{\maxk(2\maxj+1)}\bigr]\,,
    \end{equation}
    and classical post-processing time
    \begin{equation}
        O\Bigl(\frac{\lmax}{\gamma'\min(\gamma,\gamma',\gamma'',\gamma''')}\Bigr)^N \calR/\poly_P(\delta,\smoothing) + O\Bigr(\log \calR + \log\Bigl(\frac{1}{\epsilon\gamma'\poly_P(\delta,\smoothing)}\Bigr)\Bigr)^2\,,
    \end{equation}
    and outputs parameters $\hat{\params} \in \R^N$ for which $\norm{\params^* - \hat{\params}}_\infty\le \epsilon$. Furthermore, the temperature of the system can always be taken to be at most $\frac{\calR}{\gamma'\min(\gamma,\gamma',\gamma'',\gamma''')\,\poly_P(\delta,\smoothing)}$, in particular, independent of the final error $\epsilon$.
\end{theorem}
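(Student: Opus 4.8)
The plan is to realize {\sc ProbeLearn} (Algorithm~\ref{alg:probe}) by feeding the estimation primitives of Section~\ref{sec:estimation} into {\sc FindRoot} (Algorithm~\ref{alg:findroot}, Theorem~\ref{thm:solvesystem}), with one twist forced by the requirement that the temperature be $\epsilon$-independent. At any fixed inverse temperature $\beta$, finite differencing does not recover the exact Taylor-coefficient polynomials $p_i = \obsvalue{\beta^{(k_i)}}{t^{(j_i)}}{\mu_i}{C_i}$ that define $P$, only a high-degree truncation $\tilde p_i(\params) = \hideg{j_i}{k_i}{\overline k}(\params)$ of the underlying series (Lemma~\ref{lem:hideg}), equal to $p_i$ plus explicit higher-order terms carrying positive powers of $\beta$. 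On $[-\lmax,\lmax]^N$ those extra terms have sup-norm $O(\beta)$, so the truncated map $\tilde P = (\tilde p_0,\dots,\tilde p_N)$ is an $O(\beta)$-perturbation of $P$; we will fix $\beta$ at a $\poly_P(\delta,\smoothing)$-multiple of $\gamma'\min(\gamma,\gamma',\gamma'',\gamma''')/\calR$ --- independent of $\epsilon$ --- small enough that $\tilde P$ inherits Assumptions~\ref{assume:symmetry}--\ref{assume:rootsep} with, say, half the constants. Accuracy in $\epsilon$ is then bought by taking $\overline k - k = O((\log 1/\epsilon + \log\calR + \maxj + \maxk)/\log(1/\beta))$ and shrinking the finite-difference step $h$, neither of which touches the temperature $1/\beta$.

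First I would run a coarse phase: invoke Lemma~\ref{lem:derivestimate} at the fixed $\beta$ (and with additive-error parameter $\nu$ a $\poly_P(\delta,\smoothing)$-constant) to estimate $P(\params^*)$ to within $\gamma'\min(\gamma,\gamma',\gamma'',\gamma''')\poly_P(\delta,\smoothing)/\calR$. Feeding these into the grid-search stage of {\sc FindRoot} together with the exactly-writable map $P$ (Lemma~\ref{lem:fewpolys}), the argument of Theorem~\ref{thm:solvesystem} --- via Lemma~\ref{lem:square_identifiable} and Theorem~\ref{thm:ruleout}, hence Assumptions~\ref{assume:fiber}--\ref{assume:rootsep} for $P$ --- returns a net point $\hat\params^{(0)}$ obeying the warm-start bound of Theorem~\ref{thm:newton} around some $\params^{**}$ in the $\mathcal{G}$-orbit of $\params^*$. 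Reading the cost off Lemma~\ref{lem:derivestimate} with $j\le\maxj$, $k\le\maxk$ gives the $\epsilon$-independent terms $(\gamma'\min(\gamma,\gamma',\gamma'',\gamma'''))^{-2(\maxk+1)(\maxj+1)}$ queries and $(\gamma'\min(\gamma,\gamma',\gamma'',\gamma'''))^{-4\maxj-2}$ total evolution time, with all powers of $\calR,\poly_P(\delta,\smoothing)$ and the $\exp(\Od(\maxj(\maxj+\maxk)))$ prefactor folded into the leading $(\calR/\poly_P(\delta,\smoothing))$ factor per the bookkeeping convention of Section~\ref{sec:quantitative}.

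Then I would run the refinement phase on the truncated map: apply Lemma~\ref{lem:hideg} at the same $\beta$ with $\overline k$ as above and $h$ a small $\poly_P(\delta,\smoothing)\gamma'\min(\gamma,\gamma',\gamma'',\gamma''')/\calR$-multiple of $\epsilon$, obtaining $\tilde{\bc}$ within a correspondingly small multiple of $\epsilon$ of $\tilde P(\params^*)$. Now $\params^{**}$ is an exact root of $\tilde P(\params)=\tilde P(\params^*)$ (each $\tilde p_i$ is a $\beta$-weighted sum of Taylor coefficients of the $\mathcal{G}$-invariant observable, hence $\mathcal{G}$-invariant), and by the inherited Assumption~\ref{assume:jacprod} and Lemma~\ref{lem:sigmin_lbd}, $\sigma_{\min}(J_{\tilde P})\ge\curvature := \gamma'\poly_P(\delta,\smoothing)/\calR$ at every point of the fiber, so Theorem~\ref{thm:newton} applied to $f = \tilde P - \tilde{\bc}$ (with $\smoothness = \calR N$ by Corollary~\ref{cor:jacbounds}, valid for $\tilde P$ since its coefficients are $O_P(1)$), started from $\hat\params^{(0)}$, converges after $T = \lceil\log_2(6\curvature/\epsilon\sqrt{N+1})\rceil$ projected Newton steps to within $\epsilon$ of $\params^{**}$. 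Lemma~\ref{lem:hideg} contributes the $\epsilon$-dependent terms $(1/\epsilon\gamma')^{2\maxk(\maxj+1)}$ queries and $(1/\epsilon\gamma')^{\maxk(2\maxj+1)}$ evolution time; writing down $\tilde P$ and performing each Newton step costs $\exp(\Od(\maxj+\maxk))O(\log 1/\epsilon) + \poly(N)$, so the $T$ steps add $O(\log\calR + \log(1/\epsilon\gamma'\poly_P(\delta,\smoothing)))^2$ on top of {\sc FindRoot}'s grid-search cost $O(\lmax/\gamma'\min(\gamma,\gamma',\gamma'',\gamma'''))^N\calR/\poly_P(\delta,\smoothing)$. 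Every experiment runs at temperature $1/\beta = \calR/(\gamma'\min(\gamma,\gamma',\gamma'',\gamma''')\poly_P(\delta,\smoothing))$, independent of $\epsilon$. A union bound over the $O(N)$ estimation calls and the probability-$(1-\delta)$ event that $\params^*$ is well-placed (Section~\ref{sec:quantitative}), with $\delta$ rescaled by $O(N)$ --- changing $\calR$ only logarithmically --- yields the stated $1-\delta$ guarantee, with ``up to the $\mathcal{G}$-action'' read into the conclusion as in Theorem~\ref{thm:main_informal}.

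The hard part will be transferring Assumptions~\ref{assume:symmetry}--\ref{assume:rootsep} from $P$ to $\tilde P$ uniformly in the growing truncation degree, losing only a constant factor in $\gamma,\gamma',\gamma'',\gamma'''$, for all $\beta$ below an $\epsilon$-independent threshold. The delicacy is that $\tilde P$ has strictly larger degree than $P$, so the algebraic objects underlying the Assumptions --- the product-formula polynomials $R,\mathfrak{J},\mathfrak{S}_\vphi$ and the discriminant $\mathfrak{D}$ of Section~\ref{sec:quantitative}, and the generic fiber sizes of Definition~\ref{def:finiteness} --- are built from a nearby but different map, and one must check that the maximal-degree Hermite mass controlling Assumptions~\ref{assume:hermite_discriminant}, \ref{assume:jacprod}, \ref{assume:Delta} and~\ref{assume:rootsep} for $\tilde P$ stays carried by the same monomials as for $P$ (the $\beta^0$ contribution) rather than being swamped by the $\beta$-suppressed higher-order part. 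I would treat $\tilde P$ as a one-parameter deformation of $P$ in $\beta$: the generic fiber size of the square subsystem is constant for all but finitely many $\beta$ and, by upper-semicontinuity, at most its $\beta = 0$ value on a Zariski-neighborhood of $0$, hence equal to $\ell$ there; each relevant polynomial of $\tilde F$ converges coefficient-wise to its $F$-counterpart as $\beta\to 0$; so the Carbery--Wright and Hermite-variance inputs of Section~\ref{sec:quantitative} degrade by at most a constant once $\beta$ is below a threshold depending only on $P$ and the retained fraction of the $\gamma$'s. With the Assumptions secured for $\tilde P$, every downstream statement --- Lemmas~\ref{lem:avoidzeros}, \ref{lem:sigmin_lbd}, \ref{lem:dist_to_locus}, \ref{lem:rootsep}, \ref{lem:square_identifiable}, Theorems~\ref{thm:ruleout}, \ref{thm:solvesystem} and~\ref{thm:newton} --- applies verbatim with $\tilde P$ for $P$ and halved constants, completing the proof.
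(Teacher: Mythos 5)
Your two-phase architecture (coarse {\sc FindRoot} on the exact low-degree system $P$, then Newton refinement on the $\beta$-truncated high-degree system at a fixed, $\epsilon$-independent temperature) is exactly the paper's, and your cost accounting and your observation that the truncated polynomials remain $\mathcal{G}$-invariant (so that $\params^{**}$ is an exact root of the refined system) are correct. The divergence --- and the gap --- is in how you justify the Newton phase. You propose to transfer Assumptions~\ref{assume:symmetry}--\ref{assume:rootsep} wholesale from $P$ to the truncated map $\tilde P$ and then rerun the entire machinery of Section~\ref{sec:quantitative} for $\tilde P$. The transfer argument you sketch does not work: generic fiber cardinality is not upper-semicontinuous under the deformation $\beta\to 0$. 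Already in one variable, $x\mapsto x+\beta x^{2}$ has generic fiber size $2$ for every $\beta\neq 0$ but $1$ at $\beta=0$; the extra root escapes to infinity as $\beta\to 0$. Since $\tilde F$ has strictly larger degree than $F$ (and the degree grows with $\log(1/\epsilon)$), its generic fiber can acquire spurious points at large norm for all $\beta\neq 0$, in which case Assumption~\ref{assume:fiber} for $\tilde P$ (fiber equals a single $\mathcal{G}$-orbit) simply fails, and the product-formula polynomials $\mathfrak{J}$, $\mathfrak{D}$, $\mathfrak{S}_\vphi$ for $\tilde F$ are genuinely different objects whose maximal Hermite mass is not controlled by coefficient-wise closeness to their $F$-counterparts.

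Fortunately none of this is needed, which is where the paper's proof is both simpler and correct. After {\sc FindRoot}, the iterate already satisfies the warm-start condition of Theorem~\ref{thm:newton} around a specific point $\params^{**}$ in the orbit, and Theorem~\ref{thm:newton} only requires (i) near-vanishing of $f=\tilde P-\tilde{\bc}$ at $\params^{**}$, (ii) a global smoothness bound, and (iii) a lower bound on $\sigma_{\min}(J_f(\params^{**}))$ \emph{at that single point} --- no statement about the other fibers of $\tilde F$, its discriminant locus, or root separation. The paper obtains (iii) by Weyl's inequality: the gradient of $\hideg{j_i}{k_i}{\overline{k}_i}-P_i$ is the explicitly $\beta$-suppressed sum in Eq.~\eqref{eq:grad_diff}, of norm at most $\sqrt{N}\exp(\Od(j))\cdot\beta$, so $\sigma_{\min}(J_f(\params^{**}))\ge \sigma_{\min}(J_P(\params^{**}))-\sqrt{N}\exp(\Od(j))\cdot\beta$, where the first term is controlled by Lemma~\ref{lem:sigmin_lbd} applied to the \emph{original} $P$; the same comparison of Hessians handles (ii). If you replace your assumption-transfer step with this pointwise perturbation argument, the rest of your proof goes through.
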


\noindent In particular, regarding all parameters except $\epsilon, \delta, \smoothing$ as constants (as will be the case for the examples we consider in Section~\ref{sec:examples}), we can summarize the complexity of the algorithm as follows: the query complexity and total evolution times are $\poly(1/\delta,1/\smoothing,1/\epsilon)$, and the classical post-processing time is $\poly(1/\delta,1/\smoothing,\log 1/\epsilon)$.

The reader may wonder upon seeing Theorem~\ref{thm:main_general} what more there is to prove given we have already established the general result of Theorem~\ref{thm:solvesystem} for approximate identifiability for the kinds of polynomial systems that arise in our applications. The reason is subtle: the Newton iterations in {\sc FindRoot} are ultimately bottlenecked by the error in the ``right-hand side'' of the polynomial system, as quantified by Eq.~\eqref{eq:cclose} and which comes from finite-sample error in estimating the observables and, more importantly, from truncation error when working with finite temperature. Note that in Eq.~\eqref{eq:cclose}, we need that these errors are bounded relative to the target error $\epsilon$ with which one wants to recover the ground truth solution up to symmetries. In contrast, the content of Theorem~\ref{thm:main_general} is precisely that one does not need to take inverse temperature small relative to the target error, but instead just smaller than some absolute constant depending on the Hamiltonian family that $H$ comes from.

The key step in the algorithm in this section, {\sc ProbeLearn}, is to take the solution obtained by {\sc FindRoot} and refine it further using Newton iterations. The key point is that these Newton iterations are performed not with respect to the original polynomial system $P(\bx) \approx \tilde{\bc}$, but instead with respect to a higher-degree polynomial system arising from keeping more terms --- in fact $\mathcal{O}(\log 1/\epsilon)$ many --- from the Taylor expansion of the relevant derivatives of $\obsvalue{\beta}{t}{\mu}{U}$.

\begin{proof}
    By Theorem~\ref{thm:solvesystem} with $\epsilon$ therein taken to be $\epsilon_{\sf crude} = \gamma'\,\poly_P(\delta,\smoothing)/\calR$, {\sc FindRoot} already produces an estimate $\hat{\params}$ which is $\epsilon_{\sf crude}$-close to some point $\params^{**}$ in the $\mathcal{G}$-orbit of $\params^*$, provided that the estimates $\tilde{c}_i$ are $\nu$-accurate for 
    \begin{equation}
        \nu = \frac{\gamma'\,\poly_P(\delta,\smoothing)}{\calR}\min(\gamma,\gamma',\gamma'',\gamma''')
    \end{equation}
    By Lemma~\ref{lem:derivestimate}, this can be done with probability at least $1 - \delta$ using query complexity of $$\exp(\Od(\maxj(\maxj+\maxk)))\cdot \mathcal{O}((\nu\beta^\maxk/2)^{-2\maxj-2} \log N/\delta) = (\calR/\poly_P(\delta,\smoothing)\cdot (\gamma'\min(\gamma,\gamma',\gamma'',\gamma'''))^{-2(\maxk+1)(\maxj+1)},$$
    and total evolution time
    $$\exp(\Od(\maxj(\maxj+\maxk)))\cdot \mathcal{O}((\nu\beta/2)^{-2\maxj-1}  \log N/\delta) = (\calR/\poly_P(\delta,\smoothing))\cdot (\gamma'\min(\gamma,\gamma',\gamma'',\gamma'''))^{-4\maxj-2},$$ provided that $\beta \le \nu \,\Omega_{\mathfrak{d}}(\maxk)^{-\maxk-1} \exp(-\Omega_{\mathfrak{d}}(\maxj)) = \frac{\gamma'\,\poly_P(\delta,\smoothing)}{\calR}\min(\gamma,\gamma',\gamma'',\gamma''')$. Both query complexity and total evolution time scale as $\poly(1/\varepsilon)$.

    Next, we run Newton's method with the following choice of differentiable map $f: \R^N\to\R^{N+1}$. First, define $\overline{k}_i\triangleq k_i + \Theta_{\mathfrak{d}}(\frac{\log(1/\epsilon') + j + k}{\log 1/\beta})$ for $\epsilon' > 0$ to be tuned later. For every $0\le i \le N$, instead of the polynomial $\obsvalue{\beta^{(k_i)}}{t^{(j_i)}}{\mu_i}{C_i}$, define $f_i(\params) \triangleq \hideg{j_i}{k_i}{\overline{k}_i}(\params)$, where $\hideg{j}{k}{\overline{k}}$ is the polynomial defined in Eq.~\eqref{eq:extraterms} but where we have abused notation and added $(\params)$ to indicate that we are regarding it as a formal polynomial in the unknown parameters of the Hamiltonian. Note from Eq.~\eqref{eq:extraterms} that the lowest-order term in $f_i(\params)$ is exactly $\obsvalue{\beta^{(k_i)}}{t^{(j_i)}}{\mu_i}{C_i}$, but we are including roughly $\mathcal{O}(\log 1/\epsilon)$ additional terms to lessen the truncation error.

    We wish to apply Theorem~\ref{thm:newton} to the map $f$. The key step will be to bound the smoothness $\smoothness$ and curvature $\underline{\sigma}$ of $f$, at which point we can take $\epsilon$ therein to be $\Theta(\min(\epsilon,\underline{\sigma}/\sqrt{N})\cdot \underline{\sigma}/\sqrt{N})$ and conclude that if $\epsilon_{\sf crude} \le \mathcal{O}(\underline{\sigma}/\smoothness N)$, then in $\mathcal{O}(\log(\underline{\sigma}/\epsilon))$ Newton steps we end up at a point which is $\epsilon$-close to whichever point $\params^{**}$ in the orbit of $\bx^*$ we initialized at.

    To compute $\underline{\sigma}$, recall from Lemma~\ref{lem:sigmin_lbd} that for the original polynomial map $P = (G,F)$ that we ran {\sc FindRoot} on in Line~\ref{step:findroot} of {\sc ProbeLearn}, $\sigma_{\min}(J_P(\params^{**})) \ge \gamma'\,\poly_P(\delta,\smoothing)/\calR$ with probability at least $1 - \delta$ over the randomness of the smoothed Hamiltonian. Recall from Eq.~\eqref{eq:extraterms} that
    \begin{equation}
        \hideg{j_i}{k_i}{\overline{k}_i} = P_i + \sum_{k_i<k'\le \overline{k}_i} \beta^{k'-k_i}\binom{k'}{k_i} \sum_a c_a p_{a;k'}\,,
    \end{equation}
    where $c_a, p_{a;k'}$ depend on $j_i, \mu_i, C_i$. The gradient of $\hideg{j_i}{k_i}{\overline{k}_i} - P_i$ with respect to $\params$ is thus
    \begin{equation}
        \sum_{k_i<k'\le \overline{k}_i} \beta^{k'-k_i}\binom{k'}{k_i} \sum_a c_a \nabla_{\params}\,p_{a;k'}\,. \label{eq:grad_diff}
    \end{equation}
    Because the $c_a$'s are $\exp(\mathcal{O}(j))$-bounded, there are $\exp(\Od(j))$ many of them, and the partial derivatives of $p_{a;k'}$ are $\exp(\Od(k'))$-bounded, the $L_2$ norm of the above is at most
    \begin{equation}
        \sqrt{N}\exp(\Od(j))\beta^{-k_i} \sum_{k' > k_i} \Od(\beta)^{k'} \le \sqrt{N}\exp(\Od(j))\cdot \beta\,,
    \end{equation}
    provided $\beta \le \beta_{\sf crit}$. If furthermore $\beta \le \gamma'\,\poly_P(\delta,\smoothing)/\calR$, then by Weyl's inequality, we conclude that we can take $\underline{\sigma}$ to be
    \begin{equation}
        \sigma_{\min}(J_f) \ge \gamma'\,\poly_P(\delta,\smoothing)/\calR - \sqrt{N}\exp(\Od(j))\cdot \beta \ge \gamma'\,\poly_P(\delta,\smoothing)/\calR\,.
    \end{equation}

    Next, we compute $\smoothness$. Recall from Corollary~\ref{cor:jacbounds} that the Hessian of $P_j$ for each $j\in[N+1]$ is upper bounded in norm by $\calR$. The Hessian of $f_j$ is at most this plus the operator norm of 
    \begin{equation}
        \sum_{k_i<k'\le \overline{k}_i} \beta^{k'-k_i}\binom{k'}{k_i} \sum_a c_a \nabla^2_{\params}\,p_{a;k'}\,. \label{eq:hess_diff}
    \end{equation}
    and by an argument identical to the above, the operator norm of the latter is dominated by that of the Hessian of $P_j$ for the range of $\beta$ we are considering. Substituting these bounds into Theorem~\ref{thm:newton}, we conclude that because $\epsilon_{\sf crude} \le \gamma'\,\poly_P(\delta,\smoothing)/\calR \le \mathcal{O}(\underline{\sigma}/\smoothness N)$, provided that we were able to estimate each $\hideg{j_i}{k_i}{\overline{k}_i}$ to error $\epsilon\gamma'\,\poly_P(\delta,\smoothing)/\calR$, then in $\mathcal{O}(\log \calR + \log(\frac{1}{\epsilon\gamma'\poly_P(\delta,\smoothing)}))$ Newton steps we converge to an $\epsilon$-accurate solution. Estimating $\hideg{j_i}{k_i}{\overline{k}_i}$ to this level of error requires taking $\epsilon'$ in the definition of $\overline{k}_i$ and $h$ in Lemma~\ref{lem:hideg} to be $\epsilon\gamma'\,\poly_P(\delta,\smoothing)/\calR$, which translates to query complexity $(\calR/\poly_P(\delta,\smoothing))\cdot (1/\epsilon\gamma')^{2\maxk(\maxj+1)}$ and total evolution time $(\calR/\poly_P(\delta,\smoothing))\cdot (1/\epsilon\gamma')^{\maxk(2\maxj+1)}$.

    The classical post-processing time includes the time for {\sc FindRoot} which is dominated by the grid search which scales exponentially in the number of parameters $N$, and the time for setting up the nonlinear system given by the $f_i$'s, and the time for running Newton's method. By the bound in the last part of Lemma~\ref{lem:hideg}, the runtime for writing down the $f_i$'s is of lower order relative to the cost of Newton's method. For the cost of Newton's method, there are two log factors appearing in the final bound: the first log factors corresponds to the number of terms that appear in each polynomial $\hideg{j}{k}{\overline{k}}$ and in particular our bound on $\overline{k} - k$, and the second corresponds to the number of Newton steps.
\end{proof}

\begin{algorithm2e}
\DontPrintSemicolon
\caption{\textsc{ProbeLearn}($S, H, \epsilon$)}
\label{alg:probe}
\KwIn{$S = \{(j_i,k_i,\mu_i,C_i)\}_{0\le i \le N}\subset \mathbb{Z}_{\ge 0}\times\mathbb{Z}_{\ge 0}\times \{\text{1-qubit channels}\}$, probe access to $H$ with unknown parameters $\params^*$, accuracy parameter $\epsilon>0$}
\KwOut{$\hat{\params}$ which is $\epsilon$-close to the $\mathcal{G}$-orbit of $\params^*$}
    $\epsilon_{\sf crude}\gets \gamma'\,\poly_P(\delta,\smoothing)/\calR$\;
    $\epsilon'\gets \epsilon\gamma'\,\poly_P(\delta,\smoothing)/\calR$\;
    \For{$0\le i \le N$}{
        Form estimate $\tilde{c}_i$ for $\obsvalue{\beta^{(k_i)}}{t^{(j_i)}}{\mu_i}{C_i}$ using $(\calR/\poly_P(\delta,\smoothing)\cdot (\gamma'\min(\gamma,\gamma',\gamma'',\gamma'''))^{-2(\maxk+1)(\maxj+1)}$ queries and $(\calR/\poly_P(\delta,\smoothing)\cdot (\gamma'\min(\gamma,\gamma',\gamma'',\gamma'''))^{-4\maxj-2}$ total evolution time.\tcp*{Lemma~\ref{lem:derivestimate}}
    }
    $\hat{\params}^{(0)} \gets ${\sc FindRoot}($P, \tilde{\bc}, \epsilon_{\sf crude}$) \label{step:findroot} \tcp*{Obtain initial estimate}
    $T\gets \mathcal{O}(\log\calR + \log(\frac{1}{\epsilon\gamma'\,\poly_P(\delta,\smoothing)}))$\;
    $\beta\gets \min(\beta_{\sf crit}, \gamma'\,\poly_P(\delta,\smoothing)/\calR)$\;
    \For(\tcp*[f]{Form higher-degree system}){$0\le i \le N$}{
        $\overline{k}_i \gets k_i+ \Theta_{\mathfrak{d}}(\frac{\log(1/\epsilon') + j + k}{\log 1/\beta})$\;
        Form $\epsilon\gamma'\,\poly_P(\delta,\smoothing)/\calR$-accurate estimate $\tsup{c}_i$ for $\hideg{j_i}{k_i}{\overline{k}_i}$ using $(\calR/\poly_P(\delta,\smoothing))\cdot (1/\epsilon\gamma')^{2\maxk(\maxj+1)}$ queries and total evolution time $(\calR/\poly_P(\delta,\smoothing))\cdot (1/\epsilon\gamma')^{\maxk(2\maxj+1)}$. \tcp*{Lemma~\ref{lem:hideg}}
        Let $f_i: \R^N\to\R$ denote the polynomial  $\params \mapsto \hideg{j_i}{k_i}{\overline{k}_i}(\params) - \tsup{c}_i$.\;
    }
    \For(\tcp*[f]{Refine with Newton steps}){$0 \le t < T$}{
        $\hat{\params}^{(t+1)} \gets \proj_{[-\lmax,\lmax]^N} \bigl[\hat{\params}^{(t)} - J_{f}(\hat{\params}^{(t)})^\dagger\cdot f(\hat{\params}^{(t)})\bigr]$\; \label{step:newton2}
    }
    \Return{$\hat{\params}^{(T)}$}
\end{algorithm2e}

\section{Application to Nearest-Neighbor Hamiltonians}
\label{sec:examples}

Here we will consider quantum probe tomography for nearest-neighbor, translation-invariant, isotropic Hamiltonians in $D = 1, 2, 3$ spatial dimensions on a square lattice.  We write $G = (V,E)$ for the graph of the lattice where $V = \mathbb{Z}^D$ and $E$ are the nearest-neighbor edges.  This  class of Hamiltonians can be parameterized as
\begin{align}
H = \sum_{\langle v, v'\rangle \in E}\,\sum_{\mu, \nu =1}^3 J_{\mu \nu}\, \sigma^\mu_v \sigma^\nu_{v'} + \sum_{v \in V} \sum_{\mu = 1}^3 \,h_\mu\sigma^\mu_v\,,
\end{align}
where $\sigma^\mu_v$ is that $\mu$th Pauli operator on site $v$, and so evidently has $12$ parameters
\begin{align}
(h_1, h_2, h_3, J_{11}, J_{12}, J_{13}, J_{21}, J_{22}, J_{23}, J_{31}, J_{32}, J_{33})
\end{align}
for any $D$.  We will truncate the Hamiltonian so that the sites extend to a large but finite volume centered around $\probeindex \in \mathbb{Z}^D$ (possibly with periodic or hard-wall boundary conditions), so that the total Hilbert space dimension is $d$.  Suppose we only have access to probing the central site $\probeindex \in \mathbb{Z}^D$.  As a shorthand, we will sometimes denote the Pauli operators $\sigma^{1}_\probeindex, \sigma^{2}_\probeindex, \sigma^{3}_\probeindex$ on this site by, simply, $X, Y, Z$.  Let us also define the unitary channels
\begin{align}
C_0[\rho] &= \rho\,,\qquad C_1[\rho] = X \rho X\,,\qquad \, C_2[\rho] = Y \rho Y\,, \qquad \, C_3[\rho] = Z \rho Z\,, \\
C_4[\rho] &= \frac{1}{\sqrt{2}}(X + Y) \rho \frac{1}{\sqrt{2}}(X + Y)\,,\qquad C_5[\rho] = \frac{1}{\sqrt{2}}(Y + Z) \rho \frac{1}{\sqrt{2}}(Y + Z)\,, \\
C_6[\rho] &= \frac{1}{\sqrt{2}}(Z + X) \rho \frac{1}{\sqrt{2}}(Z + X)\,, \qquad C_7[\rho] = \frac{1}{\sqrt{2}}(\mathds{1} + \i X) \rho \frac{1}{\sqrt{2}}(\mathds{1} - \i X)\,, \\
C_8[\rho] &= \frac{1}{\sqrt{2}}(\mathds{1} + \i Y) \rho \frac{1}{\sqrt{2}}(\mathds{1} - \i Y)\,, \quad \,\,\,\,\, C_9[\rho] = \frac{1}{\sqrt{2}}(\mathds{1} + \i Z) \rho \frac{1}{\sqrt{2}}(\mathds{1} - \i Z)\,.
\end{align}
For our purposes it will suffice to consider interventions at one, two, and three separate times.  To this end, we define $\rho_\beta = e^{-\beta H}/\tr(e^{- \beta H})$ and the notation
\begin{align}
\obsvalue{\beta}{t}{\mu}{B} \triangleq \tr(\sigma^\mu_{\probeindex} (C_B[\rho_\beta])_H(t))\,,\qquad \mu = 0,1,2,3,\quad B = 0,...,9\,,
\end{align}
which we have borrowed from~\eqref{E:observablesA0}.  We recall that $\obsvalue{\beta^{(k)}}{t^{(j)}}{\mu}{B}$ denoted the coefficient of $t^j\beta^k$ in the Taylor series expansion of $\obsvalue{\beta}{t}{\mu}{C}$.
Given that $\tr(H) = 0$, we have
$\rho_\beta = \frac{\mathds{1}}{d} - \frac{\beta}{d}\, H + \frac{\beta^2}{2d}\left(H^2 - \tr(H^2)\, \frac{\mathds{1}}{d}\right) + \mathcal{O}(\beta^3)$.
We construct the polynomials
\begin{align}
p_1 &= - \obsvalue{\beta^{(1)}}{t^{(0)}}{0}{1} = \frac{1}{d}\,\tr(X H) \\
p_2 &= - \obsvalue{\beta^{(1)}}{t^{(0)}}{0}{2} = \frac{1}{d}\,\tr(Y H) \\
p_3 &= - \obsvalue{\beta^{(1)}}{t^{(0)}}{0}{3} = \frac{1}{d}\,\tr(Z H) \\
p_4 &= \obsvalue{\beta^{(2)}}{t^{(0)}}{0}{1} = \frac{1}{2d}\,\tr(X H^2) \\
p_5 &= \obsvalue{\beta^{(2)}}{t^{(0)}}{0}{2} = \frac{1}{2d}\,\tr(Y H^2) \\
p_6 &= \obsvalue{\beta^{(2)}}{t^{(0)}}{0}{3} = \frac{1}{2d}\,\tr(Z H^2) \\
p_7 &= -\frac{1}{4} \obsvalue{\beta^{(1)}}{t^{(1)}}{1}{2} = \frac{1}{4 i d}\,\tr([H,X] Y H Y) \\
p_8 &= -\frac{1}{4} \obsvalue{\beta^{(1)}}{t^{(1)}}{2}{3} = \frac{1}{4i  d}\,\tr([H,Y] Z H Z)  \\
p_9 &= -\frac{1}{4} \obsvalue{\beta^{(1)}}{t^{(1)}}{3}{1} = \frac{1}{4 i d}\,\tr([H,Z] X H X) \\
p_{10} &= -\frac{1}{4}\big(\obsvalue{\beta^{(1)}}{t^{(1)}}{3}{9} - \obsvalue{\beta^{(1)}}{t^{(1)}}{3}{4}\big) = \frac{1}{4id}\,\tr([H,Z] (C_9 - C_4)[H]) \\
p_{11} &=-\frac{1}{4}\big(\obsvalue{\beta^{(1)}}{t^{(1)}}{2}{7} - \obsvalue{\beta^{(1)}}{t^{(1)}}{2}{5}\big) = \frac{1}{4id}\,\tr([H,X](C_7-C_5)[H]) \\
p_{12} &= -\frac{1}{4}\big(\obsvalue{\beta^{(1)}}{t^{(1)}}{2}{8} - \obsvalue{\beta^{(1)}}{t^{(1)}}{2}{6}\big) = \frac{1}{4id}\,\tr([H,Y](C_8-C_6)[H]) \\
q &= \frac{1}{4} \big(\obsvalue{\beta^{(1)}}{t^{(2)}}{1}{1}  + \obsvalue{\beta^{(1)}}{t^{(2)}}{1}{2}\big)= -\frac{1}{8d}\,\tr\big([H,X]\,[H, XHX + YHY]\big)\,,
\end{align}
which have the explicit form
\begin{align}
p_1 &= h_1 \\
p_2 &= h_2 \\
p_3 &= h_3 \\
p_4 &= D\,h_1 (2J_{11} + J_{12} + J_{13} + J_{21} + J_{31}) \\
p_5 &= D\,h_2 (J_{12} + J_{21} + 2J_{22} + J_{23} + J_{32}) \\
p_6 &= D\,h_3 (J_{13} + J_{23} + J_{31} + J_{32} + 2J_{33}) \\
p_7 &= h_2 h_3 + D\,(J_{12}J_{13} + J_{22}J_{23} + J_{21}J_{31} + J_{22}J_{32} + J_{23}J_{33} + J_{32}J_{33}) \\
p_8 &= h_1 h_3 + D\,(J_{11}J_{13} + J_{21}J_{23} + J_{11}J_{31} + J_{12}J_{32} + J_{13}J_{33} + J_{31}J_{33}) \\
p_9 &= h_1 h_2 + D\,(J_{11}J_{12} + J_{11}J_{21} + J_{12}J_{22} + J_{21}J_{22} + J_{13}J_{23} + J_{31}J_{32}) \\
p_{10} &= h_1^2 + D\,(2 J_{11}^2 + J_{12}^2 + J_{21}^2 +  J_{13}^2 + J_{31}^2) \\
p_{11} &= h_2^2 + D\,(J_{21}^2 + J_{12}^2 + 2 J_{22}^2 +  J_{23}^2 + J_{32}^2) \\
p_{12} &= h_3^2 + D\,(J_{31}^2 + J_{13}^2 + J_{23}^2 + J_{32}^2 +  2 J_{33}^2) \\
q &= h_1 h_3^2 
+ D\Big(3 h_2 \left(-J_{23} J_{31} - J_{13} J_{32} + (J_{12} + J_{21}) J_{33}\right) \nonumber \\
& \qquad \qquad \qquad + h_1 \left(J_{13}^2 + J_{23}^2 + J_{31}^2 + J_{32}^2 + 6 J_{23} J_{32} + 2 J_{33} \left(J_{33} - 3 J_{22}\right)\right) 
\nonumber \\
& \qquad \qquad \qquad + h_3 \left(J_{21} \left(2 J_{23} - 3 J_{32}\right) + J_{12} \left(2 J_{32} - 3 J_{23}\right) + (J_{13} + J_{31}) \left(2 J_{11} + 3 J_{22} + 2 J_{33}\right)\right)\Big).
\end{align}
We note that the first twelve polynomials are at most quadratic; all other at most quadratic polynomials expressible in terms of the $A$'s are linear combinations of these twelve.

Notice that our polynomials are all invariant under $J_{\mu \nu} \to J_{\nu \mu}$\,; this is because (to the order in perturbation theory we are working) the measurements corresponding to the polynomials are invariant under an inversion of the lattice about the point $\probeindex$.

Let us package the thirteen polynomials as $P = (q, p_1,...,p_{12})$.  By plugging in a particular (rational) value of the couplings $\textbf{x}_0$ and exactly  solving the polynomial system $P(\textbf{x}) = P(\textbf{x}_0) =: P(\textbf{c}_0)$ to obtain the inverse image, we find two points, which we verify each have full rank Jacobian.  This verifies that the generic fiber size is two, and therefore the second of the two points in a generic fiber is the inversion of the `true' Hamiltonian, namely with $J_{\mu \nu} \to J_{\nu \mu}$.  As such, we can recover the `true' Hamiltonian up to this inversion symmetry.  Moreover, for the square system $(p_1,...,p_{12})$, we run the same procedure and likewise verify a finite generic fiber size where each fiber generically has a full rank Jacobian.

Taken together, the above checks establish that our system of polynomials satisfies all of our listed assumptions for our quantum probe tomography algorithm.  When instantiating Theorem~\ref{thm:main_general} in this setting, all relevant parameters are constant except $\delta, \smoothing, \epsilon$, and we obtain the guarantee claimed in Theorem~\ref{thm:application_intro}.

\bibliographystyle{alpha}
\bibliography{ref}

\end{document}